\documentclass[11pt,oneside,a4paper]{article}

\usepackage[english]{babel}
\usepackage[a4paper,margin=2.5cm]{geometry}

\usepackage{setspace}
\usepackage{amsmath,amssymb,amsfonts,dsfont,bbm,amsthm}
\usepackage[gen,right]{eurosym}
\usepackage{textcomp,longtable,epsfig,url}
\usepackage{graphicx}
\usepackage[hidelinks,hyperfootnotes=false]{hyperref}
\usepackage[intoc]{nomencl}
\usepackage[printonlyused]{acronym}
\usepackage{verbatim}
\usepackage{tabularx}
\usepackage{url}
\usepackage{comment}
\usepackage[dvipsnames]{xcolor}
\usepackage{accents}
\usepackage{scrhack}
\usepackage{floatrow}
\usepackage{listings}
\usepackage{paralist}
\usepackage{xr}
\usepackage{tikz}
\usetikzlibrary{arrows.meta,positioning, trees}
\newcommand{\Var}{\operatorname{Var}}

\usepackage{booktabs}
\usepackage{flafter}
\usepackage{varwidth}
\usepackage{enumitem}
\usepackage{titling}
\usepackage{subcaption}
\usepackage{multirow}
\usepackage{mathtools}
\usepackage{tensor}
\usepackage{float}
\usepackage{array}

\usepackage[round,authoryear]{natbib}
\allowdisplaybreaks
\numberwithin{equation}{section}

\newcommand{\E}{\mathbb{E}}

\newcommand{\PP}{\mathbb{P}}

\newcommand{\N}{\mathcal{N}}

\newcommand{\lb}{\left(}
\newcommand{\rb}{\right)}
\newcommand{\lsb}{\left[}
\newcommand{\rsb}{\right]}

\newcommand{\bga}{\begin{align*}}
\newcommand{\eda}{\end{align*}}

\newcommand{\labeleq}[1]{\refstepcounter{equation}\tag{\theequation}\label{#1}}

\newtheorem{theorem}{Theorem}[section]
\newtheorem{lemma}[theorem]{Lemma}
\newtheorem{Proposition}[theorem]{Proposition}
\newtheorem{definition}[theorem]{Definition}

\newtheorem{example}[theorem]{Example}

\definecolor{darkgreen}{RGB}{0,120,0}

\title{Ex Ante and Scenario-Conditional Value of Information in Dynamic Portfolio Choice}

\author{Tom Pohlschmidt\thanks{Institute of Insurance Science, Ulm University, Helmholtzstr.\ 20, 89069 Ulm, Germany. Emails: tom.pohlschmidt@uni-ulm.de; shihao.zhu@uni-ulm.de. }\qquad  and  \qquad Shihao Zhu$^\ast$}

\date{\today}

\begin{document}
\maketitle

\begin{abstract}
Portfolio decisions are often made under uncertainty about expected returns, and investors revise their beliefs as new information arrives. Although richer information generally cannot reduce optimized value ex ante, {the trading strategy it induces need not improve on the strategy induced by a less informative information structure in every fixed-drift scenario.} We introduce scenario-conditional expected utility and value of information to evaluate strategies selected under the investor’s current belief. In a finite-horizon Gaussian portfolio choice model, we compare a fixed-belief investor, a price observer, and an attentive investor who also processes an auxiliary signal under an attention capacity constraint. Ex ante, price learning and auxiliary attention are valuable, and the total value of attention increases with capacity. Scenario-conditionally, learning creates a correction benefit by reducing the initial trading-position gap relative to the full-information strategy, but also a fluctuation cost through belief revisions, so the ex ante ranking can reverse. We derive the current-belief distribution of scenario-conditional value and {the associated underperformance probability, defined as the probability of negative scenario-conditional value relative to fixed-belief choice.} Over sufficiently short horizons, {the scenario-conditional value of learning is negative} when belief error is small relative to posterior uncertainty. For a given horizon, {sufficiently high attention capacity yields a higher scenario-conditional value than any fixed finite-capacity benchmark} across all drift scenarios. We also derive sharp convergence rates to full-information performance under Shannon and Tsallis technologies.

\end{abstract}

\noindent \textbf{Keywords:} Portfolio optimization; Drift uncertainty; Bayesian learning; Ex ante value of information;  Scenario-conditional value of information.

\noindent \textbf{JEL classification:} G11; D83; C61.

\noindent \textbf{2020 Mathematics Subject Classification:} 91G10; 93E20; 60G35.

\section{Introduction} \label{intro}

Expected returns are difficult to estimate and are not directly observed when portfolio decisions are made \citep{merton_1980,kardaras2012robust}. An investor therefore begins with a belief about the drift, updates it as prices and other signals arrive, and adjusts her trading strategy dynamically. {This raises the question of how valuable the information underlying these belief revisions is.} If additional information can be ignored, a richer information structure cannot lower the optimized value function. The conventional ex ante value of information {quantifies this gain} before future signals and returns are observed \citep{berger1985decision}. In portfolio choice, it is commonly expressed in monetary or certainty-equivalent units \citep{amendinger2003monetary}. This current-belief criterion {provides a natural benchmark} for evaluating information structures, but it aggregates the induced strategy's performance across the candidate drift scenarios represented by that belief. 

The implemented strategy, however, operates under a fixed drift that is unknown to the investor, while future return shocks remain random. Ex ante dominance therefore need not imply better performance in every plausible fixed-drift scenario. This distinction is consequential when attention capacity is scarce. A portfolio manager may need to know under which drift scenarios a more informative strategy underperforms, how much probability mass the current belief assigns to those scenarios, how much capacity is required to control this downside, and how alternative information-processing technologies compare at a given capacity. Existing portfolio studies of information value focus primarily on the Bayesian objective or realized-return evaluations and provide comparatively little analytical guidance for these scenario-specific resource-allocation questions \citep{amendinger2003monetary,cvintanic2006,defranco2019bayesian}.

To address this gap, we introduce scenario-conditional expected utility for portfolio strategies. We use the resulting certainty equivalents to define scenario-conditional value of information across information structures. For each information structure, we first solve the investor's optimization problem under the current belief. We then hold the induced trading strategy and filtering rule fixed and evaluate expected utility under each candidate fixed-drift law, while continuing to average over future return shocks. The scenario-conditional expected utility evaluates one strategy, whereas the scenario-conditional value of information compares strategies induced by two information structures. Together, they provide a conditional counterpart to the prior-aggregated ex ante criterion without reoptimizing the strategy for each scenario. The closest portfolio precedent is \citet{palczewski_2015}, who compute drift-conditional utilities before averaging them; we retain the conditional comparison and derive analytical sign, distribution, horizon, and capacity results. Unlike an out-of-sample backtest \citep{demiguel2009optimal}, our evaluation keeps future shocks stochastic and varies only the fixed drift in the evaluation law.

We study the scenario-conditional value of information in a finite-horizon continuous-time portfolio choice model with an unknown constant drift, a Gaussian belief, and constant absolute risk aversion (CARA) preferences. At the current date, three investors have the same wealth and belief but use nested information structures. The naive fixed-belief investor does not update after the current date, the price observer learns from stock prices, and the attentive investor also processes an auxiliary signal whose precision is constrained by attention capacity. This nesting separates the value of price learning relative to fixed-belief choice from the incremental value of auxiliary attention relative to price learning. The entropy-based capacity constraint converts a fixed attention capacity into signal precision as posterior uncertainty evolves. The Gaussian--CARA structure yields explicit filters and optimal strategies, while deterministic posterior-variance dynamics permit exact comparisons across horizons, capacities, and information-processing technologies.

Our analysis proceeds in {four} steps. First, we solve the Bayesian portfolio problems and establish the ex ante benchmark. Price observation has strictly positive value relative to fixed-belief choice, auxiliary attention adds value relative to price observation, and the total value of attention increases strictly with capacity. Thus, the conventional information ranking holds under the investor's current belief.

Second, we use scenario-conditional value of information to show why this ex ante ranking can reverse. The scenario-conditional value of information is governed by the {initial trading-position} gap relative to the scenario-specific full-information strategy. Learning creates a correction benefit by reducing the inherited position gap, but subsequent belief revisions generate a fluctuation cost through changes in the trading position. We derive an exact ODE decomposition of these two forces for general capacity and a closed-form performance threshold for price observation. When the initial position gap is zero, there is nothing to correct and the fluctuation cost makes learning strictly worse than fixed-belief choice at every finite capacity. 
The remaining horizon further determines which force dominates. When the squared posterior-mean error is less than one half of posterior variance, price observation underperforms fixed-belief choice over sufficiently short horizons, and auxiliary attention performs worse still at any fixed positive finite capacity. By contrast, at every fixed positive horizon, sufficiently high capacity eventually dominates any given finite-capacity benchmark uniformly across drift scenarios. An exact Bayes aggregation identity links the ex ante value to the scenario-conditional values and shows how negative scenario outcomes remain consistent with positive ex ante value.

Third, the quadratic scenario-value representation yields the full current-belief distribution of scenario-conditional value as an affine transformation of a squared Gaussian variable. Its lower bound is determined by the fluctuation cost, whereas its right tail is unbounded. We obtain an analytical underperformance probability equal to the current-belief mass of scenarios in which attention performs worse than fixed-belief choice. Although capacity is exogenous in our model, this probability allows a portfolio manager to screen feasible capacity levels or technologies against a specified downside tolerance. In the numerical experiments, the underperformance probability declines sharply with capacity and is highest when the investor's current expected excess return is near zero, where the position-gap distribution is centered closest to the underperformance region.

Finally, we compare two information-processing technologies: Shannon entropy and the Tsallis family. These technologies determine how a fixed capacity is converted into signal precision as posterior uncertainty changes and therefore separate the amount of capacity from how it is deployed. We derive sharp convergence rates to the full-information benchmark as attention capacity becomes large. These rates describe the largest remaining fraction, across drift scenarios, of the performance gap associated with a fixed finite-capacity strategy. The rate is $\kappa^{-1}$ under Shannon entropy and Tsallis entropy with $1<\rho<3$, $\ln(1+\kappa)/\kappa$ at $\rho=3$, and $\kappa^{-2/(\rho-1)}$ for $\rho>3$. The transition at $\rho=3$ shows that alternative information-processing technologies resolve uncertainty at fundamentally different speeds. Consistent with these rates, the numerical experiments show that the capacities required to eliminate a common fraction of the worst-case shortfall can differ by orders of magnitude across technologies.

Taken together, scenario-conditional evaluation complements the prior-aggregated ex ante value of information with a map of portfolio performance across fixed-drift scenarios. The analysis explains why underperformance arises, identifies where and how likely it is, quantifies how horizon and capacity alter it, and compares how alternative technologies use a given attention capacity. {For portfolio management, an information source should therefore be evaluated not only by its ex ante value, but also by the scenario-conditional distribution of the performance generated by the trading strategy it induces.} The resulting diagnostics support capacity budgeting and technology comparison while complementing the conventional ex ante welfare criterion.

\subsection{Related literature}
Our paper connects three strands of literature. The first studies dynamic portfolio choice under incomplete information. \citet{gernotte_1986} establishes the separation between filtering and portfolio choice and analyzes how expected-return estimation error affects optimal allocation. \citet{karatzas2001bayesian} formulate Bayesian utility maximization when the constant drift is unobserved and investors learn from prices, while \citet{rieder2005portfolio} consider portfolio choice with an unobserved Markov-modulated drift. Subsequent work extends dynamic portfolio choice with learning and parameter uncertainty in several directions \citep{brandt_2005,baeuerle2012relaxed,baeuerle2019retirement,balter2021time,bauerle2024optimal}. \citet{lundtofte2006effect} is especially relevant because he compares a fully informed investor, an outsider who learns from returns, and an insider who additionally observes a private signal. Our information hierarchy is similar, but our question is different. We constrain the auxiliary signal through information-processing capacity and evaluate the resulting trading strategies across fixed-drift scenarios. Therefore, \citet{lundtofte2006effect} studies how exogenous signal quality affects optimal portfolio choice, whereas we characterize when learning raises or lowers the scenario-conditional value of information relative to a less informative benchmark.

A second strand studies scarce attention capacity. Following \citet{sims2003implications}, research examines how limited capacity shapes attention allocation, portfolio choice, asset prices, and market outcomes \citep{peng_2005,luo_2017,modeling_precision,miao2023asset,han2023strategic,han_2024}. The closest dynamic portfolio model is \citet{yin_2025}. They study a finite-horizon investor who learns from prices and a capacity-constrained news signal. They consider Shannon capacity and focus on portfolio allocation and the convergence of beliefs and risky-asset shares. Our analysis separates fixed-belief choice, price learning, and auxiliary attention, and evaluates the resulting trading strategies both under the investor's belief and across fixed-drift scenarios. This allows us to distinguish the ex ante value of information from its scenario-conditional value and to characterize when additional information is beneficial or detrimental across drift scenarios. For both Shannon and Tsallis entropy specifications, we also derive sharp scenario-uniform high-capacity rates for the worst-case normalized shortfall from perfect learning.

A third strand concerns the value of information and the evaluation of trading strategies. \citet{amendinger2003monetary} develop a utility-indifference monetary value for initial information. Related monetary valuations include reservation prices for initial and dynamically revealed information about a stochastic market price of risk (\citet{colanari_2021}) and the value of discrete expert opinions about a hidden Gaussian drift (\citet{gabih_2024}). 
Turning to strategy performance, \citet{cvintanic2006} derive an optimal dynamic trading strategy under parameter uncertainty and evaluate it both at estimated parameter values and along realized return histories. \citet{defranco2019bayesian} compare price-based Bayesian learning with a fixed-estimate Markowitz trading strategy using portfolio Sharpe ratios calculated within their Bayesian model.


The closest precedent for our scenario-conditional evaluation is \citet{palczewski_2015}. In a portfolio model with transaction costs, they compute utility conditional on each drift value for informed, learning, fixed-estimate, and biased investors, and then average those conditional utilities over a Gaussian drift distribution to obtain numerical information equivalents. We retain the conditional comparison before this averaging step and derive analytical sign conditions. This allows us to identify the drift scenarios in which the ex ante information ranking reverses and to relate {these} reversals to the gap between implemented and full-information trading strategies. Other approaches assess the optimality of a strategy or its out-of-sample performance. \citet{haugh2006evaluating} use a dual upper bound on the unknown optimal expected utility to assess the loss from a feasible trading strategy under a specified model, while \citet{demiguel2009optimal} estimate competing mean–variance trading strategies from historical returns and evaluate the resulting allocations on subsequent out-of-sample returns. Our scenario-conditional evaluation keeps the trading strategy and its filtering rule fixed and varies only the drift in the evaluation law.


The remainder of the paper is organized as follows. Section \ref{model} introduces the financial market, information-processing technology, and investor types. Section \ref{chap3} derives their optimal trading strategies and value functions. Section \ref{chap4} develops the ex ante benchmark and the scenario-conditional value of information. 
Section \ref{numerics} provides numerical experiments and decision implications. Section \ref{conclusion} concludes. Detailed proofs and auxiliary derivations are provided in the Appendix.

\section{The Model} \label{model}

Fix a finite horizon $T>0$ and an arbitrary current date $t\in[0,T]$. On a complete probability space $(\Omega,\mathcal{F},\mathbb{P})$, let $W^1$ and $W^2$ be independent standard Brownian motions. The financial market consists of a risk-free asset with rate $r>0$ and a stock with price process
$(S_s)_{s\in[t,T]}$ satisfying
\begin{align*}
\frac{dS_s}{S_s}=\mu ds+\sigma dW_s^1,\qquad S_t=s_t>0. \labeleq{s_base}
\end{align*}
Here $\sigma>0$ is known and $\mu\in\mathbb{R}$ is an unknown expected return.\footnote{As noted by \citet[Footnote 2]{kardaras2012robust}, volatility can be estimated relatively accurately from high-frequency data, whereas drift estimation remains difficult.} The investors considered below trade continuously in this
market.

Let $\mathcal{G}_t$ denote the information common to all investors at the
current date, including the current stock price. We condition on
$\mathcal{G}_t$ throughout and assume
\begin{align*}
\mu\mid\mathcal{G}_t\sim\N\lb m,v\rb,\qquad m\in\mathbb{R},\quad v>0. \labeleq{prior_belief}
\end{align*}
with $\mu$ conditionally independent of the future increments of $W^1$ and $W^2$. Thus, $(m,v)$ summarizes the common belief at time $t$, while only information arriving over $[t,T]$ is modeled explicitly.


Information about the unknown drift can arrive from two sources. The first is the stock-price process, and the second is an auxiliary signal to which we apply an attention capacity constraint. Whenever it is active, the auxiliary signal $(\xi_s)_{s\in[t,T]}$ satisfies
\begin{align*}
d\xi_s=\mu ds+\frac{1}{\sqrt{p^\xi_s}}dW^2_s, \labeleq{signal_with_precision}
\end{align*}
where $p^\xi_s>0$ is its instantaneous precision. We interpret $p^\xi_s=0$ as an inactive auxiliary signal. Independent measurement noise is also used for the news signal in \citet{yin_2025} and the continuous expert signal in \citet{gabih_2024}. This is a modeling benchmark for isolating the contribution of incremental information, such as analyst forecasts or independent research.

We treat price observations as a costless baseline, so the attention capacity constraint applies only to the auxiliary signal carrying information beyond stock prices.\footnote{This is a modeling normalization, not a claim that monitoring prices is literally costless. It isolates the attention capacity allocated to information beyond the traded asset's price history. Endogenizing the allocation of a common attention capacity between price observations and the auxiliary signal is left for future research.} Hence, learning from prices remains available when the auxiliary signal is inactive. We formulate the capacity constraint in terms of the posterior entropy reduction generated by the auxiliary signal. To include both Shannon and Tsallis differential entropies, we assume that $\mathcal{H}$ is translation invariant, meaning that $\mathcal{H}(\varphi(\cdot-c))=\mathcal{H}(\varphi{(\cdot)})$ for every density $\varphi$ and every $c\in\mathbb{R}$.\footnote{Shannon differential entropy is translation invariant \citep[Chapter~8]{cover_thomas_2006}. Tsallis differential entropy has the same property; see \citet{nielsen_nock_2012} for the Gaussian formula used below.} Because changing the mean of a scalar Gaussian law only translates its density, this assumption implies that its entropy depends on its variance alone. We therefore write $\mathcal{H}(\N(m,v))=\mathcal{H}(v)$.

To isolate the information supplied by the auxiliary signal, fix the current posterior variance $v_s$ and hold $p_s^\xi$ constant over $[s,s+h]$. Let $v_{s+h}$ and $v^P_{s+h}$ denote the posterior variances obtained from prices and the auxiliary signal, and from prices alone, respectively. Standard Gaussian filtering yields the instantaneous incremental entropy-reduction rate
\begin{align*}
\mathcal{I}_s^\xi
:=\lim_{h\downarrow0}\frac{\mathcal{H}(v^P_{s+h})-\mathcal{H}(v_{s+h})}{h}
=p_s^\xi v_s^2\mathcal{H}'(v_s). \labeleq{instant_information_rate}
\end{align*}
Appendix \ref{derivation_information_rate} provides the derivation of \eqref{instant_information_rate}. We impose the following conditions on $\mathcal{H}$:
\begin{align*}
v\mapsto v^2\mathcal{H}'(v)\in C^1(0,\infty),\qquad
\mathcal{H}'(v)>0,\qquad
\limsup_{u\downarrow0}\frac{1}{u\mathcal{H}'(u)}<\infty. \labeleq{entropy_regular}
\end{align*}
The positivity condition ensures that positive auxiliary precision reduces posterior entropy. The regularity and boundary conditions guarantee that the posterior-variance ODE in \eqref{filter_ai_compact} has a unique strictly positive solution over every finite horizon. 
For notational convenience, we define
\begin{align*}
f(v):=v^2\mathcal{H}'(v),\qquad v>0. \labeleq{entropy_conversion}
\end{align*}
Then $\mathcal{I}_s^\xi=p_s^\xi f(v_s)$. Following \citet{sims2003implications}, we
impose the information-processing capacity constraint
\begin{align*}
0\leq\mathcal{I}_s^\xi\leq\kappa
\quad\Longleftrightarrow\quad
0\leq p_s^\xi\leq\frac{\kappa}{f(v_s)},\qquad \kappa\geq0. \labeleq{constraint_precision}
\end{align*}
We take $\kappa$ to be exogenous and, in line with \citet{luo_2017} and \citet{han2023strategic}, impose full capacity. Hence $p_s^\xi=p^\xi(v_s)$, where
\begin{align*}
p^\xi(v):=\frac{\kappa}{f(v)}. \labeleq{full_capacity_precision}
\end{align*}

The entropy specification represents a reduced-form information-processing technology: it determines how a fixed capacity is converted into signal precision as posterior uncertainty changes. In this paper, we compare the Shannon and Tsallis specifications to isolate how this capacity--precision technology affects learning. 
\begin{example}
For Shannon entropy,
\begin{align*}
\mathcal{H}^S(v)=\frac{1}{2}\ln(2\pi ev),\qquad
f^S(v)=\frac{v}{2},\qquad
p^\xi(v)=\frac{2\kappa}{v}. \labeleq{shannon_entropy_specification}
\end{align*}
Thus, in the Shannon benchmark, auxiliary precision is inversely proportional to $v$.

For the Tsallis entropy with $\rho>1$,\footnote{Generalized models of rational inattention likewise use alternative entropy-based costs, including Tsallis entropy, to capture different information-processing frictions \citep{caplin2022rationally}.}
\begin{align*}
\mathcal{H}^{\rho}(v)&=\frac{1}{\rho-1}\lsb1-\rho^{-1/2}(2\pi v)^{(1-\rho)/2}\rsb,\\
f^{\rho}(v)&=\frac{1}{2}\rho^{-1/2}(2\pi)^{(1-\rho)/2}v^{(3-\rho)/2},\\
p^\xi(v)&=2\kappa\rho^{1/2}(2\pi)^{(\rho-1)/2}v^{(\rho-3)/2}. \labeleq{tsallis_entropy_specification}
\end{align*}
The Shannon specification is recovered as $\rho\to1$. Economically, $\rho$ determines how a fixed attention capacity translates into the precision of the auxiliary signal as posterior uncertainty changes: under full capacity, $p^\xi(v)$ increases as $v$ decreases for $1<\rho<3$, is independent of $v$ for $\rho=3$, and increases with $v$ for $\rho>3$. 
Both specifications satisfy \eqref{entropy_regular}.
\end{example}

\subsection{Investors and Belief Dynamics}

We consider three investor types: a naive fixed-belief investor ($NI$), a price observer ($PO$), and an attentive investor ($AI$). The naive investor does not update her belief after the current date, the price observer learns from stock prices, and the attentive investor additionally processes the auxiliary signal. At the current date $t$, all investors have wealth $x$ and conditional belief $\N(m,v)$. They differ only in the information processed after $t$. For $s\in[t,T]$, define
\begin{align*}
\mathcal{F}_s^{AI}&:=\mathcal{G}_t\vee
\sigma\lb S_u/S_t,\xi_u-\xi_t:t\leq u\leq s\rb,\qquad \kappa>0,\\
\mathcal{F}_s^{PO}&:=\mathcal{G}_t\vee
\sigma\lb S_u/S_t:t\leq u\leq s\rb,\\
\mathcal{F}_s^{NI}&:=\mathcal{G}_t. \labeleq{investor_filtrations}
\end{align*}
and write $\mathbb{F}^I=(\mathcal{F}_s^I)_{s\in[t,T]}$. At $\kappa=0$, we set $\mathbb{F}^{AI}=\mathbb{F}^{PO}$. For $I\in\{AI,PO,NI\}$, let
\begin{align*}
m_s^I:=\E\lsb\mu\mid\mathcal{F}_s^I\rsb,\qquad
v_s^I:=\Var\lb\mu\mid\mathcal{F}_s^I\rb. \labeleq{posterior_moments}
\end{align*}
For $AI$ and $PO$, linear-Gaussian filtering implies $\mu\mid\mathcal{F}_s^I\sim\N(m_s^I,v_s^I)$ (see, e.g., \cite{liptser2001statistics} and \cite{lundtofte2006effect}). We next introduce the belief dynamics of the three investors.

\paragraph{Attentive investor (AI).}
For $\kappa>0$, this investor observes prices and processes the auxiliary signal at full capacity as specified in \eqref{full_capacity_precision}. Her filter is
\begin{align*}
dm_s^{AI}
&=\frac{v_s^{AI}}{\sigma}d\widehat{W}_s^{1,AI}
+v_s^{AI}\sqrt{\frac{\kappa}{f(v_s^{AI})}}d\widehat{W}_s^{2,AI},
\qquad m_t^{AI}=m,\\
dv_s^{AI}&=-\lb\frac{1}{\sigma^2}+\frac{\kappa}{f(v_s^{AI})}\rb
\lb v_s^{AI}\rb^2ds,\qquad v_t^{AI}=v. \labeleq{filter_ai_compact}
\end{align*}
The variance path is deterministic and strictly positive. Define the \textit{innovation processes} by
\begin{align*}
d\widehat{W}_s^{1,AI}
=\frac{1}{\sigma}\lb\frac{dS_s}{S_s}-m_s^{AI}ds\rb,\qquad
d\widehat{W}_s^{2,AI}
=\sqrt{\frac{\kappa}{f(v_s^{AI})}}\lb d\xi_s-m_s^{AI}ds\rb
\labeleq{innovation_bm_ai}
\end{align*}
Here, $(\widehat{W}^{1,AI},\widehat{W}^{2,AI})$ is a two-dimensional $\mathbb{F}^{AI}$-Brownian motion under $\PP$. {Using \eqref{innovation_bm_ai}, we can rewrite \eqref{s_base} as}
 \begin{align*}
\frac{dS_s}{S_s}=m_s^{AI}ds+\sigma d\widehat{W}_s^{1,AI}.
\end{align*}

\paragraph{Price observer (PO).}
This investor learns from prices but does not process the auxiliary signal. Her filter is
\begin{align*}
dm_s^{PO}&=\frac{v_s^{PO}}{\sigma}d\widehat{W}_s^{1,PO},
\qquad m_t^{PO}=m,\\
dv_s^{PO}&=-\frac{\lb v_s^{PO}\rb^2}{\sigma^2}ds,
\qquad
v_s^{PO}=\frac{\sigma^2v}{\sigma^2+(s-t)v}. \labeleq{v_po_explicit}
\end{align*}
Here
\begin{align*}
d\widehat{W}_s^{1,PO}=\frac{1}{\sigma}\lb\frac{dS_s}{S_s}-m_s^{PO}ds\rb,
\end{align*}
which is an $\mathbb{F}^{PO}$-Brownian motion under $\PP$. The case $\kappa=0$
of \eqref{filter_ai_compact} coincides with this filter.

\paragraph{Naive investor (NI).}
This investor ignores all information arriving after $t$, i.e.,
$m_s^{NI}\equiv m, v_s^{NI}\equiv v, s\in[t,T].$
Since the naive investor does not update her belief after time $t$, no innovation process is introduced, and the stock-price dynamics under her belief coincide with \eqref{s_base}. 


Let $\pi=(\pi_s)_{s\in[t,T]}$ denote the dollar amount invested in the stock, with remaining wealth invested in the risk-free asset. For $I\in\{AI,PO\}$, the corresponding wealth dynamics satisfy
\begin{align*}
dX_s^{\pi,I}=\lb rX_s^{\pi,I}+\pi_s\lb m_s^I-r\rb\rb ds+\pi_s\sigma d\widehat{W}_s^{1,I}, \qquad X_t^{\pi,I}=x. \labeleq{wealth_learning_investor}
\end{align*}
For the naive investor,
\begin{align*}
dX_s^{\pi,NI}=\lb rX_s^{\pi,NI}+\pi_s\lb\mu-r\rb\rb ds+\pi_s\sigma dW_s^1, \qquad X_t^{\pi,NI}=x. \labeleq{wealth_naive_investor}
\end{align*}
where $\mu\sim\N\lb m,v\rb$.

All investors have constant absolute risk aversion (CARA) preferences,
\begin{align*}
U(x)=-\frac{1}{\gamma}\exp\lb-\gamma x\rb,\qquad \gamma>0. \labeleq{cara_utility}
\end{align*}
For convenience, define $\mathbb{R}_+:=(0,\infty)$ and $\mathcal{O}:=\mathbb{R}\times\mathbb{R}\times\mathbb{R}_+$.
The state $(t,x,m,v)$ records the current date, wealth, posterior mean, and posterior variance.

\begin{definition}\label{def_strategies}
Let $I\in\{AI,PO,NI\}$ and $(t,x,m,v)\in[0,T]\times\mathcal{O}$ be given. A trading strategy $\pi$ is admissible for investor $I$, and we write $\pi\in\mathcal{A}^I(t,x,m,v)$, if $\pi$ is $\mathbb{F}^I$-adapted and satisfies
\begin{align*}
\E^I_{t,x,m,v}\lsb\int_t^T\pi_s^2ds\rsb<\infty. \labeleq{admissible_square_integrability}
\end{align*}
For $I\in\{AI,PO\}$, we additionally require the following family of random variables to be uniformly integrable:
\begin{align*}
\left\{\exp\lb-\gamma e^{r(T-\tau)}X_\tau^{\pi,I}\rb:
\tau\text{ is an }\mathbb{F}^I\text{-stopping time in }[t,T]\right\}
\labeleq{admissible_class_D}
\end{align*}
\end{definition}
Condition \eqref{admissible_class_D} is the standard class-$(D)$ requirement in exponential-utility portfolio problems (see also \citet[Definition~3.1]{cheridito2011optimal} for a similar requirement). 
Investor $I$ chooses an admissible trading strategy to maximize her expected terminal utility:
\begin{align*}
V^I(t,x,m,v):=\sup_{\pi\in\mathcal{A}^I(t,x,m,v)}\E^I_{t,x,m,v}\lsb-\frac{1}{\gamma}\exp\lb-\gamma X_T^{\pi,I}\rb\rsb. \labeleq{portfolio_problem}
\end{align*}
All expectations in the optimization problems are taken under the probability law $\mathbb{P}$. The superscript $I$ records the investor's information structure and the corresponding wealth dynamics. 

\section{Optimal Trading Strategies}\label{chap3} 

 This section derives the optimal trading strategies and value functions for the three investors. We first solve the attentive investor's problem. The price observer then follows as the special case $\kappa=0$, while the naive investor is treated separately because she does not update her belief. Finally, we characterize the limiting case in which attention capacity becomes arbitrarily large.

\subsection{The Attentive Investor}

We now give the attentive investor's optimal trading strategy and the corresponding value function.

\begin{Proposition}\label{proposition_31}
 Let $(t,x,m,v)\in[0,T)\times\mathcal{O}$ and $\kappa\in[0,\infty)$ be fixed. Then the attentive investor's optimal trading strategy is given by
\begin{align*}
\pi_s^{\ast,AI}=e^{-r(T-s)}\frac{m_s^{AI}-r}{\gamma\sigma^2}\frac{\sigma^2}{\sigma^2+(T-s)v_s^{AI}},\qquad s\in[t,T]. \labeleq{optimal_strategy}
\end{align*}
Moreover, the corresponding value function is given by
\begin{align*}
V^{AI}(t,x,m,v;\kappa)=-\frac{1}{\gamma}\exp\lb-\gamma e^{r\lb T-t \rb}x-a(t,v)(m-r)^2-b^{AI}(t,v;\kappa)\rb, \labeleq{value_function}
\end{align*}
where
\begin{align*}
a(t,v)&=\frac{T-t}{2\lb \sigma^2+(T-t)v\rb}, \labeleq{a_explicit}\\
b^{AI}(t,v;\kappa)&=\int_t^T
\left(\frac{(v_s^{AI})^2}{\sigma^2}
+\frac{\kappa(v_s^{AI})^2}{f(v_s^{AI})}\right)
a(s,v_s^{AI})ds. \labeleq{b_integral}
\end{align*}
\end{Proposition}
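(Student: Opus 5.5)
We solve the problem by a verification argument for the HJB equation associated with the state $(s,X_s,m_s,v_s)$. Under $\mathbb{F}^{AI}$ the posterior variance $v_s:=v_s^{AI}$ is deterministic. The wealth and posterior-mean processes are driven by the innovation Brownian motions in \eqref{wealth_learning_investor} and \eqref{filter_ai_compact}. The quadratic covariation between them is $d\langle X,m\rangle_s=\pi_s v_s\,ds$, and $d\langle m\rangle_s=\big(v_s^2/\sigma^2+\kappa v_s^2/f(v_s)\big)ds=:-\dot v_s\,ds$.

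Since $v$ evolves deterministically, we may regard it as a function of time. The HJB equation for $V(s,x,m)$ then reads
\[
V_s+\sup_{\pi}\Big\{(rx+\pi(m-r))V_x+\tfrac12\pi^2\sigma^2V_{xx}+\pi v_sV_{xm}\Big\}+\tfrac12(-\dot v_s)V_{mm}=0 .
\]
We try the ansatz $V=-\frac1\gamma\exp\big(-\gamma e^{r(T-s)}x-A(s)(m-r)^2-B(s)\big)$. The first-order condition gives
\[
\pi^\ast=-\frac{(m-r)V_x+v_sV_{xm}}{\sigma^2V_{xx}}=e^{-r(T-s)}\frac{(m-r)\big(1-2A(s)v_s\big)}{\gamma\sigma^2}.
\]
Substituting back and collecting the coefficients of $(m-r)^2$ and of $1$ yields two equations. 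The first is a Riccati ODE for $A$ with $A(T)=0$, namely $\dot A+\frac{(1-2Av_s)^2}{2\sigma^2}+2A^2\dot v_s=0$. The second is a linear equation $\dot B=\dot v_s A$, i.e.\ $\dot B=-(v_s^2/\sigma^2+\kappa v_s^2/f(v_s))A$, with $B(T)=0$.

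We then check that $A(s)=a(s,v_s)$ from \eqref{a_explicit} solves the Riccati equation. Set $\tau=T-s$, so that $A=\tau/(2(\sigma^2+\tau v_s))$. Differentiating gives
\[
\dot A=\frac{-\sigma^2-\tau^2\dot v_s}{2(\sigma^2+\tau v_s)^2}
\quad\text{and}\quad
1-2Av_s=\frac{\sigma^2}{\sigma^2+\tau v_s}.
\]
With these, the equation reduces to an identity for an arbitrary deterministic $\dot v_s$. This is the key observation: the $\kappa$-dependence cancels, so $a$ does not depend on $\kappa$. The formula $1-2Av_s=\sigma^2/(\sigma^2+(T-s)v_s)$ then produces exactly \eqref{optimal_strategy}. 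Integrating the $B$-equation gives \eqref{b_integral}.

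The main obstacle is verification, i.e.\ proving optimality over all of $\mathcal{A}^{AI}$ rather than only solving the HJB equation. We plan to proceed as follows.

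First, for any admissible $\pi$, apply Itô's formula to $V(s,X_s,m_s)$. The HJB inequality shows that $V(s,X_s,m_s)$ is a local supermartingale, and it is negative. To pass from a local supermartingale to a true supermartingale, and hence to $\E[U(X_T)]\le V(t,x,m)$, we use a localizing sequence $\tau_n$ together with the class-(D) condition \eqref{admissible_class_D}. Note that $V(\tau,X_\tau,m_\tau)$ is bounded in absolute value by $\frac1\gamma\exp(-\gamma e^{r(T-\tau)}X_\tau)e^{-B(\tau)}$, because $A\ge0$ and $B$ is bounded on $[t,T]$. Therefore the family $\{V(\tau_n,\cdot)\}$ is uniformly integrable, and we can let $n\to\infty$.

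Second, for $\pi^\ast$ itself, the process $V(s,X^\ast_s,m_s)$ is a stochastic exponential of the form $V(t)\,\mathcal{E}\big(\int\theta_u\,d\widehat W_u\big)$, where $\theta$ is linear in $m_u$. Since $m_u$ is Gaussian with deterministic, bounded variance, a Novikov-type argument on small subintervals (Beneš' criterion for linear growth) shows that it is a true martingale. It remains to check admissibility of $\pi^\ast$. Square integrability is immediate from Gaussianity. Class (D) follows by dominating the family with a conditional expectation of $\exp(-\gamma X_T^\ast)$, obtained from the martingale property and Gaussian moment bounds.

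Finally, the case $\kappa=0$ is covered by the same computation, since it gives $\dot v_s=-v_s^2/\sigma^2$ and thus the price observer's filter.
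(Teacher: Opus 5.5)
Your HJB equation, the Riccati and linear ODEs for $A$ and $B$, the check that $A(s)=a(s,v_s)$ solves the Riccati equation for any deterministic $\dot v_s$, and the localization-plus-class-(D) upper bound for an arbitrary admissible $\pi$ are all correct. They match the paper, which only differs in keeping $y=v$ as a state variable and working along characteristics.

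\textbf{The gap: class (D) for $\pi^\ast$.} You need $\pi^\ast\in\mathcal{A}^{AI}$ to conclude that $\pi^\ast$ is optimal and that $V^{AI}=W$, not merely $V^{AI}\le W$. Your mechanism for this step runs the wrong way. You propose to dominate $\exp(-\gamma e^{r(T-\tau)}X^\ast_\tau)$ by a conditional expectation of $\exp(-\gamma X^\ast_T)$ via the martingale property. But that property gives
\[
\exp\bigl(-\gamma e^{r(T-\tau)}X^\ast_\tau\bigr)=e^{A(\tau)(m_\tau-r)^2+B(\tau)}\,\E\bigl[\exp(-\gamma X^\ast_T)\mid\mathcal{F}^{AI}_\tau\bigr],
\]
and the prefactor is $\ge 1$. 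So the conditional expectation is a \emph{lower} bound, not an upper bound. Equivalently, set $\Xi^\ast_s:=\exp(-\gamma e^{r(T-s)}X^\ast_s)$ and $q_u:=1-2A(u)v_u=\sigma^2/(\sigma^2+(T-u)v_u)$. Then $\Xi^\ast_s$ has drift $-\Xi^\ast_s q_s(1-q_s/2)(m_s-r)^2/\sigma^2\le 0$, so it is a local supermartingale, not a submartingale.

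Nor can you handle the prefactor separately with Gaussian moment bounds. Since $m_s\sim\N(m,v-v_s)$, one has $\E[e^{a(s,v_s)(m_s-r)^2}]=\infty$ whenever $(T-s)(v-v_s)\ge\sigma^2+(T-s)v_s$. This already happens for the price observer in the paper's baseline calibration, e.g.\ at $s-t=2.5$.

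\textbf{The missing idea (the paper's argument).} Square $\Xi^\ast$ and complete the exponential:
\[
(\Xi^\ast_s)^2=(\Xi^\ast_t)^2\,\mathcal{E}\Bigl(-\tfrac{2}{\sigma}\int_t^\cdot q_u(m_u-r)\,d\widehat W^{1,AI}_u\Bigr)_s\exp\Bigl(-\tfrac{2}{\sigma^2}\int_t^s q_u(1-q_u)(m_u-r)^2du\Bigr).
\]
The last factor is at most one because $q_u\in(0,1]$, i.e.\ because of the hedging attenuation. The positive local martingale is a supermartingale, so optional sampling gives $\sup_\tau\E[(\Xi^\ast_\tau)^2]\le(\Xi^\ast_t)^2$. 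This $L^2$-boundedness yields class (D).

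As a by-product, $V(s,X^\ast_s,m_s)$ is then a local martingale dominated by $\Xi^\ast_s/\gamma$, hence of class (D), hence a true martingale. Your separate Bene\v{s} argument, which is valid, therefore becomes unnecessary.
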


\begin{proof}
The proof is given in Appendix \ref{proof_proposition_31}.
\end{proof}

The attentive investor's optimal trading strategy in \eqref{optimal_strategy} can be decomposed 
into a myopic demand and a learning-hedging component:
\begin{align*}
\pi_s^{\ast,AI}=\pi_s^{\mathrm{myop}}+\pi_s^{\mathrm{hedge}},
\end{align*}
where
\begin{align*}
\pi_s^{\mathrm{myop}}:=e^{-r(T-s)}\frac{m_s^{AI}-r}{\gamma\sigma^2},\qquad
\pi_s^{\mathrm{hedge}}:=-\frac{(T-s)v_s^{AI}}{\sigma^2+(T-s)v_s^{AI}}\pi_s^{\mathrm{myop}}. \labeleq{hedge_adjustment}
\end{align*}
The hedging component offsets part of the myopic demand and therefore reduces the magnitude of the investor's risky-asset trading strategy. This relative reduction is larger when posterior uncertainty $v_s^{AI}$ or the remaining horizon $T-s$ is larger. It disappears when the drift is known, $v_s^{AI}=0$, or at maturity, $s=T$. This pattern is standard in dynamic portfolio choice under incomplete information. See, for example, \citet{cvintanic2006}. 

The value function in \eqref{value_function} separates the current-belief term $a(t,v)(m-r)^2$ from the value $b^{AI}(t,v;\kappa)$ generated by subsequent learning. Since $a(t,v)$ decreases with $v$, greater posterior uncertainty attenuates the contribution of the current posterior risk premium $m-r$. Attention capacity affects only the subsequent variance path and hence operates through $b^{AI}$. The next proposition establishes this dependence.

\begin{Proposition}\label{propo_32}
Let $(t,x,m,v)\in[0,T)\times\mathcal{O}$ be fixed. Then
$
\kappa\mapsto b^{AI}(t,v;\kappa)
$
is strictly increasing. Consequently, $
\kappa \mapsto V^{AI}(t,x,m,v;\kappa)
$
is strictly increasing.
\end{Proposition}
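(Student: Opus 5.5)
The plan is to rewrite $b^{AI}$ so that $\kappa$ enters only through the posterior-variance path in a monotone way, and then to apply an ODE comparison argument. Write $v_s=v_s^{AI}$ for the solution of \eqref{filter_ai_compact} with capacity $\kappa$. The integrand weight in \eqref{b_integral} is exactly $-\dot v_s$, so
\begin{align*}
b^{AI}(t,v;\kappa)=\int_t^T\lb-\dot v_s\rb a(s,v_s)\,ds,\qquad a(s,v_s)=\frac{T-s}{2\lb\sigma^2+(T-s)v_s\rb}.
\end{align*}

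The first step is to integrate out $\dot v_s$. Set $g(s):=-\tfrac12\ln\lb\sigma^2+(T-s)v_s\rb$. Differentiating gives
\begin{align*}
g'(s)=\frac{v_s}{2\lb\sigma^2+(T-s)v_s\rb}-\frac{(T-s)\dot v_s}{2\lb\sigma^2+(T-s)v_s\rb}.
\end{align*}
Hence $-\dot v_s\,a(s,v_s)=g'(s)-\frac{v_s}{2(\sigma^2+(T-s)v_s)}$. Integrating over $[t,T]$ yields
\begin{align*}
b^{AI}(t,v;\kappa)=\frac12\ln\frac{\sigma^2+(T-t)v}{\sigma^2}-\int_t^T\frac{v_s}{2\lb\sigma^2+(T-s)v_s\rb}\,ds.
\end{align*}
The first term does not depend on $\kappa$. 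The map $u\mapsto u/(\sigma^2+(T-s)u)$ is strictly increasing on $(0,\infty)$, since its derivative is $\sigma^2/(\sigma^2+(T-s)u)^2>0$. It therefore suffices to show that $\kappa\mapsto v_s^{AI}(\kappa)$ is strictly decreasing for every $s\in(t,T]$. Continuity of the paths then makes the integral strictly decrease, so $b^{AI}$ strictly increases.

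The main step is a strict comparison for the variance ODE. Take $0\le\kappa_1<\kappa_2$, let $v^{(1)},v^{(2)}$ be the corresponding unique positive solutions, and set $w:=v^{(1)}-v^{(2)}$. Then $w(t)=0$ and
\begin{align*}
\dot w(t)=(\kappa_2-\kappa_1)\frac{v^2}{f(v)}>0,
\end{align*}
because $f>0$ by \eqref{entropy_regular}. Hence $w>0$ on some interval $(t,t+\varepsilon)$.

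Suppose $w$ vanished at a first time $s^\ast\in(t,T]$. Then $w>0$ on $(t,s^\ast)$ and $w(s^\ast)=0$, which forces $\dot w(s^\ast)\le0$. At $s^\ast$, however, $v^{(1)}=v^{(2)}=:u>0$, so
\begin{align*}
\dot w(s^\ast)=(\kappa_2-\kappa_1)\frac{u^2}{f(u)}>0,
\end{align*}
a contradiction. Here I use that both drifts are $C^1$ in $v$: since $v^2/f(v)=1/\mathcal{H}'(v)$ and $v^2\mathcal H'\in C^1$, the solutions are differentiable and the first-vanishing-time argument is legitimate. Thus $v^{(1)}_s>v^{(2)}_s$ for all $s\in(t,T]$, and the first claim follows.

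Finally, in \eqref{value_function} the value function $V^{AI}=-\frac1\gamma\exp\lb-\gamma e^{r(T-t)}x-a(t,v)(m-r)^2-b^{AI}\rb$ depends on $\kappa$ only through $b^{AI}$. It is strictly increasing in $b^{AI}$, since $y\mapsto-\frac1\gamma e^{-y}$ is strictly increasing. Hence $\kappa\mapsto V^{AI}(t,x,m,v;\kappa)$ is strictly increasing.
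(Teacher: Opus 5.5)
Your proof is correct and takes essentially the same route as the paper. You use the same logarithmic rewriting $b^{AI}=\frac12\ln\bigl((\sigma^2+(T-t)v)/\sigma^2\bigr)-\frac12\int_t^T v_s/(\sigma^2+(T-s)v_s)\,ds$, the same monotonicity of $y\mapsto y/(\sigma^2+(T-s)y)$, and the same strict ordering of the variance paths in $\kappa$. The only difference is that you prove the strict comparison directly via a first-vanishing-time argument, whereas the paper cites the strict scalar ODE comparison theorem.
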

\begin{proof}
See Appendix \ref{proof_propo_32}.
\end{proof}

The monotonicity of the value function in Proposition \ref{propo_32} operates through
$b^{AI}(t,v;\kappa)$.  At any fixed posterior variance, a larger $\kappa$
increases auxiliary precision and the total variance-decay rate. Along the
induced variance path, this accelerates learning and increases
$b^{AI}(t,v;\kappa)$. Hence, from the attentive investor's perspective, higher
capacity yields a higher value function.

\subsection{The Price Observer}

Setting $\kappa=0$ in the attentive-investor solution gives the price observer's optimal trading strategy and value function, so we omit a separate proof. The auxiliary signal is then inactive, while learning from stock prices remains available.

\begin{Proposition} \label{propo_33}
Let $(t,x,m,v)\in[0,T)\times\mathcal{O}$ be fixed. Then the price observer's optimal trading strategy is given by
\begin{align*}
\pi_s^{\ast,PO}=e^{-r(T-s)}\frac{m_s^{PO}-r}{\gamma\sigma^2}\frac{\sigma^2}{\sigma^2+(T-s)v_s^{PO}}, \qquad s\in[t,T]. \labeleq{strategy_price_observer}
\end{align*}
The corresponding value function is given by
\begin{align*}
V^{PO}(t,x,m,v)&=-\frac{1}{\gamma}\exp\lb-\gamma e^{r(T-t)}x-a(t,v)(m-r)^2-b^{PO}(t,v)\rb, \labeleq{value_function_price_observer}
\end{align*}
where $a(t,v)$ is given by \eqref{a_explicit}, and
\begin{align*}
b^{PO}(t,v)=\frac{1}{2}\ln\lb 1+\frac{(T-t)v}{\sigma^2}\rb-\frac{(T-t)v}{2(\sigma^2+(T-t)v)}>0. \labeleq{b_po_explicit}
\end{align*}
\end{Proposition}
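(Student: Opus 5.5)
The plan is to specialize Proposition \ref{proposition_31} to $\kappa=0$ and then evaluate the integral \eqref{b_integral} explicitly. By the paper's convention, $\mathbb{F}^{AI}=\mathbb{F}^{PO}$ at $\kappa=0$. The filter \eqref{filter_ai_compact} then reduces to \eqref{v_po_explicit}: the term $\kappa/f(v_s^{AI})$ vanishes, so only the price innovation remains. Hence the admissible classes, the wealth dynamics \eqref{wealth_learning_investor} and the filters coincide. Proposition \ref{proposition_31}, which covers $\kappa\in[0,\infty)$, therefore gives the strategy \eqref{strategy_price_observer} and the value function \eqref{value_function_price_observer}, with $b^{PO}(t,v)=b^{AI}(t,v;0)$.

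The only real work is computing
\begin{align*}
b^{PO}(t,v)=\int_t^T \frac{(v_s^{PO})^2}{\sigma^2}\,a(s,v_s^{PO})\,ds .
\end{align*}
Using $v_s^{PO}=\sigma^2 v/(\sigma^2+(s-t)v)$, I first simplify
\begin{align*}
\sigma^2+(T-s)v_s^{PO}=\frac{\sigma^2(\sigma^2+(T-t)v)}{\sigma^2+(s-t)v},
\end{align*}
so that
\begin{align*}
a(s,v_s^{PO})=\frac{(T-s)(\sigma^2+(s-t)v)}{2\sigma^2(\sigma^2+(T-t)v)} .
\end{align*}
Multiplying by $(v_s^{PO})^2/\sigma^2=\sigma^2v^2/(\sigma^2+(s-t)v)^2$ gives the integrand
\begin{align*}
\frac{v^2 (T-s)}{2(\sigma^2+(T-t)v)(\sigma^2+(s-t)v)} .
\end{align*}
Next I write $u=s-t$, $\tau=T-t$, $c=\sigma^2+\tau v$, and use
\begin{align*}
v(\tau-u)=c-(\sigma^2+uv).
\end{align*}
The integrand becomes $\frac{v}{2c}\bigl[\frac{c}{\sigma^2+uv}-1\bigr]$. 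Integrating over $u\in[0,\tau]$ yields $\frac12\ln(c/\sigma^2)-\frac{\tau v}{2c}$, which is \eqref{b_po_explicit}.

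Strict positivity follows from $\ln(1+y)>y/(1+y)$ for $y=\tau v/\sigma^2>0$. To see this, note that $g(y)=\ln(1+y)-y/(1+y)$ satisfies $g(0)=0$ and $g'(y)=y/(1+y)^2>0$. Alternatively, positivity is immediate because the integrand is strictly positive on $[t,T)$.

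I expect no real obstacle. The one point needing care is confirming that Proposition \ref{proposition_31} genuinely includes $\kappa=0$ with the filtration convention. Once that is checked, the explicit integral is routine algebra.
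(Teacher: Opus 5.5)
Your proposal is correct and follows the paper's own route: the paper proves Proposition \ref{propo_33} by setting $\kappa=0$ in Proposition \ref{proposition_31}, which explicitly covers $\kappa=0$ under the convention $\mathbb{F}^{AI}=\mathbb{F}^{PO}$, and your direct evaluation of \eqref{b_integral} and your positivity argument are both correct. A shorter route to the closed form is the representation \eqref{b_capacity_representation} from the proof of Proposition \ref{propo_32}: at $\kappa=0$ the integrand $v_s^{PO}/(\sigma^2+(T-s)v_s^{PO})=v/(\sigma^2+(T-t)v)$ is constant in $s$, so \eqref{b_po_explicit} follows immediately.
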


The strategies in \eqref{optimal_strategy} and \eqref{strategy_price_observer} have the same feedback form. For $\kappa>0$, the variance comparison \eqref{variance_capacity_order} gives $v_s^{AI}<v_s^{PO}$ for every $s>t$. For a fixed posterior mean, the attentive investor's learning-hedge adjustment is therefore weaker and her risky position is closer in magnitude to the corresponding myopic demand. Because $m_s^{AI}$ and $m_s^{PO}$ are based on different signals, however, greater attention capacity does not imply a uniformly larger risky position along realized paths. The value functions have the same current-belief term and differ only in the contribution of subsequent learning. Since $b^{PO}(t,v)=b^{AI}(t,v;0)$, Proposition \ref{propo_32} yields, for $\kappa>0$,
\begin{align*}
b^{AI}(t,v;\kappa)>b^{PO}(t,v),\qquad V^{AI}(t,x,m,v;\kappa)>V^{PO}(t,x,m,v).
\end{align*}

\subsection{The Naive (Fixed-belief) Investor}

We next derive the naive investor's optimal trading strategy and the corresponding value function. 

\begin{Proposition}\label{propo_34}
Let $(t,x,m,v)\in[0,T)\times\mathcal{O}$ be fixed. Then the naive investor's optimal deterministic trading strategy is given by  
\begin{align*}
\pi_s^{\ast,NI}=e^{-r(T-s)}\frac{m-r}{\gamma\sigma^2}\frac{\sigma^2}{\sigma^2+(T-t)v}, \qquad s\in\lsb t,T\rsb. \labeleq{strategy_naive_investor}
\end{align*}
The corresponding value function is given by
\begin{align*}
V^{NI}(t,x,m,v)=-\frac{1}{\gamma}\exp\lb-\gamma e^{r(T-t)}x-a(t,v)(m-r)^2\rb, \labeleq{value_function_naive_investor}
\end{align*}
where $a(t,v)$ is given by \eqref{a_explicit}. 
\end{Proposition}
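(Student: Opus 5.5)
Since $\mathbb{F}^{NI}$ is the constant filtration $\mathcal{G}_t$, every admissible strategy of the naive investor is $\mathcal{G}_t$-measurable, hence deterministic given the current state; this is why the statement refers to the optimal \emph{deterministic} strategy. The plan is to compute the expected utility of an arbitrary deterministic $\pi\in L^2[t,T]$ in closed form, reduce the problem to maximizing a concave quadratic functional, and identify the maximizer by a first-order condition.

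\textbf{Step 1 (terminal wealth).} Solving \eqref{wealth_naive_investor} with $\pi$ deterministic gives
\begin{align*}
X_T^{\pi,NI}=e^{r(T-t)}x+(\mu-r)A+\sigma Z,\qquad A:=\int_t^T e^{r(T-s)}\pi_sds,\quad Z:=\int_t^T e^{r(T-s)}\pi_sdW_s^1 .
\end{align*}
Conditionally on $\mathcal{G}_t$, $\mu\sim\N(m,v)$ and $Z\sim\N(0,B)$ are independent, with $B:=\int_t^T e^{2r(T-s)}\pi_s^2ds$. Hence $X_T^{\pi,NI}$ is Gaussian with mean $e^{r(T-t)}x+(m-r)A$ and variance $vA^2+\sigma^2B$. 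The Gaussian moment generating function then yields
\begin{align*}
\E\lsb U(X_T^{\pi,NI})\rsb=-\frac{1}{\gamma}\exp\lb-\gamma e^{r(T-t)}x-\gamma(m-r)A+\frac{\gamma^2}{2}\lb vA^2+\sigma^2B\rb\rb.
\end{align*}

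\textbf{Step 2 (optimization).} Maximizing expected utility is equivalent to maximizing the functional $J(\pi)=(m-r)A-\tfrac{\gamma}{2}(vA^2+\sigma^2B)$. This is the main step. I would do it in two stages. First, fix the value of $A$. Cauchy--Schwarz gives $A^2\le (T-t)B$, with equality iff $e^{r(T-s)}\pi_s$ is constant in $s$. So for given $A$, the quantity $B$ is minimized by the profile $e^{r(T-s)}\pi_s\equiv c$, for which $A=c(T-t)$ and $B=c^2(T-t)$. Second, substitute this profile, which leaves a scalar concave problem:
\begin{align*}
\max_{c}\; (m-r)c(T-t)-\frac{\gamma}{2}\lb v(T-t)^2+\sigma^2(T-t)\rb c^2 .
\end{align*}
Its maximizer is $c=\frac{m-r}{\gamma(\sigma^2+(T-t)v)}$. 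This reproduces \eqref{strategy_naive_investor}.

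\textbf{Step 3 (value).} Plugging $c$ back in gives the optimal exponent
\begin{align*}
\frac{(m-r)^2(T-t)}{2\gamma(\sigma^2+(T-t)v)}=\frac{a(t,v)(m-r)^2}{\gamma}.
\end{align*}
Multiplying by $\gamma$ in the exponent gives exactly \eqref{value_function_naive_investor}, with $a(t,v)$ from \eqref{a_explicit}.

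\textbf{Admissibility.} Finally, check that the optimizer is admissible. It is bounded and deterministic, so \eqref{admissible_square_integrability} holds trivially; no class-$(D)$ condition is imposed for $NI$.
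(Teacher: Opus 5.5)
Your proposal is correct and follows essentially the same route as the paper. Both arguments use Gaussian terminal wealth under the mixed law, reduce via the CARA moment-generating function to a quadratic functional of the discounted exposure $e^{r(T-s)}\pi_s$, apply Cauchy--Schwarz to force a constant profile, and finish with a scalar first-order condition, including the optimal exponent $a(t,v)(m-r)^2/\gamma$ and the admissibility check (square integrability only, with no class-$(D)$ requirement for $NI$).
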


\begin{proof}
See Appendix \ref{propo_34_proof}.
\end{proof}

Although the naive investor does not update after time $t$, she does not treat the drift as known. She evaluates terminal utility under the Gaussian belief $\mu\sim\N(m,v)$, so drift uncertainty attenuates the trading strategy based on the posterior mean by the factor $\sigma^2/(\sigma^2+(T-t)v)$ in \eqref{strategy_naive_investor}. This fixed-belief benchmark is closest in spirit to the ambiguity-neutral pre-commitment case of \citet{balter2021time}: both retain a Gaussian belief about an unknown constant drift and optimize over deterministic no-learning policies, although \citet{balter2021time} work within a smooth-ambiguity framework. By contrast, the naive investor in \citet{palczewski_2015} sets the unknown drift equal to its initial point estimate when optimizing. Their investor therefore ignores both learning and drift uncertainty, while ours ignores learning but continues to integrate over drift uncertainty.

\subsection{The Perfect-Learning Limit}\label{chap_34}

Finally, we consider the attentive investor's perfect-learning limit as $\kappa\to\infty$. Starting from the current belief in \eqref{prior_belief}, the posterior variance converges to zero at every fixed date $s>t$ as $\kappa\to\infty$. In this precise limiting sense, posterior uncertainty is resolved immediately after time $t$, and the drift $\mu$ becomes known.

To state the limit, let $V^{FI}(t,x;\mu)$ denote the value function of an investor who knows the drift $\mu$. The corresponding value function and optimal investment strategy are given by\footnote{The derivations are given in the proof of Proposition \ref{propo_35}.}
\begin{align*}
V^{FI}(t,x;\mu)
&=-\frac{1}{\gamma}\exp\lb-\gamma e^{r(T-t)}x-a(t,0)(\mu-r)^2\rb,
\labeleq{value_function_full_info}\\
\pi_s^{\ast,FI}
&=e^{-r(T-s)}\frac{\mu-r}{\gamma\sigma^2},\qquad s\in[t,T].
\labeleq{full_information_strategy}
\end{align*}
Here, $a(t,0)=(T-t)/(2\sigma^2)$, reflecting the absence of drift uncertainty under full information.

\begin{Proposition}\label{propo_35}
Let $(t,x,m,v)\in[0,T)\times\mathcal{O}$ be fixed. The perfect-learning limit of the attentive investor satisfies
\begin{align*}
V^{AI,\infty}(t,x,m,v):=\lim_{\kappa\to\infty}V^{AI}(t,x,m,v;\kappa)=\E_{m,v}\lsb V^{FI}(t,x;\mu)\rsb, \labeleq{value_function_ai_infty}
\end{align*}
where $\E_{m,v}$ denotes expectation with respect to the current belief
$\mu\sim\N(m,v)$. Moreover,
\begin{align*}
V^{AI,\infty}(t,x,m,v)=-\frac{1}{\gamma}\exp\lb-\gamma e^{r(T-t)}x-a(t,v)(m-r)^2-b^{AI,\infty}(t,v)\rb, \labeleq{value_function_ai_infty_explicit}
\end{align*}
where $a(t,v)$ is given by \eqref{a_explicit} and
\begin{align*}
b^{AI,\infty}(t,v)&=\frac{1}{2}\ln\lb 1+\frac{(T-t)v}{\sigma^2}\rb. \labeleq{b_ai_infty}
\end{align*}
The corresponding limiting trading strategy is given, for $s\in(t,T]$, by
\begin{align*}
\pi_s^{\ast,AI,\infty}=e^{-r(T-s)}\frac{\mu-r}{\gamma\sigma^2}.
\labeleq{strategy_ai_infty}
\end{align*}
\end{Proposition}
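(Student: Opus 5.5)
The plan has three parts: derive the full-information quantities, compute the Gaussian expectation in closed form, and then show that $b^{AI}(t,v;\kappa)\to b^{AI,\infty}(t,v)$ as $\kappa\to\infty$. Identity \eqref{value_function_ai_infty} then follows from Proposition \ref{proposition_31}.

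\textbf{Full-information benchmark.} For known $\mu$, the value function and strategy come from the HJB equation with the ansatz $-\frac1\gamma\exp(-\gamma e^{r(T-t)}x-c(t))$. Maximizing the quadratic in $\pi$ gives \eqref{full_information_strategy}, and then $c'(t)=-(\mu-r)^2/(2\sigma^2)$ with $c(T)=0$. This yields \eqref{value_function_full_info} with $a(t,0)=(T-t)/(2\sigma^2)$. Equivalently, one may set $v=0$ in the naive investor's computation of Proposition \ref{propo_34}; verification uses the same class-(D) argument.

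\textbf{Gaussian expectation.} Write $\mu-r=(m-r)+\sqrt v Z$ with $Z\sim\N(0,1)$ and $\alpha=a(t,0)$. The standard formula for $\E[e^{-\alpha (c+\sqrt v Z)^2}]$ is
\begin{align*}
\E\lsb e^{-\alpha(\mu-r)^2}\rsb=(1+2\alpha v)^{-1/2}\exp\lb-\frac{\alpha}{1+2\alpha v}(m-r)^2\rb .
\end{align*}
Since $1+2\alpha v=1+(T-t)v/\sigma^2$ and $\alpha/(1+2\alpha v)=(T-t)/(2(\sigma^2+(T-t)v))=a(t,v)$, this gives \eqref{value_function_ai_infty_explicit} with $b^{AI,\infty}$ as in \eqref{b_ai_infty}.

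\textbf{Limit of $b^{AI}$.} This is the main step. The integrand in \eqref{b_integral} equals $-\frac{d}{ds}v^{AI}_s\cdot a(s,v^{AI}_s)$ by \eqref{filter_ai_compact}. Setting $u=v^{AI}_s$ gives
\begin{align*}
b^{AI}(t,v;\kappa)=\int_t^T \frac{(T-s)\,(-\dot v_s)}{2(\sigma^2+(T-s)v_s)}ds .
\end{align*}
Define $G(s,u):=\frac12\ln(1+(T-s)u/\sigma^2)$. Then
\begin{align*}
\frac{d}{ds}G(s,v_s)=\frac{(T-s)\dot v_s-v_s}{2(\sigma^2+(T-s)v_s)} ,
\end{align*}
so
\begin{align*}
b^{AI}(t,v;\kappa)=G(t,v)-\int_t^T\frac{v^{AI}_s}{2(\sigma^2+(T-s)v^{AI}_s)}ds .
\end{align*}
The boundary term $G(T,\cdot)$ vanishes. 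As a check, at $\kappa=0$ this reproduces \eqref{b_po_explicit}.

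It therefore suffices to show that $\int_t^T v^{AI}_s\,ds\to0$ as $\kappa\to\infty$; the integrand above is bounded by $v^{AI}_s/(2\sigma^2)$. The variance path is decreasing, bounded by $v$ (comparison with the $\kappa=0$ path), and monotone in $\kappa$ via the comparison \eqref{variance_capacity_order}. By dominated convergence, pointwise convergence $v^{AI}_s\to0$ for each $s>t$ is enough.

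For pointwise convergence, fix $\varepsilon>0$. While $v^{AI}_s\ge\varepsilon$, the decay rate satisfies $-\dot v\ge \kappa\min_{u\in[\varepsilon,v]}u^2/f(u)>0$, using continuity and positivity of $f$ from \eqref{entropy_regular}. So $v^{AI}$ drops below $\varepsilon$ within time $O(1/\kappa)$ and stays there because it is decreasing.

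\textbf{Limiting strategy.} For fixed $s>t$, $v^{AI}_s\to0$ makes the attenuation factor in \eqref{optimal_strategy} tend to 1. For the posterior mean, $\E[(m^{AI}_s-\mu)^2]=v^{AI}_s\to0$, so $m^{AI}_s\to\mu$ in $L^2$, giving \eqref{strategy_ai_infty}.
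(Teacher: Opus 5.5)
Your proposal is correct and follows essentially the same route as the paper. It uses the same rewriting $b^{AI}(t,v;\kappa)=\frac12\ln(1+(T-t)v/\sigma^2)-\frac12\int_t^T v_s^{AI}/(\sigma^2+(T-s)v_s^{AI})\,ds$ with dominated convergence, and the same ODE argument that $v_s^{AI}\to0$ for each $s>t$ (your lower bound on the decay rate over $[\varepsilon,v]$ is the paper's separation-of-variables bound in another form). It also uses the same Gaussian integral and the same $L^2$ convergence $m_s^{AI}\to\mu$ from $\E[(m_s^{AI}-\mu)^2]=v_s^{AI}$. The only difference is that the paper also bounds $\E[(q_s(m_s^{AI}-r)-(\mu-r))^2]$ uniformly in $s$ and $\kappa$, which upgrades the strategy convergence to $L^2(\PP\otimes ds)$; you stop at convergence for each fixed $s$, which is what the statement literally asserts.
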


\begin{proof}
See Appendix \ref{propo_35_proof}.
\end{proof}

Combining the preceding results, the value functions admit the following common representation relative to the naive investor:
\begin{align*}
V^{AI}(t,x,m,v;\kappa)&=V^{NI}(t,x,m,v)\exp\lb-b^{AI}(t,v;\kappa)\rb, \\
V^{PO}(t,x,m,v)&=V^{NI}(t,x,m,v)\exp\lb-b^{PO}(t,v)\rb, \labeleq{value_function_identities}\\
V^{AI,\infty}(t,x,m,v) &=V^{NI}(t,x,m,v)\exp\lb-b^{AI,\infty}(t,v)\rb.
\end{align*}
Thus, the value functions share the same current-wealth and current-belief terms and differ only in the contribution of uncertainty resolution after time $t$. This contribution is zero for the naive investor, $b^{PO}(t,v)$ for the price observer, $b^{AI}(t,v;\kappa)$ for the attentive investor, and $b^{AI,\infty}(t,v)$ in the perfect-learning limit. Moreover, for every $\kappa>0$,
\begin{align*}
0<b^{PO}(t,v)=b^{AI}(t,v;0)<b^{AI}(t,v;\kappa)<b^{AI,\infty}(t,v).
\end{align*}
Accordingly, the perfect-learning value is the limiting upper bound as attention capacity increases. These inequalities establish the ex ante Bayesian ranking directly from the value functions: price observation is valuable, auxiliary attention adds value, and value rises with capacity. Section \ref{chap4} expresses this ranking in terms of monetary certainty-equivalent values of information and then compares the ex ante evaluation with its scenario-conditional counterpart across fixed-drift scenarios.

\section{Ex Ante and Scenario-Conditional Value of Information} \label{chap4}

This section evaluates the value of information from two complementary perspectives. It first establishes the ex ante value of information under the investor's current belief. It then holds the optimal strategies from Section \ref{chap3} fixed and evaluates the corresponding scenario-conditional value of information across different fixed-drift scenarios. The scenario-conditional analysis identifies when the information ranking agrees with or reverses the ex ante ranking and isolates the mechanism behind such reversals. All values are expressed in date-$t$ certainty-equivalent units.\footnote{Certainty-equivalent differences convert expected-utility differences into date-$t$ monetary compensation. With CARA utility and deterministic discounting, each difference is the cash amount at date $t$ that makes the investor indifferent between two trading strategies. This monetary-value approach is standard in portfolio evaluations of information \citep{amendinger2003monetary,cvintanic2006,palczewski_2015}. In decision analysis, certainty-equivalent value is closely related to the buying price of information \citep{hazen1999lottery}. Under exponential utility, \citet{abbas2024value} establish the relevant equivalence across decision problems.} 

\subsection{Ex Ante Value of Information}

For $I\in\{NI,PO\}$, we define
\begin{align*}
CE^{I}_{EA}&:=e^{-r(T-t)}U^{-1}\lb V^{I}(t,x,m,v)\rb,\\
CE^{AI}_{EA}(\kappa)&:=e^{-r(T-t)}
U^{-1}\lb V^{AI}(t,x,m,v;\kappa)\rb. \labeleq{ex_ante_certainty_equivalents}
\end{align*}
The ex ante \textit{value of price observation}, \textit{total value of attention}, and \textit{incremental value of attention} are, respectively,
\begin{align*}
P^{PO}_{EA}&:=CE^{PO}_{EA}-CE^{NI}_{EA},\\
P^{AI}_{EA}(\kappa)&:=CE^{AI}_{EA}(\kappa)-CE^{NI}_{EA},\\
P^{AI}_{EA}(\kappa)-P^{PO}_{EA}
&=CE^{AI}_{EA}(\kappa)-CE^{PO}_{EA}. \labeleq{ex_ante_values}
\end{align*}

\begin{Proposition}\label{propo_41}
Let $(t,x,m,v)\in[0,T)\times\mathcal{O}$ and
$\kappa\in[0,\infty)$ be fixed. Then
\begin{align*}
P^{PO}_{EA}
&=\frac{e^{-r(T-t)}}{\gamma}b^{PO}(t,v),
\labeleq{ex_ante_observation_value_explicit}\\
P^{AI}_{EA}(\kappa)
&=\frac{e^{-r(T-t)}}{\gamma}b^{AI}(t,v;\kappa),
\labeleq{ex_ante_attention_value_explicit}\\
P^{AI}_{EA}(\kappa)-P^{PO}_{EA}
&=\frac{e^{-r(T-t)}}{\gamma}
\lb b^{AI}(t,v;\kappa)-b^{PO}(t,v)\rb.
\labeleq{ex_ante_incremental_attention_value_explicit}
\end{align*}
Moreover,
\begin{align*}
0<P^{PO}_{EA}=P^{AI}_{EA}(0)<P^{AI}_{EA}(\kappa),
\qquad \kappa>0,
\end{align*}
and $\kappa\mapsto P^{AI}_{EA}(\kappa)$ is strictly increasing.
\end{Proposition}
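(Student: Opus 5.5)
The plan is a direct computation from the closed-form value functions, followed by an appeal to the monotonicity results already established.

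First, I will compute $U^{-1}$ explicitly. For CARA utility $U(x)=-\gamma^{-1}e^{-\gamma x}$, we have $U^{-1}(y)=-\gamma^{-1}\ln(-\gamma y)$ for $y<0$. Every value function in Section \ref{chap3} has the form $-\gamma^{-1}\exp(-\gamma e^{r(T-t)}x-A)$ with $A$ deterministic. Applying $U^{-1}$ therefore gives $e^{r(T-t)}x+A/\gamma$, and multiplying by $e^{-r(T-t)}$ gives
\begin{align*}
CE^I_{EA}=x+\frac{e^{-r(T-t)}}{\gamma}A^I.
\end{align*}
From Propositions \ref{proposition_31}, \ref{propo_33} and \ref{propo_34} (equivalently, the identities \eqref{value_function_identities}), the exponents are:
\begin{align*}
A^{NI}&=a(t,v)(m-r)^2,\\
A^{PO}&=a(t,v)(m-r)^2+b^{PO}(t,v),\\
A^{AI}&=a(t,v)(m-r)^2+b^{AI}(t,v;\kappa).
\end{align*}

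Second, I will take differences. The wealth term $x$ and the current-belief term $a(t,v)(m-r)^2$ cancel, which yields \eqref{ex_ante_observation_value_explicit}, \eqref{ex_ante_attention_value_explicit} and \eqref{ex_ante_incremental_attention_value_explicit} immediately. For the case $\kappa=0$, I will use the convention $\mathbb{F}^{AI}=\mathbb{F}^{PO}$ and the identity $b^{AI}(t,v;0)=b^{PO}(t,v)$. This identity follows by substituting $\kappa=0$ into \eqref{b_integral}: the integrand becomes $(v_s^{PO})^2a(s,v_s^{PO})/\sigma^2$, which equals $-\tfrac{d}{ds}$ of an expression whose integral is \eqref{b_po_explicit}. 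The paper already notes this identity, so it can be cited. It gives $P^{PO}_{EA}=P^{AI}_{EA}(0)$.

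Third, I will establish the inequalities. Positivity $P^{PO}_{EA}>0$ reduces to $b^{PO}(t,v)>0$, which is stated in \eqref{b_po_explicit}. To verify it independently, set $y=(T-t)v/\sigma^2>0$. Then $b^{PO}=\tfrac12[\ln(1+y)-y/(1+y)]$. This expression vanishes at $y=0$, and its derivative $\tfrac12\, y/(1+y)^2$ is positive, so $b^{PO}>0$ for $y>0$. Strict monotonicity of $\kappa\mapsto P^{AI}_{EA}(\kappa)$ follows from Proposition \ref{propo_32}, since the positive factor $e^{-r(T-t)}/\gamma$ preserves strict increase. Combined with the value at $\kappa=0$, this gives $P^{AI}_{EA}(0)<P^{AI}_{EA}(\kappa)$ for $\kappa>0$.

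There is no real obstacle, because all the substantive work is contained in Propositions \ref{proposition_31}--\ref{propo_34} and \ref{propo_32}. The only point needing care is that $U^{-1}$ is well defined. This holds because each value function is strictly negative, so $-\gamma V>0$ and the logarithm applies. The consistency at $\kappa=0$ also has to be checked, as described in the second step.
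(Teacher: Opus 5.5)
Your proposal is correct and follows the paper's proof: the paper also reads the three identities off the common representation \eqref{value_function_identities} via $U^{-1}(y)=-\ln(-\gamma y)/\gamma$, and it obtains the ordering and strict monotonicity from $b^{PO}>0$ and Proposition \ref{propo_32}. Your explicit verifications of $b^{PO}(t,v)>0$ and of $b^{AI}(t,v;0)=b^{PO}(t,v)$ are correct details that the paper leaves implicit.
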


\begin{proof}
Equations \eqref{value_function_identities},
\eqref{ex_ante_certainty_equivalents}, and \eqref{ex_ante_values} give the
three identities.
The ordering and strict monotonicity follow from $b^{PO}>0$ and
Proposition \ref{propo_32}.
\end{proof}

Ex ante value of information is therefore unambiguous: price learning is valuable, auxiliary attention adds value, and greater capacity raises value. These quantities depend on posterior variance $v$, but not on the current posterior mean $m$, because the current-belief term $a(t,v)(m-r)^2$ is common to all three value functions. However, the scenario-conditional ordering below is fundamentally different.

\subsection{Scenario-Conditional Value of Information}

Let $\mu\in\mathbb{R}$ be an arbitrary drift and let $\PP^\mu$ denote the corresponding fixed-drift law. We write $\nu:=\N(m,v)$ for the current belief. Conditional on $\mathcal{G}_t$, the Bayesian law $\PP$ is the mixture of the fixed-drift laws under $\nu$.\footnote{More precisely, $\PP$ is the Bayesian joint law and $\{\PP^\theta\}_{\theta\in\mathbb{R}}$ is a regular conditional law given the drift. Regularity means that $\theta\mapsto\PP^\theta(A)$ is Borel measurable for every future event $A$ and, for every Borel set $B$, $\PP(A\cap\{\mu\in B\}\mid\mathcal{G}_t)=\int_B\PP^\theta(A)\nu(d\theta)$. Thus, conditional on $\mathcal{G}_t$, $\PP$ is the Bayesian mixture of its fixed-parameter components.} Because $\PP^\mu$ conditions on the drift and $\mu$ is conditionally independent of the future increments of $W^1$, the wealth dynamics under $\PP^\mu$ are given by
\begin{align*}
dX_s^\pi=\lb rX_s^\pi+\pi_s(\mu-r)\rb ds+\pi_s\sigma dW_s^1,\qquad X_t^\pi=x. \labeleq{scenario_conditional_wealth}
\end{align*}
Let $\pi^{\ast,I}$ denote investor $I$'s optimal trading strategy from Section \ref{chap3}.  The scenario-conditional expected utility is given by
\begin{align*}
J^I(t,x,m,v;\mu)&:=\E^\mu_{t,x,m,v}\lsb U\lb X_T^{\pi^{*,I}}\rb\rsb,\qquad I\in\{AI, PO, NI\},
\end{align*}
where $\E^\mu_{t,x,m,v}$ denotes expectation under the fixed-drift law $\PP^\mu$, conditional on the initial state $(t,x,m,v)$. The filtering rule and trading strategy are those selected under the current belief $\nu$ and are evaluated under $\PP^\mu$ without reoptimization. Note that the following identity holds:
\begin{align*}
V^I(t,x,m,v)
&=\int_{\mathbb{R}}J^I(t,x,m,v;\mu)\nu(d\mu),
\qquad I\in\{AI, PO, NI\},
\end{align*}
where $V^I$ is defined in \eqref{portfolio_problem}. The corresponding date-$t$ scenario-conditional certainty equivalent is \begin{align*}
CE^I_{SC}(\mu)&:=e^{-r(T-t)}U^{-1}\lb J^I(t,x,m,v;\mu)\rb,\qquad I\in\{AI, PO, NI\}. \labeleq{scenario_conditional_certainty_equivalents}
\end{align*}
For brevity, the dependence on $\kappa$ is omitted from the notation for the attentive investor. Since the full-information strategy is optimized under the fixed drift $\mu$, its scenario-conditional expected utility coincides with $V^{FI}(t,x;\mu)$. Hence, using \eqref{value_function_full_info}, the corresponding benchmark certainty equivalent is
\begin{align*}
CE_{SC}^{FI}(\mu)&:=e^{-r(T-t)}U^{-1}\left(V^{FI}(t,x;\mu)\right)=x+\frac{e^{-r(T-t)}}{\gamma}a(t,0)(\mu-r)^2. \labeleq{scenario_conditional_certainty_equivalent_fi}
\end{align*}
The scenario-conditional value of price observation, total value of attention, and incremental value of attention are
\begin{align*}
P^{PO}_{SC}(\mu)&:=CE^{PO}_{SC}(\mu)-CE^{NI}_{SC}(\mu),\\
P^{AI}_{SC}(\mu,\kappa)&:=CE^{AI}_{SC}(\mu)-CE^{NI}_{SC}(\mu), \labeleq{scenario_conditional_performance_attention}\\
P^{AI}_{SC}(\mu,\kappa)-P^{PO}_{SC}(\mu)&=CE^{AI}_{SC}(\mu)-CE^{PO}_{SC}(\mu),
\end{align*}
respectively.

To identify the mechanism behind the scenario-conditional ranking, let 
\begin{align*}
\Delta^{NI}&:=q_t^{NI}(m-r)-(\mu-r), \labeleq{trading_strategy_gaps}
\end{align*}
where $q_t^{NI}:=\sigma^2/(\sigma^2+(T-t)v)$ is the multiplicative factor in the naive investor's optimal investment strategy \eqref{strategy_naive_investor}. We refer to $\Delta^{NI}$ as the initial position gap since $\Delta^{NI}=\gamma\sigma^2e^{r(T-t)}(\pi_t^{*,NI}-\pi_t^{\ast,FI})$, so it is proportional to the difference between the naive investor’s trading strategy and the full-information benchmark.

The initial position gap also gives a direct representation of the naive investor’s scenario-conditional certainty equivalent:
\begin{align*}
    CE_{SC}^{NI}(\mu)&=x+\frac{e^{-r(T-t)}}{\gamma}\left[a(t,0)(\mu-r)^2-a(t,0)\left(\Delta^{NI}\right)^2\right]  \\
    &= CE_{SC}^{FI}(\mu)-\frac{e^{-r(T-t)}}{\gamma}a(t,0)\left(\Delta^{NI}\right)^2. \labeleq{sc_ce_ni}
\end{align*}
Hence, the certainty-equivalent shortfall relative to full information is directly proportional to the squared initial position gap. The next theorem shows how subsequent learning modifies this shortfall by reducing the effect of the inherited gap while generating an additional cost from ongoing belief updating.
  
\begin{theorem}\label{theorem_42}
Let $(t,x,m,v)\in[0,T)\times\mathcal{O}$, $\mu\in\mathbb{R}$, and $\kappa\in[0,\infty)$ be fixed. Set
\begin{align*} \lambda_s^\kappa:=q_s^{AI}v_s^{AI}\lb\frac{1}{\sigma^2}+\frac{\kappa}{f(v_s^{AI})}\rb,\qquad q_s^{AI}&:=\frac{\sigma^2}{\sigma^2+(T-s)v_s^{AI}}, \qquad s\in[t,T]. \labeleq{q_ai}
\end{align*}
Let $B(s;\kappa)$ and $C(s;\kappa)$ be the solutions of the terminal-value ODE system
\begin{align*}
\frac{d}{ds}B(s;\kappa)&=-2\lambda_s^\kappa\lb a(s,0)-B(s;\kappa)\rb+2q_s^{AI}v_s^{AI}\lambda_s^\kappa\lb a(s,0)-B(s;\kappa)\rb^2,\qquad B(T;\kappa)=0,\\
\frac{d}{ds}C(s;\kappa)&=-q_s^{AI}v_s^{AI}\lambda_s^\kappa\lb a(s,0)-B(s;\kappa)\rb,\qquad C(T;\kappa)=0. \labeleq{ode_system_correction_fluctuation}
\end{align*}
Then $B(t;\kappa)\in(0,a(t,0))$, $C(t;\kappa)\in(0,\frac{1}{2})$, and the scenario-conditional total value of attention satisfies
\begin{align*}
P^{AI}_{SC}(\mu,\kappa)=\frac{e^{-r(T-t)}}{\gamma}\lsb\underbrace{B(t;\kappa)\lb\Delta^{NI}\rb^2}_{\text{correction benefit}}-\underbrace{C(t;\kappa)}_{\text{fluctuation cost}}\rsb. \labeleq{correction_fluctuation_decomposition}
\end{align*}
Consequently,
\begin{align*}
P^{AI}_{SC}(\mu,\kappa)\gtreqqless0\Longleftrightarrow B(t;\kappa)\lb\Delta^{NI}\rb^2\gtreqqless C(t;\kappa). \labeleq{general_scenario_performance_condition}
\end{align*}
\end{theorem}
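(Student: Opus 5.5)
The plan is to compute $J^{AI}(t,x,m,v;\mu)$ explicitly under the fixed-drift law $\PP^\mu$ with a quadratic-exponential ansatz in a single state variable, the running position gap. Define
$$\Delta_s:=q_s^{AI}(m_s^{AI}-r)-(\mu-r),\qquad Y_s:=e^{r(T-s)}X_s,$$
so that $\Delta_t=q_t^{AI}(m-r)-(\mu-r)$.

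\emph{Step 0: the initial gap.} Since $v_t^{AI}=v$, we have $q_t^{AI}=q_t^{NI}$ and hence $\Delta_t=\Delta^{NI}$. For $\kappa>0$ the two investors share the same initial position; they differ only in later updating.

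\emph{Step 1: dynamics of wealth and gap under $\PP^\mu$.} By \eqref{optimal_strategy}, $\gamma e^{r(T-s)}\pi_s^{\ast,AI}=(\Delta_s+\mu-r)/\sigma^2$. Then \eqref{scenario_conditional_wealth} gives
$$dY_s=\frac{\Delta_s+\mu-r}{\gamma\sigma^2}\lb(\mu-r)ds+\sigma dW_s^1\rb.$$
For the gap, rewrite the innovations \eqref{innovation_bm_ai} under $\PP^\mu$:
$$d\widehat W^{1}=dW^1+\frac{\mu-m^{AI}}{\sigma}ds,\qquad d\widehat W^{2}=dW^2+\sqrt{p^\xi}\,(\mu-m^{AI})ds.$$
Hence $m^{AI}$ has drift $v^{AI}(\sigma^{-2}+p^\xi)(\mu-m^{AI})$. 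Combining this with $\tfrac{d}{ds}q_s^{AI}$, computed from the variance ODE \eqref{filter_ai_compact}, I expect the identity
$$d\Delta_s=-\lambda_s^\kappa\Delta_s\,ds+q_s^{AI}v_s^{AI}\lb\sigma^{-1}dW_s^1+\sqrt{p^\xi(v_s^{AI})}\,dW_s^2\rb.$$
The mean-reversion rate is exactly $\lambda^\kappa_s$, and the instantaneous variance is $(q_s^{AI}v_s^{AI})^2(\sigma^{-2}+p^\xi)=q_s^{AI}v_s^{AI}\lambda_s^\kappa$. Checking that the $q$-derivative terms combine into $-\lambda\Delta$ with no leftover $(\mu-r)$ drift is the first computation to carry out.

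\emph{Step 2: ansatz and ODE matching.} Posit
$$J(s)=-\frac1\gamma\exp\lb-\gamma Y_s-a(s,0)(\mu-r)^2+A(s)\Delta_s^2+C(s)\rb,\qquad A:=a(\cdot,0)-B.$$
Apply It\^o's formula and require zero drift. The drift collects terms in $\Delta^2$, $\Delta(\mu-r)$, $(\mu-r)^2$ and constants. These include the cross-variation between $-\gamma Y$ (with $dW^1$-coefficient $-(\Delta+\mu-r)/\sigma$) and $A\Delta^2$ (with $dW^1$-coefficient $2A\Delta q v/\sigma$). The $(\mu-r)^2$ terms cancel using $\partial_s a(s,0)=-1/(2\sigma^2)$. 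The main obstacle I anticipate is showing that the $\Delta(\mu-r)$ terms also cancel, so that the result closes into the stated system: the Riccati equation for $B$ (equivalently for $A$) and the linear equation $C'=-qv\lambda A$. If the cross term does not vanish identically, I would revisit the identity in Step 1 relating $\lambda$, $q$ and $v$, in particular the relation linking $q'$ with $qv/\sigma^2$. Once the system closes, the terminal conditions $A(T)=0=C(T)$ reproduce $U(X_T)$.

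\emph{Step 3: verification, bounds, and conclusion.} The coefficients are deterministic and continuous on $[t,T]$, and $\Delta$ is Gaussian. The Riccati solution exists without blow-up because $B=a(\cdot,0)$ is an equilibrium of the $B$-equation and comparison keeps $B$ between $0$ and $a(s,0)$. Near $s=T$ we have $B'(T)=-2\lambda_T a(T,0)=0$ but $B$ increases going backward, giving $B(t)\in(0,a(t,0))$ and hence $A(t)\in(0,a(t,0))$. For the exponential local martingale to be a true martingale, I would use a Novikov-type argument on small intervals, which suffices since $\Delta$ is Gaussian and $A$ is small relative to the variance.

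Then $C(t)=\int_t^T q_sv_s\lambda_s A(s)\,ds>0$. For $C(t)<\tfrac12$ I would bound $A(s)<a(s,0)=(T-s)/(2\sigma^2)$ and use $\tfrac{d}{ds}q_s^{AI}$. I expect $\tfrac{d}{ds}\ln q_s$ to be comparable to $q_sv_s\lambda_s(T-s)/\sigma^2$ minus a $v/\sigma^2$ term. This should give $C(t)\le\tfrac12\lb-\ln q_t+\ldots\rb$. If that is not sharp enough, I would instead bound $C$ by $\tfrac12\int qv\lambda\cdot 2A\,ds$ using the Riccati structure, namely $\tfrac{d}{ds}\ln(\cdot)$ of $1-2qvA$.

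Finally, $CE^{AI}_{SC}=x+\frac{e^{-r(T-t)}}{\gamma}\lsb a(t,0)(\mu-r)^2-A(t)(\Delta^{NI})^2-C(t)\rsb$. Subtracting \eqref{sc_ce_ni} yields \eqref{correction_fluctuation_decomposition}, and the sign equivalence \eqref{general_scenario_performance_condition} is immediate.
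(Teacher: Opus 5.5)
Your overall strategy, a quadratic-exponential ansatz in the state $(Y,\Delta)$ directly under $\PP^\mu$, can be made to work. As written, however, Step 1 asserts the wrong dynamics, and with them Step 2 cannot close. Your ingredients are correct: under $\PP^\mu$ the filter drift is $v_s^{AI}(\sigma^{-2}+p^\xi)(\mu-m_s^{AI})$. But $\frac{d}{ds}q_s^{AI}=\lambda_s^\kappa(1-q_s^{AI})+(q_s^{AI})^2v_s^{AI}/\sigma^2$, and only the first piece combines with the filter drift into $-\lambda_s^\kappa\Delta_s$. The second piece survives:
\[
d\Delta_s=\Bigl[-\lambda_s^\kappa\Delta_s+\frac{q_s^{AI}v_s^{AI}}{\sigma^2}(\Delta_s+\mu-r)\Bigr]ds+q_s^{AI}v_s^{AI}\Bigl(\frac{1}{\sigma}dW_s^1+\sqrt{p^\xi(v_s^{AI})}\,dW_s^2\Bigr).
\]
The identity you wrote holds only under the paper's tilted measure $\widetilde{\PP}^\mu$, which has Girsanov kernel $-q_s^{AI}(m_s^{AI}-r)/\sigma$ on $W^1$. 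Write $qv$ for $q_s^{AI}v_s^{AI}$. With your version, the $Y$--$\Delta$ cross-variation leaves $-\frac{2qvA}{\sigma^2}\Delta_s(\Delta_s+\mu-r)$ in the drift of the exponent. Its $\Delta(\mu-r)$ part cannot be absorbed by an ansatz with no linear term in $\Delta$, and its $\Delta^2$ part would change the Riccati equation.

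With the correct drift, $2A\Delta_s$ times the extra term is exactly $+\frac{2qvA}{\sigma^2}\Delta_s(\Delta_s+\mu-r)$, so the cross-variation cancels. The wealth terms then complete the square to $\Delta_s^2/(2\sigma^2)$. Matching coefficients gives $A'=2\lambda A-2qv\lambda A^2-\frac{1}{2\sigma^2}$ and $C'=-qv\lambda A$, which is the stated system. So the spot you flagged (``the relation linking $q'$ with $qv/\sigma^2$'') is exactly where the repair lies.

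Your route to $C(t;\kappa)<\frac12$ also fails as sketched. We have $\frac{d}{ds}\ln q_s^{AI}=\lambda_s^\kappa(1-q_s^{AI})/q_s^{AI}+q_s^{AI}v_s^{AI}/\sigma^2$, with a plus, not a minus. Integrating it therefore yields only $C(t;\kappa)<\frac12\ln(1+(T-t)v/\sigma^2)$, which exceeds $\frac12$ once $(T-t)v>(e-1)\sigma^2$. Work with $q$ itself instead: $qv\lambda A<qv\lambda\,a(s,0)=\frac{1-q}{2}\lambda<\frac12\frac{d}{ds}q$. Integrating gives $C(t;\kappa)<\frac{1-q_t^{NI}}{2}<\frac12$, which is the paper's argument.

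Relatedly, $B=a(\cdot,0)$ is not an equilibrium of the $B$-equation. Its right-hand side vanishes there, but $\partial_s a(s,0)=-1/(2\sigma^2)\neq0$. The bounds instead come from two facts:
\begin{itemize}
\item $A'=-1/(2\sigma^2)<0$ whenever $A=0$;
\item $B'=-2\lambda A(1-qvA)<0$ for $0<A<a(s,0)$, using $qv\,a(s,0)=\frac{1-q}{2}<\frac12$.
\end{itemize}

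Once repaired, your direct two-state verification under $\PP^\mu$ is a valid alternative to the paper's proof. The paper first factors $J^{AI}=V^{FI}\,\widetilde{\E}^\mu[\exp(\frac{1}{2\sigma^2}\int_t^T(\Delta_s^{AI})^2ds)]$ via the measure change, under which your Step 1 identity holds exactly. It then applies one-dimensional Feynman--Kac to $\exp(Ad^2+C)$. In your version the $Y$--$\Delta$ covariation does the job of the Girsanov drift, at the cost of carrying wealth in the state.
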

\begin{proof}
See Appendix \ref{proof_theorem_42}.
\end{proof}

Theorem \ref{theorem_42} shows that the scenario-conditional total value of attention is governed by a correction--fluctuation tradeoff. Learning creates a correction benefit by reducing the effect of the initial position gap, while subsequent belief revisions generate a fluctuation cost. Holding $t$ and $v$ fixed, the posterior mean $m$ and the fixed drift $\mu$ enter only through $\Delta^{NI}$, whereas attention capacity $\kappa$ enters only through $B(t;\kappa)$ and $C(t;\kappa)$. For the price observer, the correction benefit and fluctuation cost are available in closed form; see Appendix \ref{special_case_po}.

Combining \eqref{correction_fluctuation_decomposition} with the naive investor's certainty-equivalent shortfall in \eqref{sc_ce_ni} gives
\begin{align*}
CE_{SC}^{FI}(\mu)-CE_{SC}^{AI}(\mu)
=
\frac{e^{-r(T-t)}}{\gamma}
\lsb
\lb a(t,0)-B(t;\kappa)\rb
\lb\Delta^{NI}\rb^2
+
C(t;\kappa)
\rsb>0.
\end{align*}
Relative to the naive investor, learning therefore reduces the component of the shortfall associated with the inherited position gap from
$e^{-r(T-t)}a(t,0)\lb\Delta^{NI}\rb^2/\gamma$
to
\begin{align*}
\frac{e^{-r(T-t)}}{\gamma}
\lb a(t,0)-B(t;\kappa)\rb
\lb\Delta^{NI}\rb^2.
\end{align*}
The resulting reduction is
$\frac{e^{-r(T-t)}}{\gamma}
B(t;\kappa)\lb\Delta^{NI}\rb^2,$
which motivates the term correction benefit. At the same time, ongoing belief updating contributes the additional shortfall
$
\frac{e^{-r(T-t)}}{\gamma}C(t;\kappa),
$
which motivates the term fluctuation cost. This becomes particularly clear when $\Delta^{NI}=0$. In that case, there is no initial position gap to correct and hence no correction benefit, while the fluctuation cost remains strictly positive. This immediately yields 
\begin{align*}
P^{AI}_{SC}(\mu,\kappa)=-\frac{e^{-r(T-t)}}{\gamma}C(t;\kappa)<0.
\end{align*}
Thus, scenario-conditional value is strictly negative when the naive investor’s trading strategy already coincides with the full-information strategy.

Moreover, the quadratic structure of \eqref{correction_fluctuation_decomposition} has a direct interpretation. The fluctuation cost $C(t;\kappa)$ is independent of the fixed-drift scenario and uniformly bounded across attention capacities, whereas the correction benefit $B(t;\kappa)(\Delta^{NI})^2$ increases with the squared initial position gap. Hence, when the naive investor’s trading strategy is already close to the full-information benchmark, there is little scope for correction and the fluctuation cost can dominate. As the fixed drift moves farther from the zero-gap scenario, the potential correction benefit grows quadratically and eventually outweighs this cost.

To characterize the sign of scenario-conditional value, we introduce the underperformance region
\begin{align*}
\mathcal{D}(t,\kappa):=\left\{\mu\in\mathbb{R}:\left|\Delta^{NI}\right|<h(t,\kappa)\right\}, \labeleq{underperformance_region}
\end{align*}
where $h(t,\kappa):=\sqrt{C(t;\kappa)/B(t;\kappa)}$ is the performance threshold. By \eqref{general_scenario_performance_condition}, the underperformance region is the interval of fixed-drift scenarios satisfying $|\Delta^{NI}|<h(t,\kappa)$. It is centered at the zero-gap scenario $\mu=q_t^{NI}m+(1-q_t^{NI})r$, where the naive investor’s trading strategy coincides with the full-information benchmark, and has radius $h(t,\kappa)$. Hence, $h(t,\kappa)$ measures how far the fixed drift can deviate from the zero-gap scenario while attention still underperforms the naive strategy. Attention capacity affects the region only through its radius.

Theorem \ref{theorem_42} characterizes the scenario-conditional value of information pointwise in the fixed drift $\mu$. Combining this representation with the current belief $\nu$ yields the full distribution of scenario-conditional value across drift scenarios, which provides the key diagnostic for assessing downside exposure and performance dispersion across scenarios.

\begin{Proposition}\label{propo_43}
Let $(t,x,m,v)\in[0,T)\times\mathcal{O}$ and $\kappa\in[0,\infty)$ be fixed. Then
\begin{align*}
P_{SC}^{AI}(\mu,\kappa)\overset{d}{=}\frac{e^{-r(T-t)}}{\gamma}\left[B(t;\kappa)Z_t^2-C(t;\kappa)\right],
\labeleq{scenario_value_distribution}
\end{align*}
where
\begin{align*}
Z_t\sim\N\lb-\lb1-q_t^{NI}\rb(m-r),v\rb.
\end{align*}
In particular,
\begin{align*}
\mathbb{E}_{m,v}\left[P_{SC}^{AI}(\mu,\kappa)\right]&=\frac{e^{-r(T-t)}}{\gamma}\left[B(t;\kappa)\left(v+\lb1-q_t^{NI}\rb^2(m-r)^2\right)-C(t;\kappa)\right],\\
\operatorname{Var}_{m,v}\left(P_{SC}^{AI}(\mu,\kappa)\right)&=\frac{e^{-2r(T-t)}}{\gamma^2}B(t;\kappa)^2\left[2v^2+4v\lb1-q_t^{NI}\rb^2(m-r)^2\right]. \labeleq{mean_variance_p_sc}
\end{align*}
\end{Proposition}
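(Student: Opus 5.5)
This is a direct consequence of Theorem \ref{theorem_42}, pushed forward under the current belief $\nu=\N(m,v)$. By \eqref{correction_fluctuation_decomposition}, with $t$, $v$ and $\kappa$ held fixed, the map $\mu\mapsto P^{AI}_{SC}(\mu,\kappa)$ depends on $\mu$ only through $\Delta^{NI}$. It is the deterministic quadratic function $\frac{e^{-r(T-t)}}{\gamma}[B(t;\kappa)(\Delta^{NI})^2-C(t;\kappa)]$. The coefficients $B(t;\kappa)$ and $C(t;\kappa)$ are constants that do not involve $\mu$, since they come from the ODE system \eqref{ode_system_correction_fluctuation} driven only by the deterministic variance path. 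Hence it suffices to determine the law of $\Delta^{NI}$ when $\mu\sim\nu$.

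First, from \eqref{trading_strategy_gaps}, $\Delta^{NI}=q_t^{NI}(m-r)-(\mu-r)$ is affine in $\mu$ with slope $-1$. Writing $\mu=m+\sqrt{v}\,\varepsilon$ with $\varepsilon\sim\N(0,1)$ gives
\[
\Delta^{NI}=-(1-q_t^{NI})(m-r)-\sqrt v\,\varepsilon \sim \N\bigl(-(1-q_t^{NI})(m-r),v\bigr),
\]
which is exactly the law of $Z_t$. Substituting $Z_t$ for $\Delta^{NI}$ in the quadratic map yields \eqref{scenario_value_distribution} as an equality in distribution under $\nu$. The sign of the mean is immaterial, because only $Z_t^2$ enters.

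Second, for the moments, let $\theta:=-(1-q_t^{NI})(m-r)$ denote the mean of $Z_t$. I would use the standard Gaussian facts $\E[Z_t^2]=v+\theta^2$ and $\Var(Z_t^2)=\E[Z_t^4]-(v+\theta^2)^2$. With $\E[Z_t^4]=\theta^4+6\theta^2v+3v^2$, this gives $\Var(Z_t^2)=2v^2+4v\theta^2$. Linearity then delivers the stated mean, and scaling by $\bigl(e^{-r(T-t)}B(t;\kappa)/\gamma\bigr)^2$ delivers the stated variance, since the constant $C(t;\kappa)$ drops out of the variance.

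No step is a genuine obstacle. The only point needing care is the interpretation: the statement is about the law of the random variable $\mu\mapsto P^{AI}_{SC}(\mu,\kappa)$ under $\nu$, and its measurability follows from its being a polynomial in $\mu$.
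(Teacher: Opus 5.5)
Your proposal is correct and follows the paper's proof: both substitute the law $\Delta^{NI}\sim\N\bigl(-(1-q_t^{NI})(m-r),v\bigr)$ under $\nu$ into the quadratic representation of Theorem~\ref{theorem_42}. Your explicit Gaussian moment computation ($\E[Z_t^2]=v+\theta^2$, $\Var(Z_t^2)=2v^2+4v\theta^2$) correctly fills in the mean and variance formulas, which the paper leaves implicit.
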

\begin{proof}
See Appendix \ref{proof_propo_43}.
\end{proof}
Proposition \ref{propo_43} shows that the current-belief distribution of scenario-conditional value is an affine transformation of a squared Gaussian variable. Its lower bound is the negative fluctuation cost $-e^{-r(T-t)}C(t;\kappa)/\gamma$, while its right tail is unbounded. The coefficients $B(t;\kappa)$ and $C(t;\kappa)$ play fundamentally different roles in this distribution. By \eqref{mean_variance_p_sc}, holding $B(t;\kappa)$ fixed, the fluctuation cost $C(t;\kappa)$ lowers the expectation of scenario-conditional value without affecting its variance, thereby shifting the entire distribution to the left without changing its dispersion. In contrast, $B(t;\kappa)$ affects both the expectation and the variance by scaling the squared Gaussian component. Holding $C(t;\kappa)$ fixed, a larger $B(t;\kappa)$ raises scenario-conditional value in every nonzero-gap scenario, with larger gains in scenarios with larger $|\Delta^{NI}|$, while leaving the lower endpoint unchanged. The resulting increase in expectation reflects larger correction gains on average, while the increase in variance reflects a greater spread toward higher scenario-conditional values, since the lower endpoint remains fixed and a larger $B(t;\kappa)$ stretches the distribution away from that endpoint.

Using \eqref{underperformance_region}, the probability that the scenario-conditional total value of attention is negative is
\begin{align*}
\PP\lb P_{SC}^{AI}(\mu,\kappa)<0\rb &=\nu\lb\mathcal{D}(t,\kappa)\rb \\
&=\Phi\left(\frac{-\lb1-q_t^{NI}\rb(m-r)+h(t,\kappa)}{\sqrt{v}}\right)-\Phi\left(\frac{-\lb1-q_t^{NI}\rb(m-r)-h(t,\kappa)}{\sqrt{v}}\right),
\labeleq{negative_region_probability}
\end{align*}
where $\Phi$ denotes the standard normal distribution function. We refer to this quantity as the underperformance probability. Equation \eqref{negative_region_probability} provides a direct diagnostic for the attentive investor by quantifying the probability, under the current belief, that attention reduces scenario-conditional value relative to the naive strategy. Although attention capacity $\kappa$ is exogenous in the present model, this probability can be used to compare feasible capacity levels or information technologies according to their scenario-conditional downside exposure. 

Hence, this scenario-conditional diagnostic complements the ex ante value $P_{EA}^{AI}(\kappa)$ rather than replacing it. The ex ante value is not the current-belief expectation of $P_{SC}^{AI}(\mu,\kappa)$ because ex ante welfare aggregates scenario-conditional expected utilities before converting them into a certainty equivalent. Nevertheless, both evaluations depend on the same coefficients $B(t;\kappa)$ and $C(t;\kappa)$. To be more precise, the ex ante total value of attention can be recovered directly from the scenario-conditional values. From the definitions and using
 $U^{-1}(y)=-\ln(-\gamma y)/\gamma$, the ex ante total value of attention can be written as
\begin{align*}
P_{EA}^{AI}(\kappa)=-\frac{e^{-r(T-t)}}{\gamma}\ln\left(\frac{\int_{\mathbb{R}}J^{NI}(t,x,m,v;\mu)\exp\left(-\gamma e^{r(T-t)}P_{SC}^{AI}(\mu,\kappa)\right)\nu(d\mu)}{\int_{\mathbb{R}}J^{NI}(t,x,m,v;\mu)\nu(d\mu)}\right),
\end{align*}
which, with standard Gaussian calculations, yields
\begin{align*}
P_{EA}^{AI}(\kappa)=\frac{e^{-r(T-t)}}{\gamma}\left(\frac{1}{2}\ln\left(1+2vB(t;\kappa)\right)-C(t;\kappa)\right). \labeleq{ex_ante_attention_value_aggreagated}
\end{align*}
Comparing \eqref{ex_ante_attention_value_aggreagated} with \eqref{ex_ante_attention_value_explicit} from the previous subsection yields
\begin{align*}
b^{AI}(t,v;\kappa)=\frac{1}{2}\ln\left(1+2vB(t;\kappa)\right)-C(t;\kappa).
\end{align*}
This identity directly bridges the scenario-conditional and ex ante values of information and provides a structural interpretation of the value $b^{AI}(t,v;\kappa)$ generated by subsequent learning in Section \ref{chap3}. The scenario-conditional correction benefit aggregates into the nonlinear term $\frac{1}{2}\ln\left(1+2vB(t;\kappa)\right)$, while the fluctuation cost enters directly as $C(t;\kappa)$. Since $b^{AI}(t,v;\kappa)>0$, the aggregated correction benefit always exceeds the fluctuation cost. 

The preceding results characterize scenario-conditional value at a fixed horizon and capacity. We next examine the short-horizon behavior at fixed finite capacity.

\begin{Proposition}\label{propo_46}
Let $(x,m,v)\in\mathcal{O}$, $\mu\in\mathbb{R}$, and $\kappa>0$ be fixed, and let $\tau:=T-t$ denote the remaining horizon. The following second-order scaled limits hold:
\begin{align*}
\lim_{\tau\downarrow0}\frac{P^{PO}_{SC}(\mu;\tau)}{\tau^2}&=-\frac{v}{4\gamma\sigma^4}\lb v-2(m-\mu)^2\rb, \labeleq{short_horizon_po}\\
\lim_{\tau\downarrow0}\frac{P^{AI}_{SC}(\mu,\kappa;\tau)-P^{PO}_{SC}(\mu;\tau)}{\tau^2}&=-\frac{\kappa v}{4\gamma\sigma^2f(v)}\lb v-2(m-\mu)^2\rb. \labeleq{short_horizon_ai}
\end{align*}
Consequently, if $(m-\mu)^2<\frac{v}{2}$, then there exists $\varepsilon>0$ such that, for every $\tau\in(0,\varepsilon)$,
\begin{align*}
P^{AI}_{SC}(\mu,\kappa;\tau)<P^{PO}_{SC}(\mu;\tau)<0,
\end{align*}
where the arguments $\tau$ indicate the corresponding problem with terminal date $t+\tau$.
\end{Proposition}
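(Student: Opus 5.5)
The plan is to derive both limits from the correction--fluctuation decomposition of Theorem \ref{theorem_42}. Since the price observer is the case $\kappa=0$, the formula \eqref{correction_fluctuation_decomposition} covers $P^{PO}_{SC}$ and $P^{AI}_{SC}$ at once. With $\tau=T-t$, I write
\begin{align*}
P^{AI}_{SC}(\mu,\kappa;\tau)=\frac{e^{-r\tau}}{\gamma}\lsb B(t;\kappa)\lb\Delta^{NI}\rb^2-C(t;\kappa)\rsb.
\end{align*}
Everything then reduces to a second-order expansion of $B(t;\kappa)$ and $C(t;\kappa)$ in $\tau$, together with the zeroth-order limit of $\Delta^{NI}$. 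The initial state $(x,m,v)$ is held fixed at the current date $t=T-\tau$.

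\emph{Step 1: uniform control of the coefficients.} By \eqref{filter_ai_compact}, the variance $v_s^{AI}$ is decreasing on $[t,T]$, with $|v_s^{AI}-v|\le c\,\tau$. Here $c$ depends only on $v$, $\sigma$, $\kappa$ and $f$ near $v$, and uses the $C^1$ regularity in \eqref{entropy_regular}. Hence $q_s^{AI}=1+O(\tau)$ uniformly in $s$. It follows that
\begin{align*}
\lambda_s^\kappa=\lambda_0+O(\tau),\qquad \lambda_0:=v\lb\frac{1}{\sigma^2}+\frac{\kappa}{f(v)}\rb,
\end{align*}
uniformly on $[t,T]$. Also $a(s,0)=(T-s)/(2\sigma^2)$ and $0<B(s;\kappa)<a(s,0)$, by Theorem \ref{theorem_42} applied at each intermediate date. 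Consequently $a(s,0)-B(s;\kappa)=O(\tau)$, so the quadratic term in the $B$-equation of \eqref{ode_system_correction_fluctuation} is $O(\tau^2)$ pointwise.

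\emph{Step 2: expansions.} Integrating the $B$-equation backward from $B(T)=0$ gives a first estimate $B(s)=O((T-s)^2)$. Substituting it back yields
\begin{align*}
B(t;\kappa)=\int_t^T 2\lambda_0\frac{T-s}{2\sigma^2}\,ds+O(\tau^3)=\frac{\lambda_0\tau^2}{2\sigma^2}+O(\tau^3).
\end{align*}
In the same way, the $C$-equation gives
\begin{align*}
C(t;\kappa)=\int_t^T v\lambda_0\frac{T-s}{2\sigma^2}\,ds+O(\tau^3)=\frac{v\lambda_0\tau^2}{4\sigma^2}+O(\tau^3).
\end{align*}
Since $q_t^{NI}=\sigma^2/(\sigma^2+\tau v)\to1$, we have $\Delta^{NI}\to m-\mu$. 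Combining these with $e^{-r\tau}\to1$,
\begin{align*}
\frac{P^{AI}_{SC}(\mu,\kappa;\tau)}{\tau^2}\longrightarrow \frac{\lambda_0}{4\gamma\sigma^2}\lb 2(m-\mu)^2-v\rb .
\end{align*}
Setting $\kappa=0$ gives $\lambda_0=v/\sigma^2$, which is \eqref{short_horizon_po}. Subtracting the $\kappa=0$ limit from the $\kappa>0$ limit leaves the term $\kappa v/f(v)$ in $\lambda_0$, which gives \eqref{short_horizon_ai}.

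\emph{Step 3: sign conclusion.} Suppose $(m-\mu)^2<v/2$. Then both limits are strictly negative, since $v>0$, $\kappa>0$ and $f(v)>0$ by \eqref{entropy_regular}. By the definition of a limit, there is $\varepsilon>0$ such that both scaled quantities are negative for all $\tau\in(0,\varepsilon)$. This gives $P^{PO}_{SC}<0$ and $P^{AI}_{SC}-P^{PO}_{SC}<0$ on that range.

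The main obstacle is Step 2, specifically making the $O(\tau^3)$ remainders rigorous. The ODE coefficients themselves depend on $\tau$ through $v_s^{AI}$, so one needs a Gronwall-type bound that is uniform in $\tau$. I would handle this by first establishing the a priori bound $0\le a-B\le a$, which comes from the interval statement of Theorem \ref{theorem_42}. This removes the Riccati nonlinearity, and what remains is a direct integral estimate.
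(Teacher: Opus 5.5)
Your proposal is correct. For the incremental limit \eqref{short_horizon_ai}, however, it takes a different route from the paper.

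The paper proves \eqref{short_horizon_po} much as you do, but using the closed forms \eqref{b_c_po} for $B(t;0)$ and $C(t;0)$ together with $C(q)/(1-q)^2\to 1/4$. For \eqref{short_horizon_ai} it does not expand $B$ and $C$ at all. It works instead with the exponential-moment representation of Lemma \ref{lemma_a3_aux}, via $K_\kappa(\tau)=2\sigma^2\ln\widetilde{\E}^{\mu}[\exp(\frac{1}{2\sigma^2}\int_t^{t+\tau}(\Delta_s^{AI})^2ds)]$. It expands the first two moments of $m^{AI}$ under the tilted measure to obtain the order-$\tau^2$ term of $\widetilde{\E}^{\mu}[\int(\Delta_s^{AI})^2ds]$. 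It then argues, via Gaussian fourth-moment bounds, that the variance and the cumulant correction are $o(\tau^2)$, and finally takes the difference $K_\kappa-K_0$.

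You instead handle $\kappa=0$ and $\kappa>0$ together through the Riccati system of Theorem \ref{theorem_42}:
\begin{itemize}
\item The a priori bound $0<B(s;\kappa)<a(s,0)$, which the proof of that theorem establishes for all $s\in[t,T)$, makes the quadratic term $O(\tau^2)$.
\item Nonnegativity of that term gives $0<B(s;\kappa)\le\int_s^T 2\lambda_u a(u,0)\,du=O((T-s)^2)$.
\item This yields $B(t;\kappa)=\lambda_0\tau^2/(2\sigma^2)+O(\tau^3)$ and $C(t;\kappa)=v\lambda_0\tau^2/(4\sigma^2)+O(\tau^3)$, with $\lambda_0=v(1/\sigma^2+\kappa/f(v))$.
\end{itemize}
These expansions agree with \eqref{b_c_po} at $\kappa=0$. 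They also agree with the paper's \eqref{integrated_gap_limit} through the identity $K_\kappa=2\sigma^2[(a(t,0)-B(t;\kappa))(\Delta^{NI})^2+C(t;\kappa)]$ implied by \eqref{gap_moment_decomposition}.

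What your route buys:
\begin{itemize}
\item The argument is purely deterministic, with explicit $O(\tau^3)$ remainders.
\item It avoids the moment and cumulant estimates \eqref{short_horizon_cumulant_control}, which the paper states only briefly.
\item It gives the limit of $P^{AI}_{SC}/\tau^2$ itself in one formula, from which both claimed limits follow by subtraction.
\item It makes transparent that capacity enters the short-horizon behavior only through $\lambda_0$.
\end{itemize}
The paper's route stays closer to the filter dynamics. It reads the order-$\tau^2$ terms as a drift-correction contribution versus a belief-revision variance of the position gap. Your expansion exhibits the same split as $B(t;\kappa)(\Delta^{NI})^2$ versus $C(t;\kappa)$.

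Fixing $T$ and letting $t=T-\tau$, rather than fixing $t$ and letting $T=t+\tau$ as in the statement, is harmless. The variance ODE is autonomous, and all coefficients depend on $s$ only through $T-s$.
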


\begin{proof}
See Appendix \ref{proof_propo_46}.
\end{proof}

Over short horizons, the scenario-conditional value of learning is determined by the posterior-mean error relative to posterior uncertainty. The common factor $v-2(m-\mu)^2$ captures the leading-order balance between the cost generated by belief updating and the benefit of correcting an initially misspecified belief. Auxiliary attention scales this local effect through $\kappa/f(v)$ but does not change its sign. If $(m-\mu)^2<v/2$, the initial belief is sufficiently accurate relative to its uncertainty that there is little scope for correction. The updating cost then dominates, so price observation has negative scenario-conditional value and any fixed positive finite auxiliary capacity reduces it further for sufficiently short horizons. If $(m-\mu)^2>v/2$, the initial belief error is sufficiently large relative to posterior uncertainty that the correction benefit dominates, and both price learning and auxiliary attention are locally beneficial. The boundary case $(m-\mu)^2=v/2$ is not determined by the second-order limits.

Thus, short-horizon learning is particularly valuable when the investor is confident but wrong, in the sense that the squared belief error is large relative to posterior variance. This differs from the ex ante perspective, under which greater posterior uncertainty generally increases the scope for valuable information. Scenario-conditionally, what matters locally is not uncertainty alone, but whether the fixed drift lies sufficiently far from the current posterior mean relative to that uncertainty. A relatively precise but substantially misspecified belief therefore creates strong scope for correction even over a short horizon.

\subsection{High-Capacity Convergence}\label{perfect_learning_scenario}

We now turn to the scenario-conditional implications of the perfect-learning limit established in Section \ref{chap_34}. Lemma \ref{lemma_a1} establishes that $B(t;\kappa)\to a(t,0)$ and $C(t;\kappa)\to0$ as $\kappa\to\infty$. Hence, Theorem \ref{theorem_42} immediately yields the limiting total value of attention
\begin{align*}
P^{AI,\infty}_{SC}(\mu):=\lim_{\kappa\to\infty}P^{AI}_{SC}(\mu,\kappa)=\frac{e^{-r(T-t)}}{\gamma}a(t,0)\lb\Delta^{NI}\rb^2\geq0. \labeleq{p_sc_limit}
\end{align*}
Thus, perfect learning attains the full-information benchmark in every fixed-drift scenario, and the limiting gain relative to the naive investor is determined entirely by the initial position gap. It is zero precisely when $\Delta^{NI}=0$.

The relevant question is therefore not what the limiting benchmark is, but how quickly it is approached and whether sufficiently large capacity dominates a given finite-capacity benchmark uniformly across drift scenarios. To characterize this convergence, let $r_\rho(\kappa)$ denote the rate associated with the respective attention technology, given by
\begin{align*}
r_\rho(\kappa):=
\begin{cases}
\kappa^{-1},&1\leq\rho<3,\\
\frac{\ln(1+\kappa)}{\kappa},&\rho=3,\\
\kappa^{-2/(\rho-1)},&\rho>3,
\end{cases}
\end{align*}
where $\rho=1$ corresponds to Shannon entropy and $\rho>1$ to Tsallis entropy. 

\begin{Proposition}\label{propo_high_capacity_rates}
Let $(t,x,m,v)\in[0,T)\times\mathcal{O}$, $\bar\kappa\geq0$, and $\rho\geq1$ be fixed. There exist constants $0<K_1\leq K_2<\infty$ and $\kappa_0>\bar\kappa$, independent of $\kappa$ and $\mu$, such that every $\kappa\geq\kappa_0$ satisfies
\begin{align*}
K_1r_\rho(\kappa)
\leq
\sup_{\mu\in\mathbb{R}}
\frac{
P_{SC}^{AI,\infty}(\mu)-P_{SC}^{AI}(\mu,\kappa)
}{
P_{SC}^{AI,\infty}(\mu)-P_{SC}^{AI}(\mu,\bar\kappa)
}
\leq
K_2r_\rho(\kappa).
\labeleq{voi_high_capacity_rates}
\end{align*}
Since $r_\rho(\kappa)\to0$, there exists $\widehat\kappa(\bar\kappa)>\bar\kappa$, independent of $\mu$, such that
\begin{align*}
P_{SC}^{AI}(\mu,\kappa)
>
P_{SC}^{AI}(\mu,\bar\kappa),
\qquad
\mu\in\mathbb{R},
\qquad
\kappa\geq\widehat\kappa(\bar\kappa).
\labeleq{eventual_capacity_dominance}
\end{align*}
In particular, $\bar\kappa=0$ gives uniform eventual dominance over price observation.
\end{Proposition}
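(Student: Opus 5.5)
By Theorem \ref{theorem_42} and \eqref{p_sc_limit}, the numerator in \eqref{voi_high_capacity_rates} equals
\begin{align*}
P_{SC}^{AI,\infty}(\mu)-P_{SC}^{AI}(\mu,\kappa)=\frac{e^{-r(T-t)}}{\gamma}\lsb D(t;\kappa)\lb\Delta^{NI}\rb^2+C(t;\kappa)\rsb,\qquad D(s;\kappa):=a(s,0)-B(s;\kappa),
\end{align*}
and the denominator has the same form at $\bar\kappa$. Theorem \ref{theorem_42} gives $D(t;\bar\kappa)>0$ and $C(t;\bar\kappa)>0$, so the denominator is bounded below by a positive constant. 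As $\mu$ ranges over $\mathbb{R}$, $y:=(\Delta^{NI})^2$ ranges over $[0,\infty)$. The ratio $(\alpha y+c)/(\bar\alpha y+\bar c)$ is monotone in $y$, so its supremum equals $\max\{\alpha/\bar\alpha,\,c/\bar c\}$. Hence it suffices to prove two bounds, with constants independent of $\mu$ and $\kappa$:
\begin{align*}
C(t;\kappa)\asymp r_\rho(\kappa),\qquad 0<D(t;\kappa)\lesssim r_\rho(\kappa).
\end{align*}
Then $K_1$ comes from the $C$-term alone and $K_2$ from the maximum. The dominance statement \eqref{eventual_capacity_dominance} then follows by choosing $\widehat\kappa$ with $K_2r_\rho(\kappa)<1$. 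For such $\kappa$ the shortfall from perfect learning is strictly smaller than at $\bar\kappa$ for every $\mu$, which is exactly $P_{SC}^{AI}(\mu,\kappa)>P_{SC}^{AI}(\mu,\bar\kappa)$.

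\textbf{Step 1: variance path.} From \eqref{filter_ai_compact} with the specifications \eqref{shannon_entropy_specification}--\eqref{tsallis_entropy_specification}, the path satisfies $\dot v=-v^2/\sigma^2-c_\rho\kappa v^{(\rho+1)/2}$, where $c_\rho>0$ is a constant. The Shannon case $\rho=1$ gives $\dot v\approx-2\kappa v$. By comparison with the pure attention ODE, I expect two-sided bounds with $u:=s-t$:
\begin{align*}
v_s^{AI}\asymp v e^{-2\kappa u}\ \ (\rho=1),\qquad v_s^{AI}\asymp\min\{v,(\kappa u)^{-2/(\rho-1)}\}\ \ (\rho>1).
\end{align*}
The price term $v^2/\sigma^2$ only helps the upper bound and is dominated in the lower bound for large $\kappa$. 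Integrating then gives
\begin{align*}
\int_t^T v_s^{AI}\,ds\asymp r_\rho(\kappa).
\end{align*}
The split at $\rho=3$ is exactly whether $2/(\rho-1)$ is larger than, equal to, or smaller than $1$.

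\textbf{Step 2: the function $D$.} Rewrite \eqref{ode_system_correction_fluctuation} in terms of $D$:
\begin{align*}
\dot D=-\frac{1}{2\sigma^2}+2\lambda_s^\kappa D-2q_s^{AI}v_s^{AI}\lambda_s^\kappa D^2,\qquad D(T)=0,\qquad 0<D<a(s,0).
\end{align*}
Dropping the nonpositive quadratic term and comparing gives an upper bound:
\begin{align*}
D(s)\le\frac{1}{2\sigma^2}\int_s^T\exp\lb-2\int_s^{s'}\lambda_u^\kappa\,du\rb ds'.
\end{align*}
Since $q\to1$ uniformly and $\lambda^\kappa_u=-q_u\dot v_u/v_u$, the exponential factor is at most a constant times $(v_{s'}/v_s)^{2}$. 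Hence $D(t)\lesssim v^{-2}\int_t^T v_s^2\,ds\lesssim\int_t^T v_s\,ds\asymp r_\rho$.

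\textbf{Step 3: the fluctuation cost $C$.} Here $C(t)=\int_t^T q_sv_s\lambda_sD_s\,ds$. I need $\lambda_sD_s\asymp1$ away from $T$. The upper bound follows from Step 2. For the lower bound, the quadratic term is of relative order $v_sD_s=o(1)$, so on $[t,T-\delta]$ the equation is a fast relaxation towards the quasi-steady state $1/(4\sigma^2\lambda_s)$. Integrating $q_sv_s$ then gives $C(t)\asymp\int_t^{T-\delta}v_s\,ds\asymp r_\rho(\kappa)$.

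The main obstacle is Step 3: making the quasi-steady-state lower bound $\lambda_sD_s\gtrsim1$ rigorous and uniform in $\kappa$. The difficulty is near $s=t$, where for $\rho>3$ the rate $\lambda_s$ degenerates as $v_s$ grows towards $v$. I would handle this with an explicit sub-solution of the Riccati equation of the form $\epsilon\min\{T-s,\,1/\lambda_s\}$ and verify the differential inequality on both regimes.
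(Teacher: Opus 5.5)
Your reduction to $\max\{D(t;\kappa)/D(t;\bar\kappa),\,C(t;\kappa)/C(t;\bar\kappa)\}$ and your variance asymptotics match the paper. However, both coefficient estimates rest on comparisons that fail as stated.

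In Step 2 the direction is reversed. Since $-2q_s^{AI}v_s^{AI}\lambda_s^\kappa D^2\le0$, let $\tilde D$ solve the linearized equation and set $E:=D-\tilde D$. Then $\dot E\le2\lambda_s^\kappa E$ with $E(T)=0$, which forces $E\ge0$. So dropping the quadratic term gives a \emph{lower} bound on $D$, not an upper bound. For the upper bound you must keep the quadratic term in the variation-of-constants formula, $D(s)=\frac{1}{2\sigma^2}\int_s^T\exp\bigl(-2\int_s^{s'}\lambda_u^\kappa(1-w_uD(u))\,du\bigr)ds'$ with $w:=q^{AI}v^{AI}$. Then use $\int_s^{s'}\lambda_u^\kappa w_uD(u)\,du\le C(t;\kappa)<\frac12$ to get $D\le e\,\tilde D$; this is how the paper handles the nonlinear term. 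Also, $q^{AI}\to1$ is not uniform, since $q_t^{AI}=q_t^{NI}$ for every $\kappa$. Your bound $\exp(-2\int_s^{s'}\lambda)\lesssim(v_{s'}/v_s)^2$ is still true, but the reason is the exact identity $\lambda_s^\kappa=w_s/\sigma^2-\frac{d}{ds}\ln w_s$, which gives $\exp(-2\int_s^{s'}\lambda)=(w_{s'}/w_s)^2\exp(-\frac{2}{\sigma^2}\int_s^{s'}w)$.

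In Step 3 the pointwise bound $\lambda_sD_s\lesssim1$ is false for $\rho\ge5$. At $s-t=1/\kappa$ one has $\lambda_s\asymp\kappa$, whereas $D_s\asymp\int_s^T(v_{s'}/v_s)^2ds'$ is of order $\kappa^{-4/(\rho-1)}$ for $\rho>5$ and $\kappa^{-1}\ln\kappa$ for $\rho=5$. So $\lambda_sD_s$ grows like $\kappa^{1-4/(\rho-1)}$, respectively $\ln\kappa$. The estimate $C(t;\kappa)\lesssim\int_t^Tv_s\,ds$ survives, but it needs a non-pointwise argument. The paper uses the identity $\frac{1}{2\sigma^2}\int_t^Tw_s\,ds=C(t;\kappa)-\frac12\ln\bigl(1-2q_t^{NI}vD(t;\kappa)\bigr)$, obtained from \eqref{ex_ante_attention_value_aggreagated} and \eqref{b_capacity_representation}. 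It gives $C\le\frac{1}{2\sigma^2}\int w$ immediately, and for $\rho\ge3$ also the lower bound, because $D$ is then of smaller order than $\int v$. Alternatively, insert the corrected Step 2 bound into $C=\int w\lambda D$ and apply Fubini with $\lambda_s/w_s=\frac{d}{ds}(1/v_s)$; this gives $C\lesssim\int_t^Tw_{s'}^2/v_{s'}\,ds'\le\int_t^Tv$.

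Conversely, the lower bound you single out as the main obstacle is the easy direction. Your linear comparison, correctly signed, gives $D_s\ge\frac{1}{2\sigma^2}\int_s^Te^{-2\int_s^{s'}\lambda}ds'$. Since $\lambda\le\ell:=-\dot v/v$ and $\ell$ is nonincreasing in $s$ for these power laws, $\lambda_sD_s\ge\frac{q_t^{NI}}{4\sigma^2}(1-e^{-2\ell_T\delta})$ on $[t,T-\delta]$, with $\ell_T$ bounded below for large $\kappa$. No sub-solution is needed. Moreover, $\lambda$ does not degenerate near $s=t$: it is largest there, $\lambda_t\asymp\kappa$, and for $\rho>1$ it decays like $1/(s-t)$ afterwards.
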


\begin{proof}
See Appendix \ref{proof_propo_high_capacity_rates}.
\end{proof}

The normalized shortfall in \eqref{voi_high_capacity_rates} has a direct interpretation. For each fixed-drift scenario $\mu$, the numerator measures the remaining gap between capacity $\kappa$ and perfect learning, while the denominator measures the corresponding gap at the finite benchmark capacity $\bar\kappa$. Their ratio therefore gives the remaining shortfall relative to the benchmark shortfall. Taking the supremum over $\mu$ selects the scenario in which this relative shortfall is largest. Hence, the quantity in \eqref{voi_high_capacity_rates} measures the worst-case remaining fraction of the benchmark gap across all fixed-drift scenarios. Its convergence to zero means that the gap to perfect learning vanishes uniformly relative to the fixed-capacity benchmark. Proposition \ref{propo_high_capacity_rates} gives the sharp rate at which the worst-case normalized shortfall from perfect learning vanishes. Under Shannon entropy and Tsallis entropy with $1<\rho<3$, this shortfall decreases at rate $\kappa^{-1}$. At $\rho=3$, the rate is $\ln(1+\kappa)/\kappa$, while $\rho>3$ yields the slower polynomial rate $\kappa^{-2/(\rho-1)}$.

The result also strengthens the limiting comparison in this subsection. At every fixed finite capacity $\bar{\kappa}$, sufficiently large attention capacity delivers a strictly larger scenario-conditional value of information in every drift scenario. In particular, sufficiently large capacity uniformly dominates price-only learning. This eventual dominance does not imply that the scenario-conditional value of attention is globally increasing in $\kappa$. At intermediate capacities, the joint effect of the correction benefit and fluctuation cost need not be monotone in capacity.

In a related finite-horizon portfolio model, \citet{yin_2025} study how Shannon channel capacity affects the investment strategy and its convergence. Proposition \ref{propo_high_capacity_rates} instead establishes scenario-uniform convergence rates for the normalized shortfall from perfect learning, covers both Shannon and Tsallis information technologies, and identifies the rate transition at $\rho=3$.

\section{Numerical Experiments and Decision Implications} \label{numerics}

An investor who can process an auxiliary signal faces a practical question: when does additional information improve scenario-conditional performance, and when does it instead amplify costly trading fluctuations? We address this question through four related decisions: how capacity changes the correction benefit and fluctuation cost for different initial position gaps, how much capacity is required to control scenario-conditional underperformance, how the investment horizon changes the relevant performance comparison, and how the information-processing technology affects convergence to perfect learning.

\begin{table}[htbp]
    \centering
      \caption{Baseline parameters used in the numerical analysis.}
          \begin{tabular}{ccc}
        \toprule
        Parameter & Value & Interpretation \\
        \midrule
        $t$ & $0$ & Initial date \\
        $T$ & $5$ & Maturity \\
        $m$ & $0.08$ & Initial posterior mean \\
        $v$ & $0.09$ & Initial posterior variance \\
        $\kappa$ & $0.5$ & Active attention capacity \\
        $\sigma$ & $0.3$ & Stock-return volatility \\
        $r$ & $0.04$ & Risk-free rate \\
        $\mu$ & $0.09$ & {Fixed drift scenario} \\
        $\gamma$ & $0.02$ & Absolute risk aversion \\
        $\rho$ & $1$ & Entropy index (Shannon) \\
        \bottomrule
    \end{tabular}
    \label{basecase_parameters}
\end{table}

These results are comparative-static illustrations, and all values are reported in certainty-equivalent units. All numerics are performed in R, and the code is available upon request. Unless stated otherwise, parameters are fixed at the baseline values in Table \ref{basecase_parameters}. The five-year horizon, expected risk premium of four percentage points, and stock-return volatility $\sigma=0.3$ provide an illustrative investment setting. The prior variance $v=0.09$ reflects substantial drift uncertainty, with $\sigma^2/v=1$ year providing a natural price-learning time scale. At the Shannon benchmark $\kappa=0.5$, the initial auxiliary precision $2\kappa/v$ equals the price-observation precision $1/\sigma^2$. The fixed drift $\mu=0.09$ lies only one percentage point above the posterior mean, but the uncertainty adjustment $q_t^{NI}=1/6$ yields an initial position gap $\Delta^{NI}=-0.0433$. Thus, the naive investor initially holds a substantially smaller risky position than the full-information investor. The risk-aversion parameter $\gamma=0.02$ determines the monetary scale of the reported values without affecting their signs.

\subsection{Correction Benefit and Fluctuation Cost}

The first question is: how does additional attention capacity affect performance when the current trading position is at different distances from its scenario-specific benchmark? Figure \ref{tradeoffs_p_sc} answers this question by varying capacity at three fixed initial position gaps. Panel (a) separates the correction benefit from the fluctuation cost, and Panel (b) reports their net effect on the scenario-conditional total value of attention.

\begin{figure}[t]
\centering
\begin{subfigure}{0.48\linewidth}
\centering
\includegraphics[width=\linewidth]{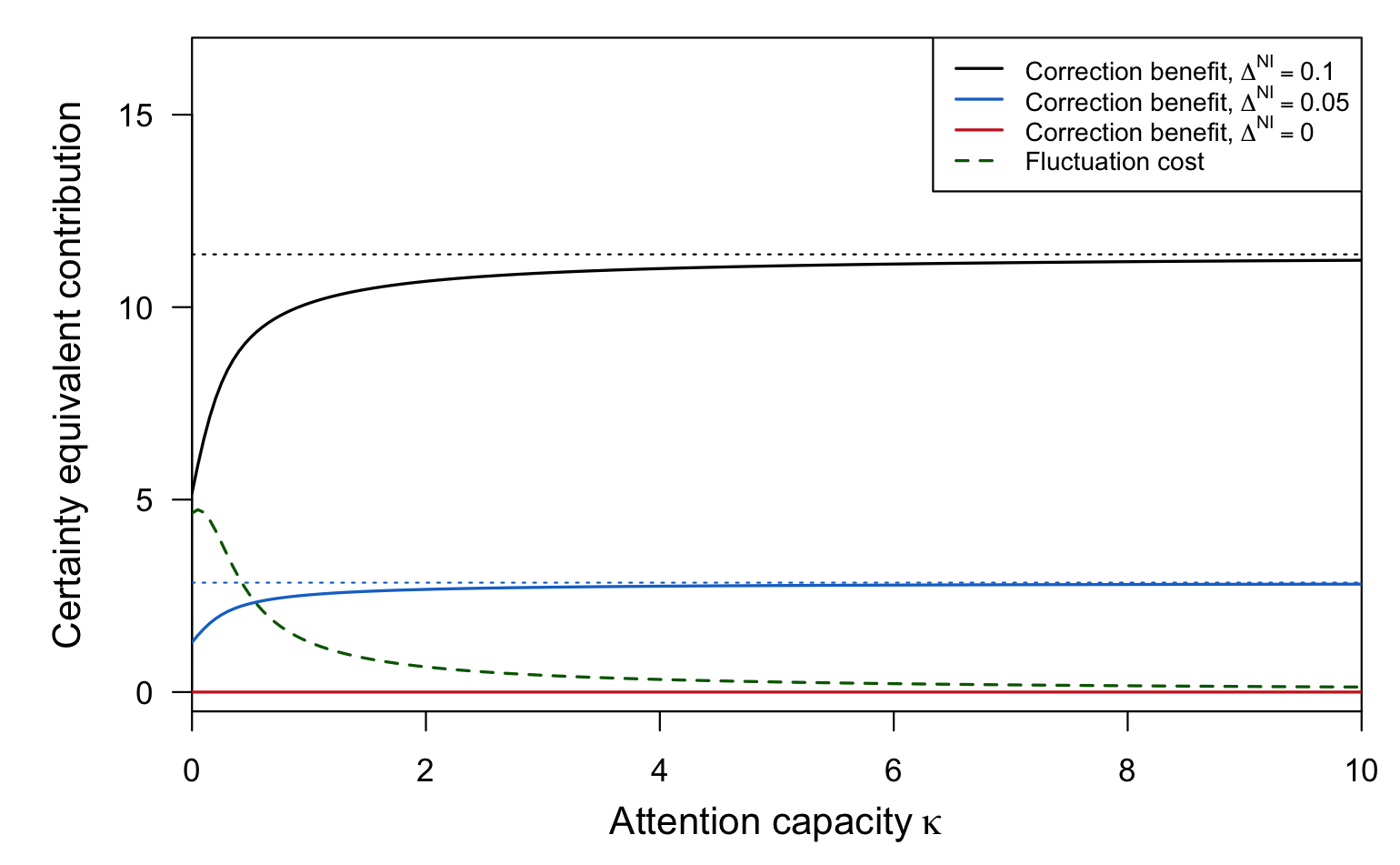}
\caption{Correction benefit and fluctuation cost.}
\label{figure_1a}
\end{subfigure}
\hfill
\begin{subfigure}{0.48\linewidth}
\centering
\includegraphics[width=\linewidth]{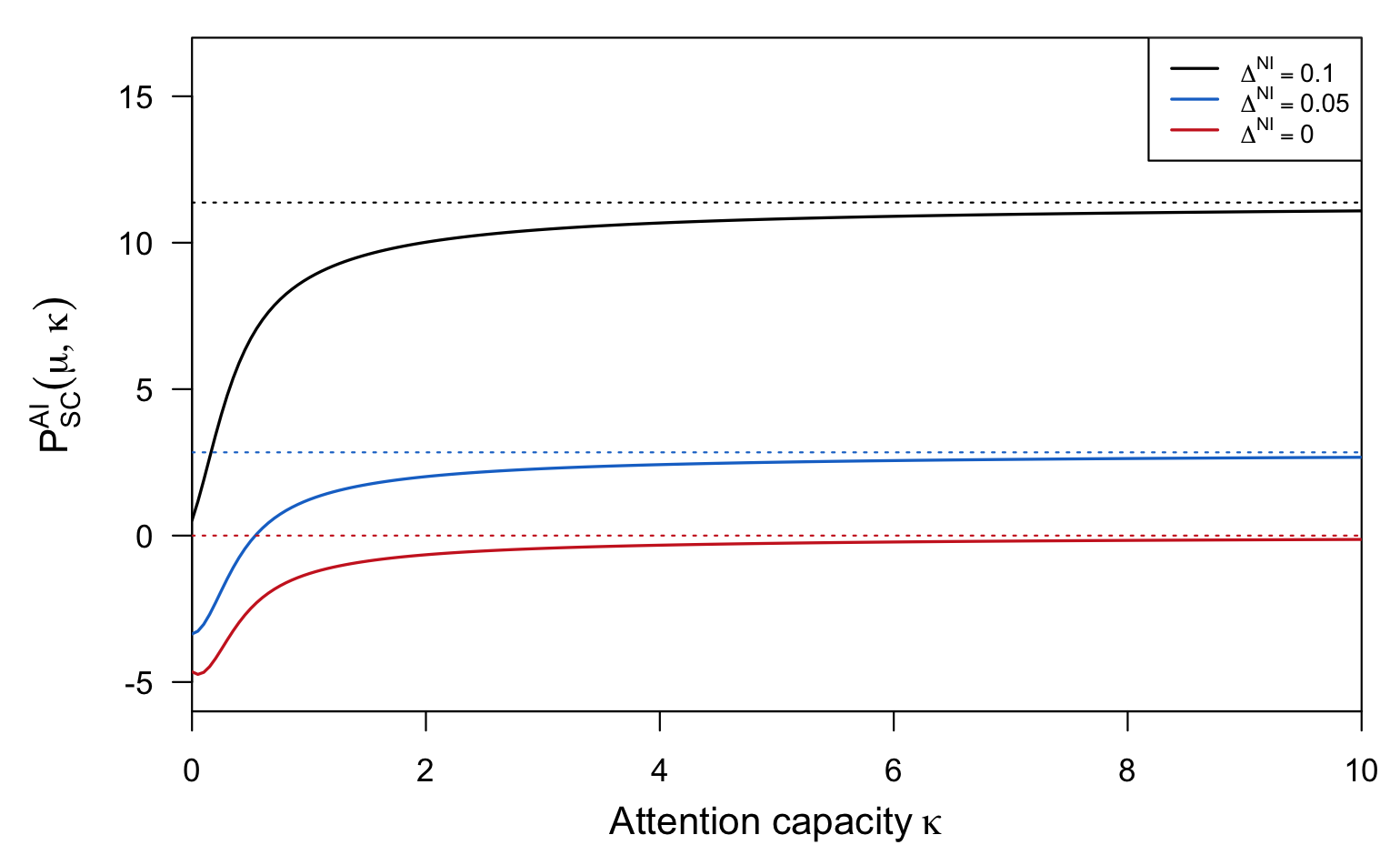}
\caption{Scenario-conditional total value of attention.}
\label{figure_1b}
\end{subfigure}
\caption{Correction benefit, fluctuation cost, and scenario-conditional total value of attention over capacity $\kappa$ for fixed initial position gaps $\Delta^{NI}\in\{0,0.05,0.1\}$. Panel (a) reports the two certainty-equivalent components in Theorem \ref{theorem_42}. Panel (b) reports their net value $P^{AI}_{SC}(\mu,\kappa)$. Capacity $\kappa=0$ corresponds to price observation, dotted horizontal lines mark the perfect-learning limits, and all remaining parameters are fixed at their baseline values.
}
\label{tradeoffs_p_sc}
\end{figure}

From Panel (a), we find that the correction benefit depends strongly on the initial position gap. It is zero when $\Delta^{NI}=0$, and the curve for $\Delta^{NI}=0.1$ is four times the curve for $\Delta^{NI}=0.05$ because the benefit is proportional to $(\Delta^{NI})^2$. Panel (a) also reveals a low-capacity nonmonotonicity in the fluctuation cost: introducing a weak auxiliary signal first raises the cost slightly, whereas further capacity lowers it toward zero.

From Panel (b), we find that these two effects produce different performance patterns across initial positions. When $\Delta^{NI}=0$, the total value remains negative at every displayed finite capacity. When $\Delta^{NI}=0.05$, the value is initially negative and becomes positive within the displayed capacity range. When $\Delta^{NI}=0.1$, the value is positive at every displayed positive capacity. Thus, the same capacity can reduce performance for a position that is already aligned with its scenario-specific benchmark and improve performance for a position with a sufficiently large initial gap.

The capacity effects reflect the speed at which posterior uncertainty is resolved. Greater capacity allows the attentive investor to correct an inherited position gap more rapidly, which raises the correction benefit when $\Delta^{NI}\neq0$. At the same time, faster resolution of uncertainty shortens the period over which posterior revisions generate fluctuations in the investment strategy. In this calibration, this reduction in the duration of belief-driven fluctuations dominates their greater initial intensity, causing the fluctuation cost to decline with capacity. 


The decision implication is that attention capacity should be evaluated jointly with the position it is expected to correct. A small increase in capacity need not improve scenario-conditional performance when the inherited position is already close to its benchmark. 

\subsection{Distribution of the Scenario-Conditional Total Value of Attention}

The second question is: how much capacity is required to keep the probability of underperforming the fixed-belief benchmark below a specified target? Figure \ref{distribution_p_sc} first shows how this probability changes with capacity and the posterior risk premium. Table \ref{underperformance_table} then translates selected probability targets into capacity requirements across posterior variances.

\begin{figure}[thbp]
\centering
\begin{subfigure}{0.48\linewidth}
\centering
\includegraphics[width=\linewidth]{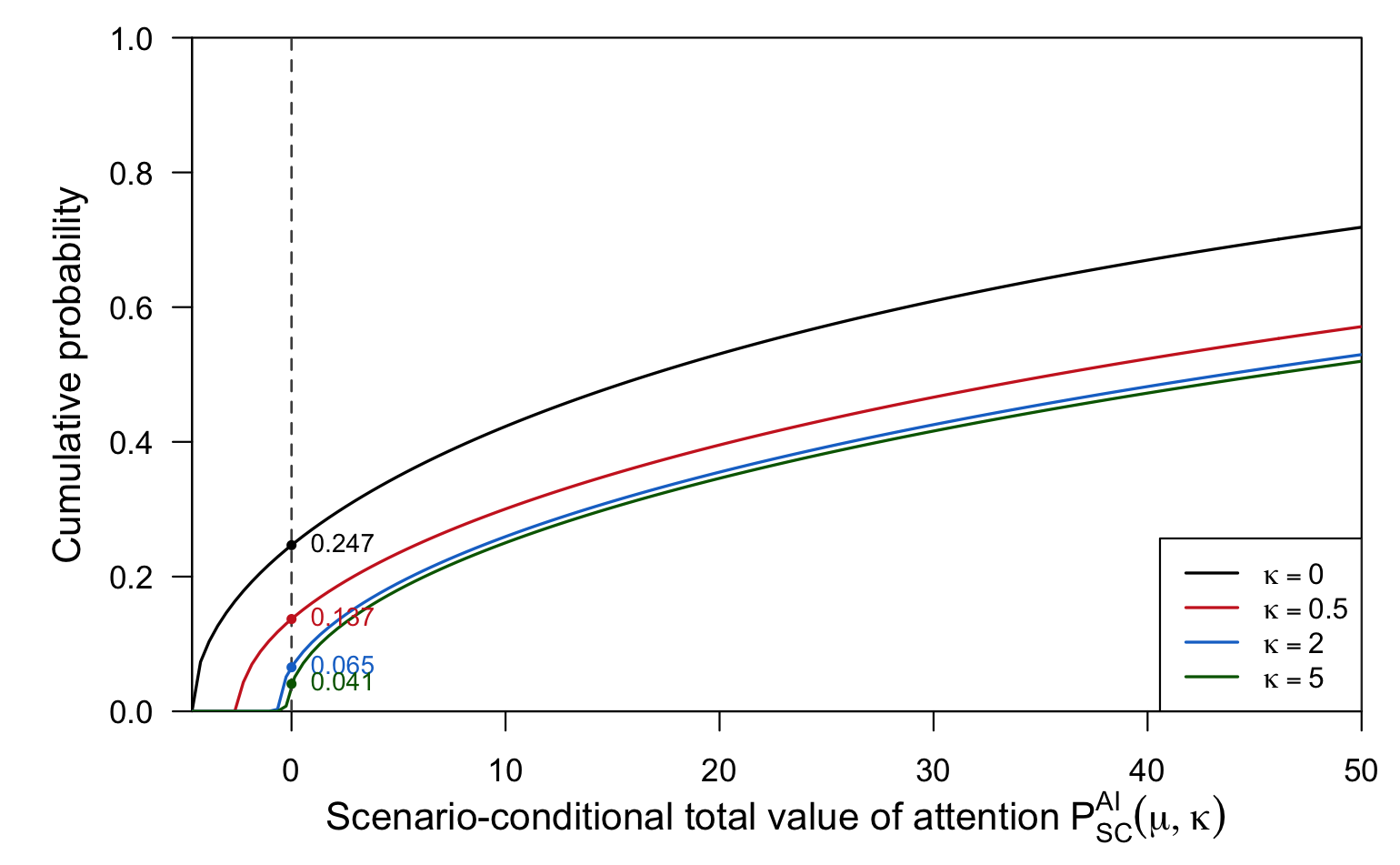}
\caption{CDF of scenario-conditional total value.}
\label{figure_2a}
\end{subfigure}
\hfill
\begin{subfigure}{0.48\linewidth}
\centering
\includegraphics[width=\linewidth]{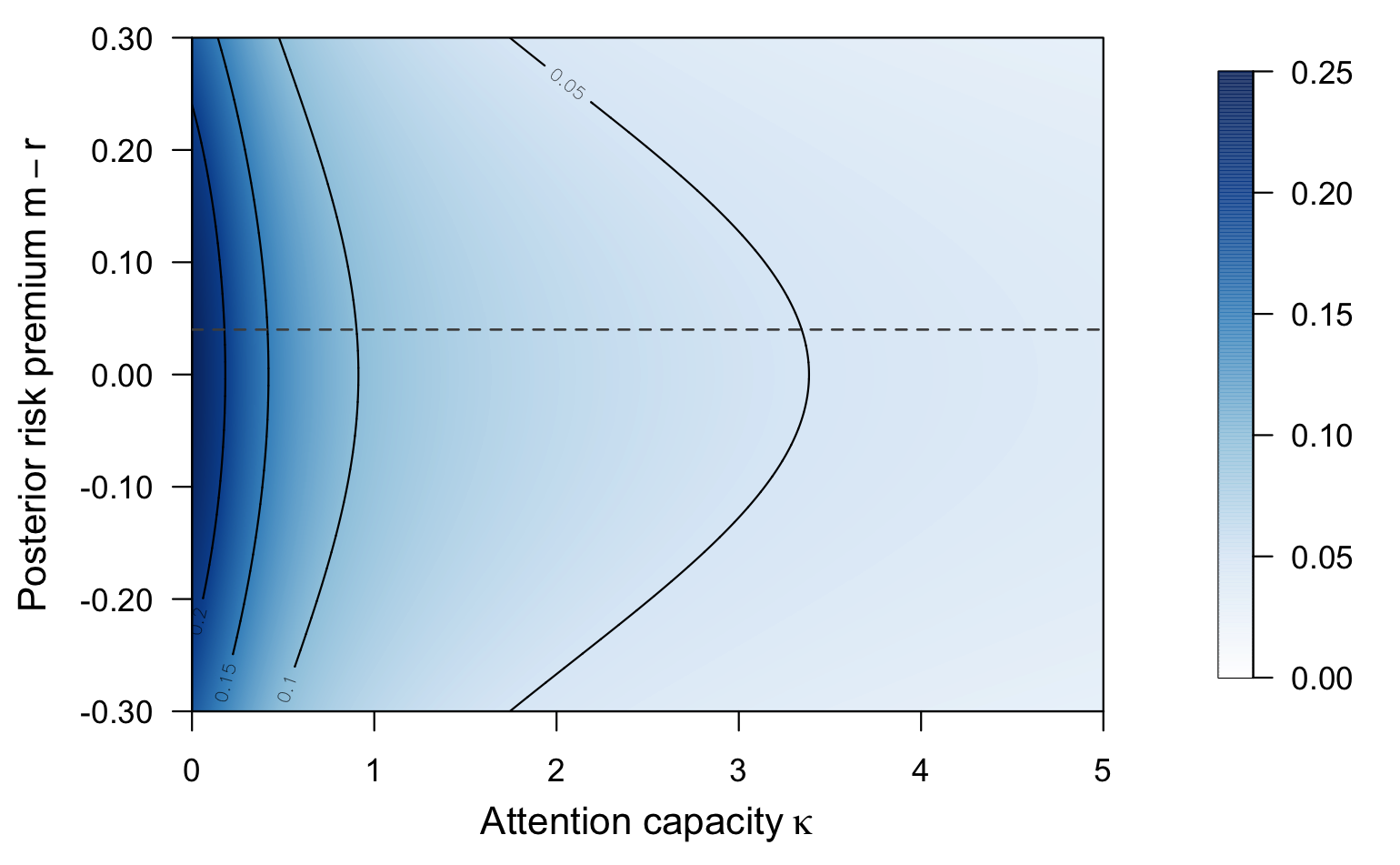}
\caption{Underperformance probability.}
\label{figure_2b}
\end{subfigure}
\caption{CDF of the scenario-conditional total value of attention and the corresponding underperformance probability. Panel (a) reports the CDF of $P^{AI}_{SC}(\mu,\kappa)$ for $\kappa\in\{0,0.5,2,5\}$; each CDF evaluated at zero is the probability of underperforming the fixed-belief benchmark. Panel (b) reports this probability from \eqref{negative_region_probability} over capacity $\kappa$ and posterior risk premium $m-r$. Contours mark selected probability levels, the dashed line marks the baseline posterior risk premium, and all remaining parameters are fixed at their baseline values.
}
\label{distribution_p_sc}
\end{figure}

\begin{table}[t]
\centering
\begin{tabular}{lrrrr}
\toprule
& \multicolumn{4}{c}{Posterior variance $v$} \\
\cmidrule(lr){2-5}
Underperformance probability
& $0.01$ & $0.04$ & $0.09$ & $0.16$ \\
\midrule
$20\%$ & 0.621   & 0.374   & 0.178  & 0.000 \\
$10\%$ & 2.350   & 1.412   & 0.903  & 0.616 \\
$5\%$  & 9.246   & 5.411   & 3.345  & 2.243 \\
$1\%$  & 229.891 & 133.175 & 80.821 & 52.889 \\
\bottomrule
\end{tabular}
\caption{Minimum attention capacity required to reduce the underperformance probability to at most the indicated target level across posterior variances. All remaining parameters are fixed at their baseline values. A value of zero indicates that price observation alone already satisfies the target.}
\label{underperformance_table}
\end{table}


From Panel (a), we find that the probability of a negative scenario-conditional total value, measured by each CDF at zero, falls from $0.247$ under price observation to $0.041$ at $\kappa=5$. Panel (a) also shows that the change is not a simple location shift. In the current Gaussian model, the distribution has a capacity-dependent finite lower endpoint and an unbounded right tail. Across the displayed capacities, the lower endpoint moves toward zero while the distribution stretches toward larger positive values, consistent with the distinct roles of the fluctuation cost and correction coefficient in Proposition \ref{propo_43}.



Panel (b) shows that the underperformance probability depends strongly on both attention capacity and the posterior risk premium. At a given capacity, the probability is highest around $m-r=0$ and decreases symmetrically as $|m-r|$ increases. This symmetry follows directly from \eqref{negative_region_probability}. When $m-r=0$, the distribution of the initial position gap is centered at zero, where the underperformance region in \eqref{underperformance_region} is located. As $|m-r|$ increases, the position-gap distribution moves away from zero, increasing the probability of scenarios in which the correction benefit dominates the fluctuation cost.

Increasing attention capacity sharply contracts the high-probability region in Panel (b). Over the displayed range, Figure \ref{tradeoffs_p_sc} shows a declining fluctuation cost and an increasing correction coefficient, so the performance threshold $h(t,\kappa)$ falls and the underperformance region narrows. Consequently, progressively larger portions of the posterior-risk-premium range are associated with low probabilities of negative scenario-conditional value.

From Table \ref{underperformance_table}, we find that tighter downside targets require capacity to rise nonlinearly. At the baseline posterior variance, a $10\%$ target requires $\kappa=0.903$, whereas a $1\%$ target requires $\kappa=80.821$. Reading across each row also shows that the required capacity decreases as posterior variance rises in this calibration because the position gap becomes more dispersed relative to the underperformance threshold.

The decision implication is that capacity requirements should be reported for an explicit underperformance target and an explicit posterior variance. The lower relative underperformance probability at a larger posterior variance does not imply lower absolute portfolio risk, and it does not imply that an investor prefers greater uncertainty.



\begin{figure}[thbp]
\centering
\begin{subfigure}{0.48\linewidth}
\centering
\includegraphics[width=\linewidth]{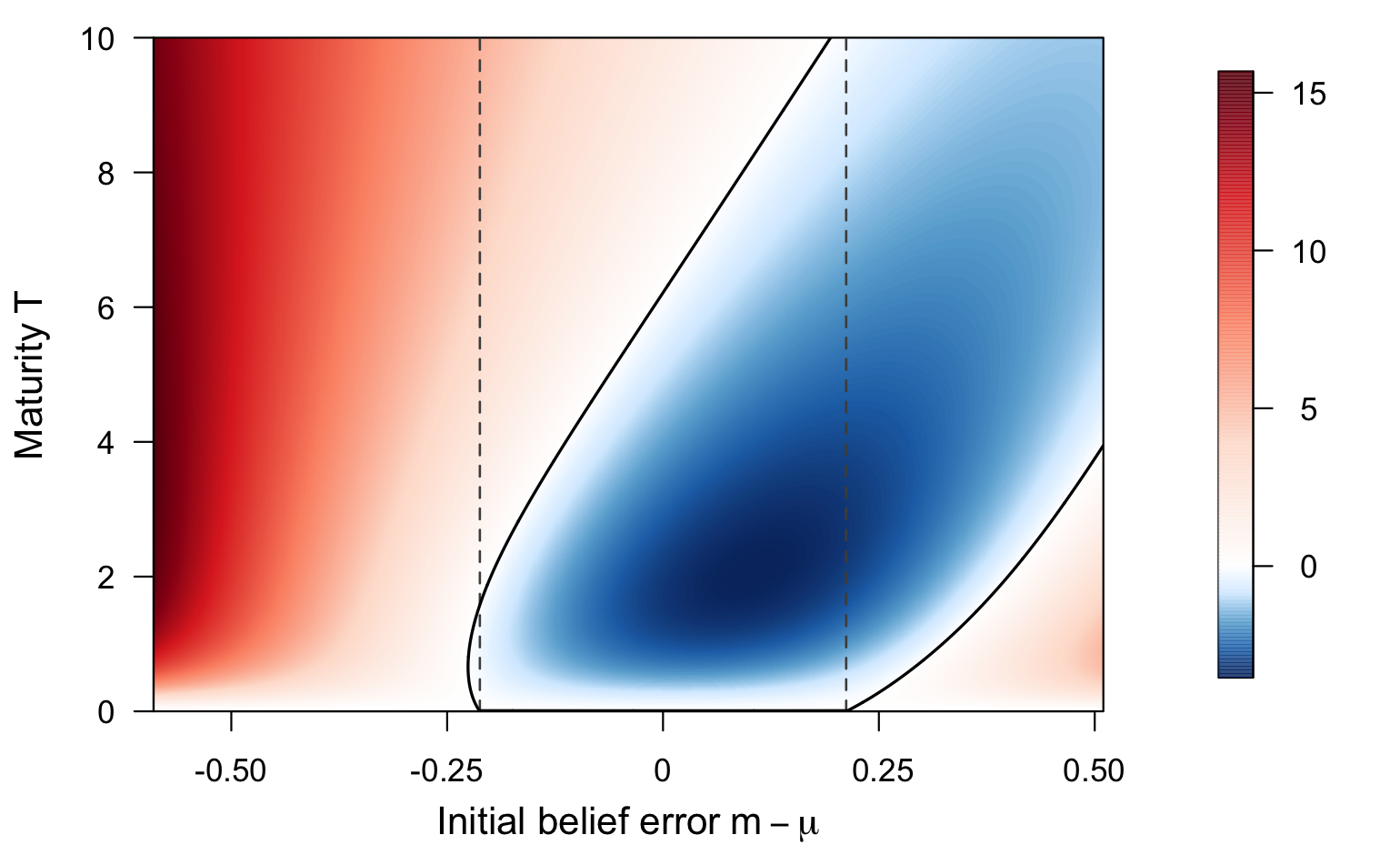}
\caption{Scenario-conditional total value.}
\label{figure_3a}
\end{subfigure}
\hfill
\begin{subfigure}{0.48\linewidth}
\centering
\includegraphics[width=\linewidth]{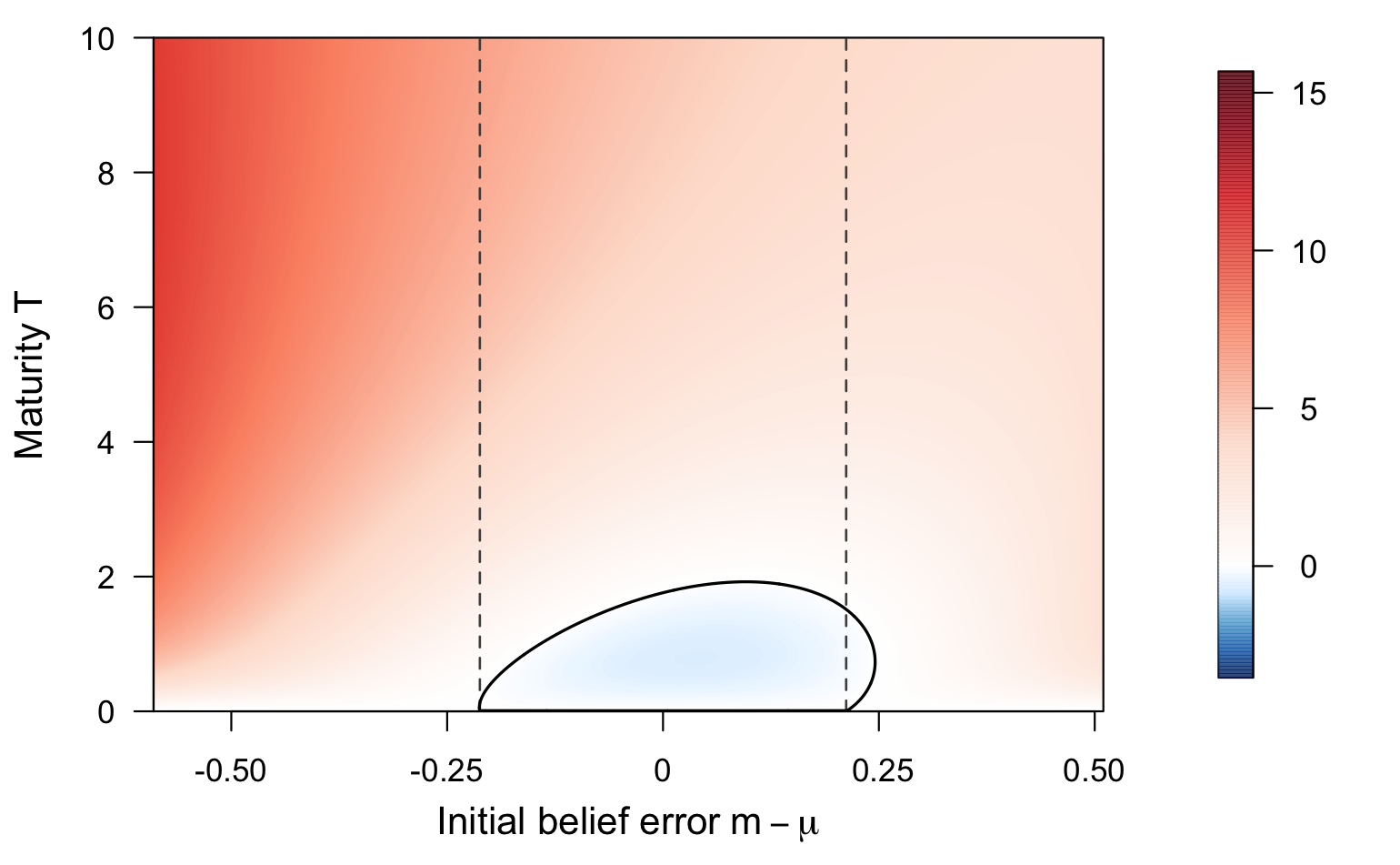}
\caption{Scenario-conditional incremental value.}
\label{figure_3b}
\end{subfigure}
\caption{Scenario-conditional total and incremental value of attention over maturity $T$ and initial belief error $m-\mu$. Panel (a) shows the total value $P^{AI}_{SC}(\mu,\kappa)$ relative to the naive investor, while Panel (b) shows the incremental value $P^{AI}_{SC}(\mu,\kappa)-P^{PO}_{SC}(\mu)$ relative to the price observer. Red (blue) regions indicate positive (negative) values, and the black contours mark zero. The dashed lines delimit the short-horizon reference region $(m-\mu)^2<v/2$ from Proposition \ref{propo_46}. Both panels use the same color scale, and all remaining parameters are fixed at their baseline values.}
\label{short_horizon}
\end{figure}

\subsection{Scenario-Conditional Value across Horizons}

The third question is: how should the investment horizon affect the choice between relying on price observation and adding an auxiliary signal? Figure \ref{short_horizon} separates two performance comparisons. Panel (a) reports total value relative to the fixed-belief strategy, whereas Panel (b) reports incremental value relative to price observation.


Panel (a) shows a broad region in which the scenario-conditional total value of attention is negative. At short maturities, the negative region lies approximately between the dashed lines, consistent with Proposition \ref{propo_46}. Within this region, price observation has negative scenario-conditional value and auxiliary attention reduces it further, so the attentive investor underperforms the naive investor. Outside this region, sufficiently large initial belief errors generate a larger position gap and therefore a stronger correction benefit, producing positive total values.

As maturity increases, the negative-value region in Panel (a) shifts toward positive values of $m-\mu$. This reflects the position-gap mechanism in Theorem \ref{theorem_42}. Rewriting the initial position gap as $\Delta^{NI}=m-\mu-\lb1-q_t^{NI}\rb(m-r)$ shows that its zero is located at $m-\mu=\lb1-q_t^{NI}\rb(m-r)$. Since the baseline posterior risk premium is positive, stronger uncertainty-induced attenuation of the naive investor's investment strategy at longer maturities shifts the zero-gap scenario to the right. The largest negative values occur at relatively short to intermediate maturities around this moving center.

Panel (b) shows that the negative incremental value of auxiliary attention is much more concentrated. It is confined to relatively short maturities and initial belief errors near zero, while the incremental value becomes positive across the displayed range once the horizon is sufficiently long. Near $T=0$, the numerical zero boundary closely follows the dashed short-horizon reference lines. This provides a direct illustration of Proposition \ref{propo_46}: when the initial belief error is small relative to posterior uncertainty, the additional signal generates trading-strategy fluctuations before faster learning has enough time to produce a sufficiently large correction benefit. Larger initial belief errors increase the scope for correction and therefore shorten the horizon required for auxiliary attention to outperform price observation.

The comparison between the panels is particularly informative. Over a substantial region, especially at longer maturities near the center of Panel (a), the incremental value in Panel (b) is positive while the total value remains negative. Auxiliary attention therefore improves scenario-conditional performance relative to price observation without generating enough value to outperform the naive investor. This distinction shows why incremental and total values of information need to be assessed separately.

The decision implication is that an information source cannot be evaluated without specifying both the horizon and the benchmark. Over short horizons, an auxiliary signal can amplify trading fluctuations when the inherited belief is already close to the candidate drift scenario. Over longer horizons in this calibration, the same signal can improve on price observation without generating enough value to outperform fixed-belief choice.

\subsection{Information-Processing Technologies}

The final question is: how does the information-processing technology affect the capacity required to approach perfect learning? We use the price observer, corresponding to $\kappa=0$, as the benchmark and measure convergence by the worst-case normalized gap
\begin{align*}
G_{\rho}(\kappa):=\sup_{\mu\in\mathbb{R}}\frac{P^{AI,\infty}_{SC}(\mu)-P^{AI}_{SC}(\mu,\kappa)}{P^{AI,\infty}_{SC}(\mu)-P^{PO}_{SC}(\mu)}.
\end{align*}
This is the specialization of \eqref{voi_high_capacity_rates} in Proposition \ref{propo_high_capacity_rates} to $\bar\kappa=0$. Under each fixed drift, the numerator is the remaining shortfall of finite-capacity scenario-conditional total value of attention relative to perfect learning, while the denominator is the corresponding shortfall of the price observer. Taking the supremum over $\mu$ selects the drift scenario with the largest normalized remaining shortfall, which motivates the term ``worst-case''. Lower values of $G_{\rho}(\kappa)$ therefore indicate better uniform convergence to perfect learning. In particular, $G_{\rho}(\kappa)=0$ means that no shortfall remains, $0<G_{\rho}(\kappa)<1$ means that attention has reduced the worst-case gap relative to the price observer, and $G_{\rho}(\kappa)>1$ means that the worst-case normalized gap exceeds the price-observer benchmark. A value of $G_{\rho}(\kappa)=1$ means that the worst-case normalized gap equals the price-observer benchmark. It does not imply that $P^{AI}_{SC}(\mu,\kappa)=P^{PO}_{SC}(\mu)$ for every $\mu$, although the corresponding fixed-drift ratio equals one precisely when the two scenario-conditional values coincide in that scenario.

\begin{figure}[htbp] 
\centering 
\begin{subfigure}{0.48\linewidth} 
\centering 
\includegraphics[width=\linewidth]{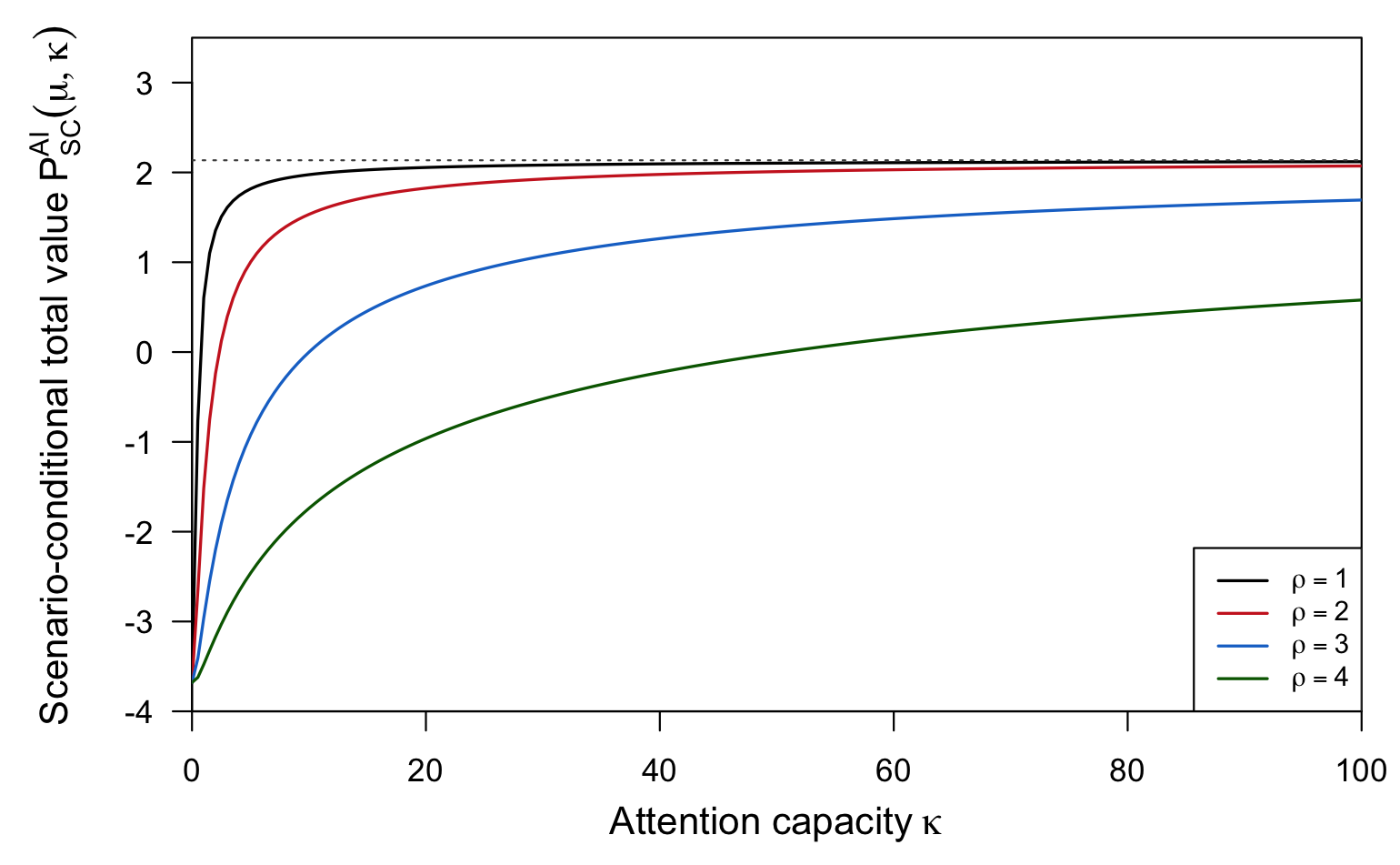} 
\caption{Scenario-conditional value.} \label{figure_4a} 
\end{subfigure} \hfill 
\begin{subfigure}{0.48\linewidth} 
\centering 
\includegraphics[width=\linewidth]{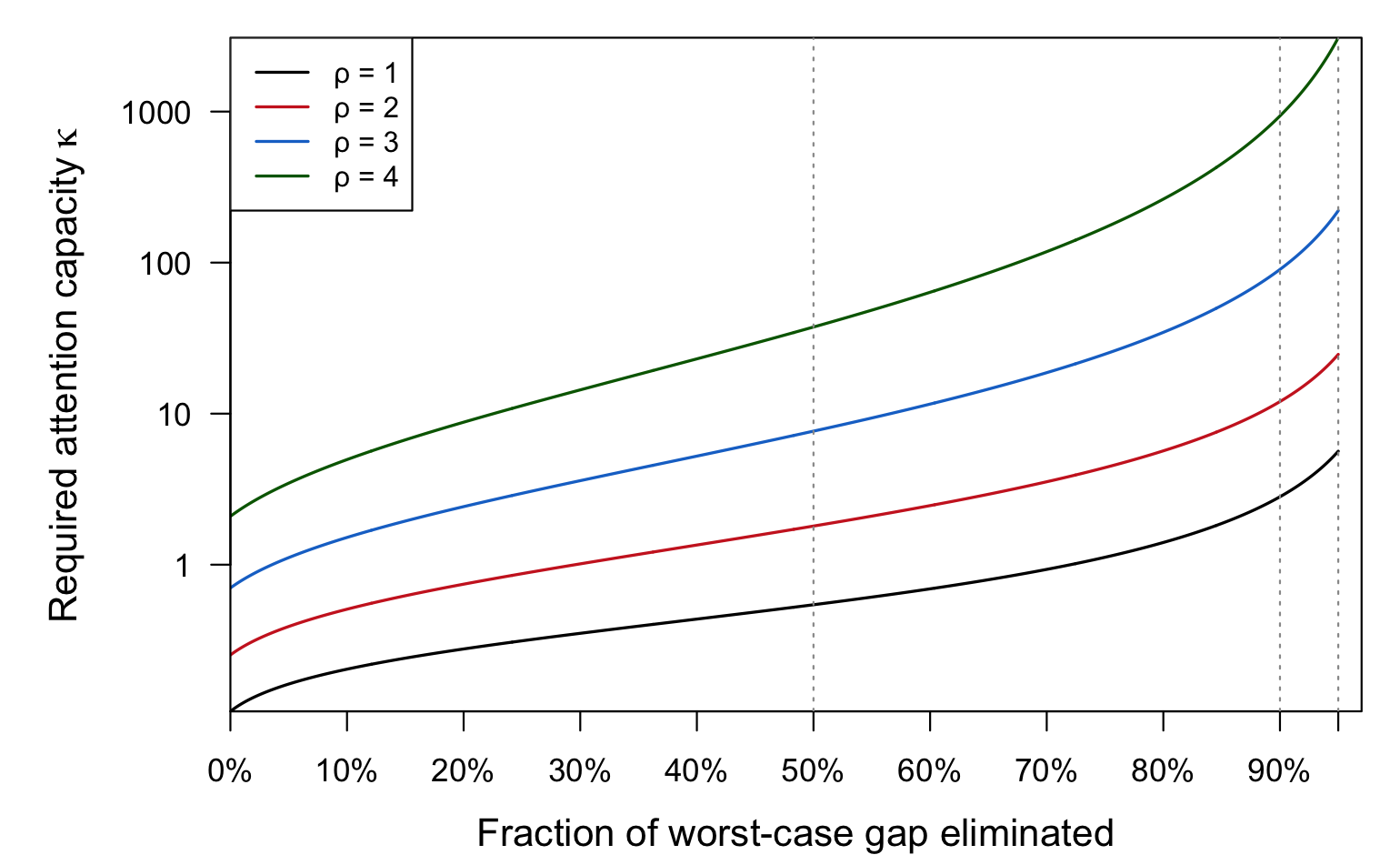} 
\caption{Capacity required for gap elimination.} 
\label{figure_4b} 
\end{subfigure} 
\caption{Scenario-conditional performance under the Shannon and Tsallis capacity indexed by $\rho$. Panel (a) shows the scenario-conditional total value of attention $P^{AI}_{SC}(\mu,\kappa)$ over attention capacity $\kappa$ for $\rho\in\{1,2,3,4\}$. The dotted horizontal line marks the common perfect-learning limit. Panel (b) shows the attention capacity required to eliminate a given fraction of the worst-case normalized gap to perfect learning, with the vertical axis on a logarithmic scale. All remaining parameters are held at their baseline values. The comparison illustrates the high-capacity convergence rates in Proposition \ref{propo_high_capacity_rates}.} 
\label{technology} 
\end{figure}

Figure \ref{technology} compares the technologies indexed by $\rho\in\{1,2,3,4\}$. Panel (a) reports the scenario-conditional total value of attention as capacity increases, while Panel (b) translates $G_{\rho}(\kappa)$ into the capacity required to eliminate a given fraction of the worst-case gap. Table \ref{technology_thresholds} reports the corresponding capacities at selected thresholds.

Panel (a) shows that all four technologies start from the same value at $\kappa=0$. Consistent with \eqref{p_sc_limit}, they also share the same perfect-learning limit, while $\rho$ determines how quickly that limit is approached as capacity increases. The differences are pronounced. The $\rho=1$ technology approaches the perfect-learning limit rapidly, followed by $\rho=2$. Convergence is substantially slower for $\rho=3$ and particularly slow for $\rho=4$. Thus, the same numerical level of attention capacity can correspond to very different distances from perfect learning across technologies.

Panel (b) evaluates this convergence uniformly across fixed-drift scenarios through $G_{\rho}(\kappa)$. Its horizontal axis reports the fraction of the price-observer benchmark gap that has been eliminated, equal to $1-G_{\rho}(\kappa)$ whenever $G_{\rho}(\kappa)\leq1$. Eliminating $50\%$ of the gap means that $G_{\rho}(\kappa)=0.5$, so half of the worst-case shortfall associated with the price observer remains. Eliminating $90\%$ means that $G_{\rho}(\kappa)=0.1$, while eliminating $95\%$ means that only $0.05$ of the benchmark shortfall remains.

The break-even capacity in Table \ref{technology_thresholds} has a distinct interpretation. Starting from $G_{\rho}(0)=1$, the worst-case normalized gap initially rises above $1$ for the technologies considered here. The break-even capacity is the positive value of $\kappa$ at which $G_{\rho}(\kappa)$ returns to $1$. At this point, the worst-case distance from perfect learning is again equal to that of the price observer. Below the break-even point, auxiliary attention can therefore worsen the worst-case normalized gap, while beyond it the numerical curves move below the price-observer benchmark and progressively close the remaining distance to perfect learning. This break-even concept concerns the relative shortfall from perfect learning and should not be confused with a capacity at which the scenario-conditional total value itself equals zero.

\begin{table}[htbp] \centering \caption{Attention capacity required to reduce the worst-case normalized gap to perfect learning. The break-even capacity is the positive capacity at which the worst-case normalized gap returns to its no-attention level.} \label{technology_thresholds} \begin{tabular}{ccrrrr} \toprule $\rho$ & Asymptotic rate $r_\rho(\kappa)$ & Break-even & $50\%$ eliminated & $90\%$ eliminated & $95\%$ eliminated \\ \midrule $1$ & $\kappa^{-1}$ & $0.107$ & $0.543$ & $2.816$ & $5.642$ \\ $2$ & $\kappa^{-1}$ & $0.252$ & $1.802$ & $12.022$ & $24.690$ \\ $3$ & $\ln(1+\kappa)/\kappa$ & $0.701$ & $7.660$ & $90.132$ & $220.892$ \\ $4$ & $\kappa^{-2/3}$ & $2.090$ & $37.463$ & $936.088$ & $3084.616$ \\ \bottomrule \end{tabular} 
\end{table}

Table \ref{technology_thresholds} shows that the required capacities differ sharply across technologies. Under $\rho=1$, the break-even point is reached at $\kappa=0.107$, $50\%$ of the worst-case gap is eliminated at $\kappa=0.543$, $90\%$ at $\kappa=2.816$, and $95\%$ at $\kappa=5.642$. Under $\rho=2$, the corresponding capacities are already considerably larger. The differences become much more pronounced for $\rho=3$ and $\rho=4$. Reaching a $90\%$ reduction requires $\kappa=90.132$ under $\rho=3$ and $\kappa=936.088$ under $\rho=4$, compared with only $\kappa=2.816$ under $\rho=1$. The capacity required for a $95\%$ reduction rises further to $220.892$ and $3084.616$, respectively. Hence, technologies that appear relatively similar at modest target reductions can differ by orders of magnitude when the objective is to approach perfect learning closely.

These differences are the finite-capacity manifestation of the asymptotic rates in Proposition \ref{propo_high_capacity_rates}. Equation \eqref{voi_high_capacity_rates} shows that, at high capacity, $G_{\rho}(\kappa)$ is bounded above and below by constant multiples of $r_{\rho}(\kappa)$. The rate $r_{\rho}(\kappa)$ therefore determines the asymptotic order at which the worst-case normalized gap vanishes. For $\rho=1$ and $\rho=2$, the rate is $\kappa^{-1}$. These technologies have the same asymptotic order even though Table \ref{technology_thresholds} shows substantial differences in their finite-capacity requirements. The rate determines how the remaining gap scales as $\kappa$ becomes large, not the exact capacity needed to reach a particular reduction.

At $\rho=3$, the rate is $\ln(1+\kappa)/\kappa$, which decays more slowly than $\kappa^{-1}$ because of the logarithmic factor. Under $\rho=4$, the rate is $\kappa^{-2/3}$ and convergence is slower still. This rate ordering explains why the capacity requirements in Panel (b) separate increasingly as the targeted fraction of the gap eliminated approaches $100\%$. If the remaining normalized gap is required to be of order $\varepsilon$, the corresponding capacity is of order $\varepsilon^{-1}$ under $\rho=1$ and $\rho=2$, approximately $\varepsilon^{-1}\ln(1/\varepsilon)$ under $\rho=3$, and $\varepsilon^{-3/2}$ under $\rho=4$. Thus, increasingly tight approximation to perfect learning becomes disproportionately more capacity intensive under the slower attention technologies.

Panels (a) and (b) therefore provide complementary perspectives. Panel (a) shows scenario-conditional value under one fixed-drift scenario, while Panel (b) measures the remaining relative shortfall uniformly across all fixed-drift scenarios. The common perfect-learning limit shows that the technologies ultimately lead to the same informational benchmark. Their economically relevant difference lies in the amount of attention capacity required to approach that benchmark.

\section{Conclusion} \label{conclusion}

This paper studies the value of information in dynamic portfolio choice when expected returns are not observed. We distinguish {the classical ex ante value of information under the investor's current belief} from {scenario-conditional value under fixed-drift scenarios}. The latter evaluation holds the trading strategy and filtering rule fixed {without reoptimization for the scenario under consideration} while averaging over future return shocks. In a finite-horizon Gaussian--CARA model, we compare a fixed-belief investor, a price observer, and an attentive investor with an attention capacity-constrained auxiliary signal.

Ex ante, price learning and auxiliary attention are valuable, and the total value of attention increases with capacity. Across fixed-drift scenarios, however, the ranking can reverse because learning corrects an initial trading-position gap while belief revisions generate a fluctuation cost. {We derive the current-belief distribution of scenario-conditional value and the associated underperformance probability. These quantities provide a diagnostic of where information helps or hurts and how much probability the investor's current belief assigns to adverse scenarios, thereby complementing rather than replacing the classical ex ante value of information. Over sufficiently short horizons, the scenario-conditional value of learning is negative when the initial belief error is small relative to posterior uncertainty. At every fixed positive horizon, sufficiently high capacity eventually dominates any fixed finite-capacity benchmark uniformly across drift scenarios. {The worst-case normalized shortfall from perfect learning decreases at rate $\kappa^{-1}$ under Shannon entropy and Tsallis entropy with $1<\rho<3$, at rate $\ln(1+\kappa)/\kappa$ for $\rho=3$, and at rate $\kappa^{-2/(\rho-1)}$ for $\rho>3$.} Information should therefore be evaluated jointly with the trading strategy it induces, the investment horizon, and the information-processing technology.

Several extensions merit further study. First, introducing information-acquisition or processing costs would endogenize attention capacity, following the cost-based approach of \citet{siu_2026}. Allowing the investor to choose among alternative information-processing technologies would be a further extension. Second, robust strategies optimized under drift ambiguity could be evaluated using the same scenario-conditional criterion to assess whether they reduce underperformance in adverse candidate drift scenarios and how any such improvement trades off against ex ante value. Third, transaction costs would make belief-driven portfolio adjustments directly costly, but they could also reduce the frequency and magnitude of these adjustments. How these opposing effects reshape the correction--fluctuation trade-off and the value of additional information remains an open question. We leave these extensions for future research.

\section*{Acknowledgments}
We are grateful to An Chen, Ning Wang, and Xingang Wen for their very helpful comments. Shihao Zhu gratefully acknowledges financial support from the Deutsche Forschungsgemeinschaft (DFG), Project-ID 509303834, FOR 5583 ``Asset Allocation and Asset Pricing under Regulatory Uncertainty''.

\bibliography{bibliography}

\newpage

\appendix

\section{Appendix}\label{appendixa}

\subsection{Instantaneous Auxiliary Information Rate:} \label{derivation_information_rate}

Fix $s\in[t,T)$ and condition on observations up to $s$, so that the current posterior variance is $v_s>0$. Freeze the auxiliary precision at $p_s^\xi$ over $[s,s+\Delta s]$. Standard linear-Gaussian filtering implies the following precision updates \citep{liptser2001statistics}:
\begin{align*}
\frac{1}{v_{s+\Delta s}}
&=\frac{1}{v_s}+\lb\frac{1}{\sigma^2}+p_s^\xi\rb\Delta s,\\
\frac{1}{v^P_{s+\Delta s}}
&=\frac{1}{v_s}+\frac{\Delta s}{\sigma^2}. \labeleq{posterior_precision_increment}
\end{align*}
It follows from \eqref{posterior_precision_increment} that
\begin{align*}
v^P_{s+\Delta s}-v_{s+\Delta s}
=p_s^\xi v_s^2\Delta s+o(\Delta s). \labeleq{posterior_variance_increment}
\end{align*}
Because both posterior variances converge to $v_s$, the mean value theorem,
and \eqref{posterior_variance_increment} yield
\begin{align*}
\lim_{\Delta s\downarrow0}
\frac{\mathcal{H}(v^P_{s+\Delta s})-\mathcal{H}(v_{s+\Delta s})}{\Delta s}
=p_s^\xi v_s^2\mathcal{H}'(v_s),
\end{align*}
which proves \eqref{instant_information_rate}.

\subsection{Correction Benefit, Fluctuation Cost, Performance Threshold and Distribution for the Price Observer:}\label{special_case_po}

Setting $\kappa=0$ and substituting the explicit posterior variance from \eqref{v_po_explicit} into the ODE system \eqref{ode_system_correction_fluctuation} yields
\begin{align*}
B(t;0)=\frac{\lb1-q_t^{NI}\rb a(t,0)}{2-q_t^{NI}},\qquad C(t;0)=\frac{1-q_t^{NI}}{2}-\frac{1}{2}\ln\lb2-q_t^{NI}\rb. \labeleq{b_c_po}
\end{align*}
Substituting both into the performance threshold for $\kappa=0$ yields
\begin{align*}
h(t,0)=\sqrt{\frac{\lb2-q_t^{NI}\rb\left(1-q_t^{NI}-\ln\lb2-q_t^{NI}\rb\right)}{2\lb1-q_t^{NI}\rb a(t,0)}}.
\end{align*}
With Proposition \ref{propo_43}, this yields 
\begin{align*}
P_{SC}^{PO}(\mu)\overset{d}{=}\frac{e^{-r(T-t)}}{\gamma}\left[\frac{\lb1-q_t^{NI}\rb a(t,0)}{2-q_t^{NI}}Z_t^2-\frac{1-q_t^{NI}}{2}+\frac{1}{2}\ln\lb2-q_t^{NI}\rb\right],
\end{align*}
where
\begin{align*}
Z_t\sim\N\lb-\lb1-q_t^{NI}\rb(m-r),v\rb.
\end{align*}

\subsection{Convergence of $B(t;\kappa)$ and $C(t;\kappa)$:} \label{convergence_c_b}
\begin{lemma}\label{lemma_a1}
Let $(t,x,m,v)\in[0,T)\times\mathcal{O}$ be fixed. Then
\begin{align*}
\lim_{\kappa\to\infty}B(t;\kappa)=a(t,0), \qquad\lim_{\kappa\to\infty}C(t;\kappa)=0.
\end{align*}
\end{lemma}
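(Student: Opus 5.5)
The idea is to prove the statement for $B$ directly from the ODE \eqref{ode_system_correction_fluctuation}, using a variation-of-constants bound. The statement for $C$ then follows from the bridge identity $b^{AI}(t,v;\kappa)=\tfrac12\ln(1+2vB(t;\kappa))-C(t;\kappa)$ together with Proposition \ref{propo_35}.

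\emph{Step 1 (the equation for the gap).} Set $D(s):=a(s,0)-B(s;\kappa)$. Since $\frac{d}{ds}a(s,0)=-1/(2\sigma^2)$, the ODE gives
\begin{align*}
\frac{d}{ds}D(s)=-\frac{1}{2\sigma^2}+g(s)D(s),\qquad g(s):=2\lambda_s^\kappa\lb1-q_s^{AI}v_s^{AI}D(s)\rb,\qquad D(T)=0.
\end{align*}
Applying Theorem \ref{theorem_42} with initial date $s$ and state $v_s^{AI}$ gives $D(s)\in(0,a(s,0))$. This is legitimate because the variance path and the ODE have the flow property. Hence
\begin{align*}
q_s^{AI}v_s^{AI}D(s)<\frac{(T-s)v_s^{AI}}{2(\sigma^2+(T-s)v_s^{AI})}<\tfrac12,
\end{align*}
so that $g(s)\geq\lambda_s^\kappa\geq0$. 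Multiplying by $e^{-\int_t^s g}$ and integrating from $t$ to $T$ yields
\begin{align*}
0<D(t)=\frac{1}{2\sigma^2}\int_t^T\exp\lb-\int_t^u g(w)\,dw\rb du\leq\frac{1}{2\sigma^2}\int_t^T\exp\lb-\int_t^u\lambda_w^\kappa\,dw\rb du .
\end{align*}

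\emph{Step 2 (the exponent blows up).} From \eqref{filter_ai_compact} we have $\lambda_w^\kappa=-q_w^{AI}\,\frac{d}{dw}\ln v_w^{AI}$. Since $v_w^{AI}\leq v$, also $q_w^{AI}\geq \underline q:=\sigma^2/(\sigma^2+(T-t)v)>0$, uniformly in $\kappa$. Therefore
\begin{align*}
\int_t^u\lambda_w^\kappa\,dw\geq\underline q\,\ln\frac{v}{v_u^{AI}},
\end{align*}
and the integrand in Step 1 is bounded by $\min\{1,(v_u^{AI}/v)^{\underline q}\}$.

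\emph{Step 3 (limit for $B$).} Section \ref{chap_34} states that $v_u^{AI}\to0$ as $\kappa\to\infty$ for every $u>t$. This is the one point that needs verification: it follows from comparing $dv/ds\leq-\kappa v^2/f(v)$ with the boundary condition in \eqref{entropy_regular}, which makes $\int_0 \frac{f(z)}{z^2}\,dz=\int_0 \mathcal{H}'(z)\,dz$ finite. Given this, dominated convergence gives $D(t)\to0$, that is, $B(t;\kappa)\to a(t,0)$. I expect this to be the main obstacle, namely controlling the nonlinear term, and it is handled by the a priori bound $D\in(0,a(s,0))$ from Theorem \ref{theorem_42}.

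\emph{Step 4 (limit for $C$).} Bounding $C$ directly is awkward, because $\int(-\dot v)D$ mixes the fast initial decay of $v$ with the not-yet-small $D$ near $t$. I therefore use the identity derived after Proposition \ref{propo_43}:
\begin{align*}
C(t;\kappa)=\tfrac12\ln\lb1+2vB(t;\kappa)\rb-b^{AI}(t,v;\kappa).
\end{align*}
By Proposition \ref{propo_35} and the representation \eqref{value_function_identities}, $b^{AI}(t,v;\kappa)\to b^{AI,\infty}(t,v)=\tfrac12\ln(1+(T-t)v/\sigma^2)=\tfrac12\ln(1+2va(t,0))$. Combining this with Step 3 gives $C(t;\kappa)\to0$.
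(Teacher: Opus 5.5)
Your proof is correct, but it reaches the $B$-limit by a different route from the paper, and one justification in Step 3 is wrong (see the second paragraph).

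\textbf{The paper's route.} The paper never works with the Riccati equation here. It sets $E_\kappa:=\frac{1}{2\sigma^2}\int_t^T q_s^{AI}v_s^{AI}\,ds$, which equals $b^{AI,\infty}(t,v)-b^{AI}(t,v;\kappa)$ by \eqref{b_capacity_representation}. Write $A_\kappa:=a(t,0)-B(t;\kappa)$. The paper combines the bridge identity \eqref{ex_ante_attention_value_aggreagated} with the factorization $1+2vB(t;\kappa)=\bigl(1+(T-t)v/\sigma^2\bigr)\bigl(1-2q_t^{NI}vA_\kappa\bigr)$ and obtains $E_\kappa=C(t;\kappa)-\frac12\ln\bigl(1-2q_t^{NI}vA_\kappa\bigr)$. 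Both summands are positive by Theorem \ref{theorem_42}. The single dominated-convergence fact $E_\kappa\to0$ therefore squeezes $C$ and $A_\kappa$ to zero at once, with explicit bounds $C\le E_\kappa$ and $A_\kappa\le E_\kappa/(q_t^{NI}v)$.

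\textbf{Your route.} You get $A_\kappa\to0$ directly from the ODE. You use variation of constants, the a priori bound $D<a(s,0)$ to keep $g\ge\lambda^\kappa$, and the identity $\lambda^\kappa=-q^{AI}\frac{d}{ds}\ln v^{AI}$. This does not need the bridge identity, and it is essentially the mechanism the paper itself uses later in \eqref{A_variation_constants} for the rate results. You then use the bridge identity only in its difference form $C=\frac12\ln(1+2vB)-b^{AI}$. That is why you must establish the $B$-limit first. Your appeal to Proposition \ref{propo_35} for $b^{AI}\to b^{AI,\infty}$ is the same fact as $E_\kappa\to0$ in another guise.

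\textbf{What each buys.} The paper's positivity decomposition is shorter and yields both limits, plus quantitative bounds, from one step. Your argument gives a self-contained ODE proof of the $B$-limit, with its own bound $a(t,0)-B(t;\kappa)\le\frac{1}{2\sigma^2}\int_t^T\bigl(v_u^{AI}/v\bigr)^{q_t^{NI}}du$.

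\textbf{The slip in Step 3.} The boundary condition in \eqref{entropy_regular} does not make $\int_0\mathcal{H}'(z)\,dz$ finite; it makes it infinite. The condition says $\mathcal{H}'(u)\ge c/u$ near $0$; for Shannon, for example, $\mathcal{H}'(z)=1/(2z)$. If that integral were finite, your comparison would force $v^{AI}$ to reach zero in finite time, contradicting the positivity that \eqref{entropy_regular} is designed to guarantee. What the comparison actually needs is only
\[
\kappa(s-t)\le\int_{v_s^{AI}}^{v}\mathcal{H}'(z)\,dz\le\mathcal{H}(v)-\mathcal{H}(\varepsilon)<\infty \quad\text{whenever } v_s^{AI}\ge\varepsilon,
\]
which is impossible for large $\kappa$. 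This is exactly the argument behind \eqref{variance_perfect_learning} in the proof of Proposition \ref{propo_35}, so you can simply cite it.
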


\begin{proof}
Write $v_s^\kappa:=v_s^{\mathrm{AI}}$ and $q_s^\kappa:=q_s^{\mathrm{AI}}$, and define
\begin{align*}
A_\kappa&:=a(t,0)-B(t;\kappa),\\
E_\kappa&:=\frac{1}{2\sigma^2}\int_t^T q_s^\kappa v_s^\kappa ds,\\
w_0&:=q_t^{\mathrm{NI}}v>0.
\end{align*}
By \eqref{b_capacity_representation} and \eqref{b_ai_infty},
\begin{align*}
E_\kappa=b^{\mathrm{AI},\infty}(t,v)-b^{\mathrm{AI}}(t,v;\kappa).
\end{align*}
Moreover, \eqref{ex_ante_attention_value_aggreagated} and the identity
\begin{align*}
1+2vB(t;\kappa)=\lb 1+\frac{(T-t)v}{\sigma^2}\rb(1-2w_0A_\kappa)
\end{align*}
yield
\begin{align*}
E_\kappa=C(t;\kappa)-\frac{1}{2}\ln(1-2w_0A_\kappa).
\end{align*}
Since $0<A_\kappa<a(t,0)$ and $C(t;\kappa)>0$ by Theorem \ref{theorem_42}, we obtain
\begin{align*}
0<C(t;\kappa)&\le E_\kappa,\\
0<A_\kappa&\le\frac{1-e^{-2E_\kappa}}{2w_0} \le\frac{E_\kappa}{w_0}.
\end{align*}
By \eqref{variance_perfect_learning}, $v_s^\kappa\to0$ as $\kappa\to\infty$ for every $s>t$. Furthermore,
\begin{align*}
0<q_s^\kappa v_s^\kappa\le v, \qquad s\in[t,T].
\end{align*}
Dominated convergence therefore implies $E_\kappa\to0$. The preceding bounds give
\begin{align*}
A_\kappa\to0, \qquad C(t;\kappa)\to0,
\end{align*}
which proves both assertions.
\end{proof}

\subsection{Auxiliary Technical Lemma: }

\begin{lemma}\label{lemma_a3_aux}
Let $(t,x,m,v)\in[0,T)\times\mathcal{O}$, $\mu\in\mathbb{R}$, and $\kappa\in[0,\infty)$ be fixed. Let $\widetilde{\PP}^{\mu}$ be the probability measure satisfying
\begin{align*}
\left.\frac{d\widetilde{\PP}^{\mu}}{d\PP^\mu}\right|_{\mathcal{F}^{AI}_T}
=\exp\lb-\int_t^T\frac{q_s^{AI}(m_s^{AI}-r)}{\sigma}dW_s^1-\frac{1}{2}\int_t^T\lb\frac{q_s^{AI}(m_s^{AI}-r)}{\sigma}\rb^2ds\rb. \labeleq{tilted_measure}
\end{align*}
Then the attentive investor's scenario-conditional expected utility satisfies
\begin{align*}
J^{AI}(t,x,m,v;\mu,\kappa)=V^{FI}(t,x;\mu)\widetilde{\E}^{\mu}_{t,x,m,v}\lsb\exp\lb\frac{1}{2\sigma^2}\int_t^T\lb\Delta_s^{AI}\rb^2ds\rb\rsb, \labeleq{j_ai_presentation}
\end{align*}
where 
\begin{align*}
    \Delta_s^{AI}&:=q_s^{AI}(m_s^{AI}-r)-(\mu-r), \qquad s\in[t,T]. \labeleq{delta_ai}
\end{align*}

\end{lemma}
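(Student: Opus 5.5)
Under the fixed-drift law $\PP^\mu$ we write the attentive investor's terminal wealth explicitly, complete the square in the exponent, and absorb the resulting stochastic integral into the Girsanov density \eqref{tilted_measure}. The optimal strategy \eqref{optimal_strategy} reads $\pi_s^{\ast,AI}=e^{-r(T-s)}q_s^{AI}(m_s^{AI}-r)/(\gamma\sigma^2)$. By \eqref{scenario_conditional_wealth} and variation of constants,
\begin{align*}
e^{r(T-t)}x-X_T^{\pi^{\ast,AI}}\cdot(-1)\cdot(-1)+\dots
\end{align*}
more precisely,
\begin{align*}
\gamma X_T^{\pi^{\ast,AI}}=\gamma e^{r(T-t)}x+\int_t^T\frac{q_s^{AI}(m_s^{AI}-r)(\mu-r)}{\sigma^2}ds+\int_t^T\frac{q_s^{AI}(m_s^{AI}-r)}{\sigma}dW_s^1 .
\end{align*}
Write $\theta_s:=q_s^{AI}(m_s^{AI}-r)/\sigma$. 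Then $-\gamma X_T$ equals $-\gamma e^{r(T-t)}x-\int\theta_s dW^1_s-\frac12\int\theta_s^2ds+R$, where
\begin{align*}
R=\frac{1}{2\sigma^2}\int_t^T\lsb\lb q_s^{AI}(m_s^{AI}-r)\rb^2-2q_s^{AI}(m_s^{AI}-r)(\mu-r)\rsb ds=\frac{1}{2\sigma^2}\int_t^T\lsb\lb\Delta_s^{AI}\rb^2-(\mu-r)^2\rsb ds .
\end{align*}
The last equality is the completion of the square with $\Delta_s^{AI}$ from \eqref{delta_ai}. Since $\frac{1}{2\sigma^2}\int_t^T(\mu-r)^2ds=a(t,0)(\mu-r)^2$, the factor $-\frac1\gamma\exp(-\gamma e^{r(T-t)}x-a(t,0)(\mu-r)^2)$ is exactly $V^{FI}(t,x;\mu)$ from \eqref{value_function_full_info}. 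The remaining factor $\exp(-\int\theta dW^1-\frac12\int\theta^2ds)$ is precisely the density in \eqref{tilted_measure}. Taking $\E^\mu$ then gives \eqref{j_ai_presentation} formally.

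The main obstacle is to justify that \eqref{tilted_measure} defines a probability measure, i.e.\ that the stochastic exponential is a true $\PP^\mu$-martingale on $[t,T]$. Under $\PP^\mu$ the filter $m^{AI}$ is a Gaussian process. The innovations satisfy $d\widehat W^{1,AI}_s=dW^1_s+(\mu-m^{AI}_s)ds/\sigma$, and similarly for the second component. Hence $m^{AI}$ solves a linear SDE with deterministic coefficients driven by $W^1,W^2$, namely
\begin{align*}
dm_s=\lambda'_s(\mu-m_s)ds+\text{(deterministic)}\,dW
\end{align*}
with bounded coefficients, because $v^{AI}$ is deterministic and bounded by $v$ and $\kappa v^2/f(v)$ is continuous on compacts by \eqref{entropy_regular}. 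Thus $\theta$ is a Gaussian process with continuous mean and variance on $[t,T]$. We would verify Novikov's condition locally: on small subintervals $[s_i,s_{i+1}]$, $\E\exp(c\int_{s_i}^{s_{i+1}}\theta^2)<\infty$ holds once the mesh is small relative to $\sup\Var(\theta)$. The standard piecewise Novikov/Beneš argument then yields $\E^\mu[\text{density}]=1$.

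Finally, the identity $\E^\mu[e^{-\int\theta dW-\frac12\int\theta^2}e^{R'}]=\widetilde\E^\mu[e^{R'}]$, with $R'=\frac{1}{2\sigma^2}\int(\Delta^{AI})^2ds$, holds for nonnegative random variables without further integrability. So once the density is a martingale, \eqref{j_ai_presentation} follows, with both sides possibly infinite a priori. Finiteness is not claimed here and is handled later when the expectation is computed via the Gaussian dynamics of $\Delta^{AI}$ under $\widetilde\PP^\mu$.
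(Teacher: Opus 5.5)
Your proof is correct and follows the paper's argument. The same completion of the square splits $e^{-\gamma X_T^{\pi^{\ast,AI}}}$ into three factors: the one giving $V^{FI}(t,x;\mu)$, the density \eqref{tilted_measure}, and $\exp\bigl(\frac{1}{2\sigma^2}\int_t^T(\Delta_s^{AI})^2ds\bigr)$. Your piecewise Novikov check (Jensen plus the Gaussian moment of $\theta_s^2$ on short subintervals) is a valid substitute for the paper's appeal to the Bene\v{s} linear-growth criterion.

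The only difference concerns finiteness of the exponential moment. The paper proves it inside this lemma, using the finite Bayesian value $V^{AI}$ and the drift-independent quadratic kernel, and the proof of Theorem \ref{theorem_42} cites that step. If you defer finiteness, you must supply it there. One way is the positive local martingale $\exp\bigl(A(s;\kappa)(\Delta_s^{AI})^2+C(s;\kappa)+\frac{1}{2\sigma^2}\int_t^s(\Delta_u^{AI})^2du\bigr)$ with $A(s;\kappa):=a(s,0)-B(s;\kappa)$. Its supermartingale property bounds the moment by $\exp\bigl((a(t,0)-B(t;\kappa))(\Delta^{NI})^2+C(t;\kappa)\bigr)<\infty$, which is finite because the Riccati solution stays bounded on $[t,T]$.
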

\begin{proof}
 
For fixed $\kappa$, set
\begin{align*}
h_s:=\frac{q_s^{AI}(m_s^{AI}-r)}{\sigma},
\qquad
\theta:=\frac{\mu-r}{\sigma}.
\end{align*}
Because $h$ is affine in the linear-Gaussian state with bounded deterministic coefficients, the Bene\v{s} linear-growth criterion \citep{klebaner2014stochastic} shows that the stochastic exponential in \eqref{tilted_measure} is a true martingale. Under $\widetilde{\PP}^{\mu}$, completing the square gives
\begin{align*}
J^{AI}
={}&-\frac{1}{\gamma}e^{-\gamma e^{r(T-t)}x-\frac{1}{2}\theta^2(T-t)}
\widetilde{\E}^{\mu}_{t,x,m,v}
\lsb\exp\lb\frac{1}{2}\int_t^T(h_s-\theta)^2ds\rb\rsb\\
={}&V^{FI}(t,x;\mu)
\widetilde{\E}^{\mu}_{t,x,m,v}
\lsb\exp\lb\frac{1}{2\sigma^2}
\int_t^T\lb\Delta_s^{AI}\rb^2ds\rb\rsb,
\end{align*}
which proves \eqref{j_ai_presentation}. It remains to verify finiteness of the exponential moment in \eqref{j_ai_presentation}. Under each $\mathbb{P}^{\vartheta}$, $m^{AI}$ is a Gaussian process whose covariance is independent of $\vartheta$. The parameter $\vartheta$ changes only its mean. Thus, by \eqref{scenario_conditional_wealth} and \eqref{optimal_strategy}, 
$\mathbb{E}^{\vartheta}_{t,x,m,v}[\exp(-\gamma X_T^{\pi^{*,AI}})]$
is a Gaussian quadratic exponential with a common quadratic kernel. Its prior average equals the finite Bayesian moment $-\gamma V^{AI}(t,x,m,v;\kappa)$, so it is finite for almost every $\vartheta$ under the prior. Since Gaussian quadratic integrability depends only on the common quadratic kernel, it is finite for every $\vartheta$, in particular for $\mu$. Hence $J^{AI}$, and therefore the exponential moment in
\eqref{j_ai_presentation}, is finite.

\end{proof}

\par\vspace{\baselineskip}
\subsection{Proof of Proposition \ref{proposition_31}:}\label{proof_proposition_31}
\begin{proof}

For use throughout the appendix, we define the total posterior-variance decay rate by
\begin{align*}
G_\kappa(y):=\frac{y^2}{\sigma^2}+\frac{\kappa y^2}{f(y)},\qquad y>0. \labeleq{variance_decay_rate}
\end{align*}
Thus, the variance ODE in \eqref{filter_ai_compact} is
$dv_s^{AI}=-G_\kappa(v_s^{AI})ds$. Using
\eqref{wealth_learning_investor} and \eqref{filter_ai_compact}, define for a
smooth function $w=w(s,x,z,y)$ the second-order differential operator
\begin{align*}
\mathcal{L}^\pi w
:={}&\lb rx+\pi(z-r)\rb w_x-G_\kappa(y)w_y
+\frac{1}{2}\pi^2\sigma^2w_{xx}+\pi yw_{xz}
+\frac{1}{2}G_\kappa(y)w_{zz}.
\end{align*}
The dynamic programming principle for \eqref{portfolio_problem} leads to the HJB equation
\begin{align*}
0=W_s+\sup_{\pi\in\mathbb{R}}\mathcal{L}^\pi W,
\qquad W(T,x,z,y)=U(x). \labeleq{hjb_equation}
\end{align*}


For each $(s,y)\in[0,T)\times\mathbb{R}_+$, let $(v_u)_{u\in[s,T]}$ be the
deterministic solution of
\begin{align*}
dv_u=-G_\kappa(v_u)du,\qquad v_s=y. \labeleq{variance_characteristic}
\end{align*}
Condition \eqref{entropy_regular} implies that $G_\kappa(u)/u$ is bounded on
$(0,y]$, so comparison gives a unique strictly positive solution on $[s,T]$.
Since $G_\kappa\in C^1(0,\infty)$, the solution is continuously
differentiable in $(s,y)$. Define
\begin{align*}
a(s,y)&:=\frac{T-s}{2\lb\sigma^2+(T-s)y\rb}, \labeleq{candidate_a}\\
b^{AI}(s,y;\kappa)&:=\int_s^T G_\kappa(v_u)a(u,v_u)du. \labeleq{candidate_b}
\end{align*}
and consider
\begin{align*}
W(s,x,z,y):=-\frac{1}{\gamma}
\exp\lb-\gamma e^{r(T-s)}x-a(s,y)(z-r)^2-b^{AI}(s,y;\kappa)\rb. \labeleq{candidate_W}
\end{align*}
This function has the required interior regularity and satisfies the terminal
condition.

Since $W_{xx}<0$, its unique pointwise maximizer is
\begin{align*}
\pi^\ast(s,z,y)
&=-\frac{(z-r)W_x+yW_{xz}}{\sigma^2W_{xx}}\\
&=e^{-r(T-s)}\frac{z-r}{\gamma\sigma^2}
\lb1-2a(s,y)y\rb
=e^{-r(T-s)}\frac{z-r}{\gamma\sigma^2}
\frac{\sigma^2}{\sigma^2+(T-s)y}. \labeleq{feedback_candidate}
\end{align*}
Direct differentiation of \eqref{candidate_a}--\eqref{candidate_b} along
\eqref{variance_characteristic} gives
\begin{align*}
a_s(s,y)-G_\kappa(y)a_y(s,y)
&=-\frac{1}{2\sigma^2}\lb1-2a(s,y)y\rb^2
+2G_\kappa(y)a(s,y)^2,\\
b_s^{AI}(s,y;\kappa)-G_\kappa(y)b_y^{AI}(s,y;\kappa)
&=-G_\kappa(y)a(s,y). \labeleq{candidate_coefficient_equations}
\end{align*}
Both functions vanish at $s=T$, and substitution of
\eqref{candidate_coefficient_equations} verifies
\eqref{hjb_equation}. Evaluating \eqref{feedback_candidate} at
$(z,y)=(m_s^{AI},v_s^{AI})$ gives \eqref{optimal_strategy}.

Set
\begin{align*}
q_s:=\frac{\sigma^2}{\sigma^2+(T-s)v_s^{AI}}\in(0,1]. \labeleq{candidate_attenuation}
\end{align*}
By It\^o's isometry and the variance ODE in \eqref{filter_ai_compact},
\begin{align*}
\E^{AI}_{t,x,m,v}\lsb\lb m_s^{AI}-r\rb^2\rsb
=(m-r)^2+\int_t^sG_\kappa(v_u^{AI})du
=(m-r)^2+v-v_s^{AI}. \labeleq{posterior_second_moment}
\end{align*}
Thus, \eqref{feedback_candidate}, \eqref{candidate_attenuation}, and
\eqref{posterior_second_moment} show that the candidate trading strategy is predictable
and satisfies the square-integrability requirement
\eqref{admissible_square_integrability}:
\begin{align*}
\E^{AI}_{t,x,m,v}\lsb\int_t^T|\pi_s^{\ast,AI}|^2ds\rsb<\infty.
\end{align*}

To verify the class-$(D)$ condition, let
\begin{align*}
Y_s:=e^{r(T-s)}X_s^{\pi^{\ast,AI},AI},
\qquad \Xi_s^\ast:=e^{-\gamma Y_s}.
\end{align*}
The discounted-to-maturity wealth satisfies
\begin{align*}
Y_s
={}&e^{r(T-t)}x
+\frac{1}{\gamma\sigma^2}\int_t^sq_u\lb m_u^{AI}-r\rb^2du
+\frac{1}{\gamma\sigma}\int_t^sq_u\lb m_u^{AI}-r\rb
d\widehat W_u^{1,AI}. \labeleq{candidate_wealth_to_maturity}
\end{align*}
Therefore, by \eqref{candidate_wealth_to_maturity},
\begin{align*}
\lb \Xi_s^\ast\rb^2
={}&\lb \Xi_t^\ast\rb^2
\mathcal{E}\lb-\frac{2}{\sigma}\int_t^\cdot
q_u\lb m_u^{AI}-r\rb d\widehat W_u^{1,AI}\rb_s\\
&\times\exp\lb-\frac{2}{\sigma^2}\int_t^s
q_u\lb1-q_u\rb\lb m_u^{AI}-r\rb^2du\rb, \labeleq{candidate_exponential_square}
\end{align*}
where $\mathcal{E}(\cdot)$ denotes the stochastic exponential. In \eqref{candidate_exponential_square}, the stochastic exponential is a
positive local martingale, hence a
supermartingale, while the second factor is at most one. Thus, for all
$\mathbb{F}^{AI}$-stopping times $\tau\in[t,T]$,
\begin{align*}
\sup_{\tau}\E^{AI}_{t,x,m,v}\lsb\lb \Xi_\tau^\ast\rb^2\rsb
\leq\lb \Xi_t^\ast\rb^2<\infty. \labeleq{candidate_class_D_bound}
\end{align*}
By \eqref{candidate_class_D_bound}, the family $(\Xi_\tau^\ast)_\tau$ is
uniformly integrable, and
$\pi^{\ast,AI}\in\mathcal{A}^{AI}(t,x,m,v)$.

For an arbitrary admissible trading strategy $\pi$, the stochastic integral obtained by applying It\^o's formula to the candidate along the controlled state is initially only a local martingale. Let $(\tau_n)_{n\geq1}$ localize this integral. The HJB inequality then gives
\begin{align*}
\E^{AI}_{t,x,m,v}\lsb
W\lb\tau_n,X_{\tau_n}^{\pi,AI},m_{\tau_n}^{AI},v_{\tau_n}^{AI}\rb\rsb
\leq W(t,x,m,v). \labeleq{localized_verification_inequality}
\end{align*}
Because $a,b^{AI}\geq0$,
\begin{align*}
\left|W\lb s,X_s^{\pi,AI},m_s^{AI},v_s^{AI}\rb\right|
\leq\frac{1}{\gamma}
\exp\lb-\gamma e^{r(T-s)}X_s^{\pi,AI}\rb. \labeleq{verification_domination}
\end{align*}
The class-$(D)$ condition \eqref{admissible_class_D}, together with \eqref{localized_verification_inequality}--\eqref{verification_domination}, permits passage to the limit $\tau_n\uparrow T$, yielding
\begin{align*}
\E^{AI}_{t,x,m,v}\lsb U\lb X_T^{\pi,AI}\rb\rsb\leq W(t,x,m,v).
\end{align*}
For $\pi=\pi^{\ast,AI}$, the HJB drift is zero and \eqref{candidate_class_D_bound} gives the required class-$(D)$ property, hence equality. Therefore $W=V^{AI}$, and the unique HJB maximizer \eqref{feedback_candidate} is the optimal feedback trading strategy.
\end{proof}

\subsection{Proof of Proposition \ref{propo_32}:} \label{proof_propo_32}

\begin{proof}
It is sufficient to prove the claim for $b^{AI}$. Write $v_s^\kappa$ for the variance path at capacity $\kappa$. If $0\leq\kappa_1<\kappa_2$, then $G_{\kappa_2}(y)>G_{\kappa_1}(y)$ for every $y>0$ because $f(y)>0$. Both paths start from $v$ and remain strictly positive by \eqref{entropy_regular}. The strict scalar ODE comparison theorem therefore
gives
\begin{align*}
v_s^{\kappa_2}<v_s^{\kappa_1},\qquad s\in(t,T]. \labeleq{variance_capacity_order}
\end{align*}

We next rewrite $b^{AI}$ in a form that preserves this ordering. The variance
ODE in \eqref{filter_ai_compact} implies
\begin{align*}
d\ln\lb\sigma^2+(T-s)v_s^\kappa\rb
=-\frac{v_s^\kappa+(T-s)G_\kappa(v_s^\kappa)}
{\sigma^2+(T-s)v_s^\kappa}ds.
\end{align*}
Consequently, by \eqref{candidate_a},
\begin{align*}
G_\kappa(v_s^\kappa)a(s,v_s^\kappa)ds
={}&-\frac{1}{2}d\ln\lb\sigma^2+(T-s)v_s^\kappa\rb
-\frac{1}{2}\frac{v_s^\kappa}
{\sigma^2+(T-s)v_s^\kappa}ds.
\end{align*}
Since $v_t^\kappa=v$ and the logarithm has argument $\sigma^2$ at $s=T$,
integration and \eqref{b_integral} yield
\begin{align*}
b^{AI}(t,v;\kappa)
=\frac{1}{2}\ln\lb\frac{\sigma^2+(T-t)v}{\sigma^2}\rb
-\frac{1}{2}\int_t^T
\frac{v_s^\kappa}{\sigma^2+(T-s)v_s^\kappa}ds. \labeleq{b_capacity_representation}
\end{align*}
For every $s\in[t,T]$,
\begin{align*}
\frac{\partial}{\partial y}\frac{y}{\sigma^2+(T-s)y}
=\frac{\sigma^2}{\lb\sigma^2+(T-s)y\rb^2}>0.
\end{align*}
Thus, \eqref{variance_capacity_order} and the strict monotonicity of
$y/(\sigma^2+(T-s)y)$ imply
\begin{align*}
\int_t^T\frac{v_s^{\kappa_2}}
{\sigma^2+(T-s)v_s^{\kappa_2}}ds
<\int_t^T\frac{v_s^{\kappa_1}}
{\sigma^2+(T-s)v_s^{\kappa_1}}ds.
\end{align*}
Together with \eqref{b_capacity_representation}, this gives
$b^{AI}(t,v;\kappa_2)>b^{AI}(t,v;\kappa_1)$. Finally,
\eqref{value_function} is strictly increasing in $b^{AI}$, which proves the
claim for $V^{AI}$.
\end{proof}

\subsection{Proof of Proposition \ref{propo_34}:} \label{propo_34_proof}
\begin{proof}
By \eqref{investor_filtrations}, conditional on $\mathcal{G}_t$, every
$\mathbb{F}^{NI}$-predictable square-integrable trading strategy is a deterministic
element of $L^2([t,T])$. Define
\begin{align*}
k_s:=e^{r(T-s)}\pi_s.
\end{align*}
By the conditional-independence assumption following \eqref{prior_belief} and
the wealth equation \eqref{wealth_naive_investor}, terminal wealth is Gaussian
with mean and variance
\begin{align*}
\E^{NI}_{t,x,m,v}\lsb X_T^\pi\rsb
&=e^{r(T-t)}x+(m-r)\int_t^Tk_sds,\\
\Var^{NI}_{t,x,m,v}\lb X_T^\pi\rb
&=v\lb\int_t^Tk_sds\rb^2+\sigma^2\int_t^Tk_s^2ds. \labeleq{naive_terminal_moments}
\end{align*}
By the CARA utility \eqref{cara_utility}, the Gaussian moment-generating
function reduces the problem to
minimizing over $k\in L^2([t,T])$ the control-dependent exponent
\begin{align*}
{}&-\gamma(m-r)\int_t^Tk_sds
+\frac{\gamma^2v}{2}\lb\int_t^Tk_sds\rb^2
+\frac{\gamma^2\sigma^2}{2}\int_t^Tk_s^2ds. \labeleq{naive_control_exponent}
\end{align*}
For a fixed value of $\int_t^Tk_sds$, the first two terms are fixed, while
Cauchy--Schwarz gives
\begin{align*}
\int_t^Tk_s^2ds\geq
\frac{1}{T-t}\lb\int_t^Tk_sds\rb^2,
\end{align*}
with equality if and only if $k$ is constant almost everywhere. Hence every
minimizer must satisfy $k_s^\ast=c$. Substitution reduces the problem to minimizing
the strictly convex scalar expression
\begin{align*}
(T-t)\lsb-\gamma(m-r)c
+\frac{\gamma^2}{2}\lb\sigma^2+(T-t)v\rb c^2\rsb.
\end{align*}
Its first-order condition yields
\begin{align*}
c=\frac{m-r}{\gamma\lb\sigma^2+(T-t)v\rb},
\end{align*}
which is the unique global minimizer. Therefore,
\begin{align*}
\pi_s^{\ast,NI}
=e^{-r(T-s)}\frac{m-r}{\gamma\sigma^2}
\frac{\sigma^2}{\sigma^2+(T-t)v}.
\end{align*}
This is \eqref{strategy_naive_investor}.

The candidate satisfies \eqref{admissible_square_integrability} and is therefore
admissible. More generally, \eqref{naive_terminal_moments} shows that terminal
wealth is Gaussian under every $L^2$ trading strategy, so its terminal exponential
moment is finite. Finally, substituting the optimizer into
\eqref{naive_control_exponent} gives
\begin{align*}
{}&-\gamma(m-r)\int_t^Tk_s^\ast ds
+\frac{\gamma^2v}{2}\lb\int_t^Tk_s^\ast ds\rb^2
+\frac{\gamma^2\sigma^2}{2}\int_t^T\lb k_s^\ast\rb^2ds\\
&=-\frac{(T-t)(m-r)^2}{2\lb\sigma^2+(T-t)v\rb}
=-a(t,v)(m-r)^2,
\end{align*}
which gives \eqref{value_function_naive_investor}.
\end{proof}

\subsection{Proof of Proposition \ref{propo_35}:} \label{propo_35_proof}
\begin{proof}
Write $v_s^\kappa:=v_s^{AI}$. Since $G_\kappa>0$, the variance path is
decreasing. For $\kappa>0$ and $s>t$, separation of the variance ODE in
\eqref{filter_ai_compact} gives
\begin{align*}
s-t
=\int_{v_s^\kappa}^v\frac{du}{G_\kappa(u)}
\leq\frac{1}{\kappa}\int_{v_s^\kappa}^v\frac{f(u)}{u^2}du.
\end{align*}
The inequality uses $G_\kappa(u)\geq\kappa u^2/f(u)$. Fix
$\varepsilon\in(0,v)$. If $v_s^\kappa\geq\varepsilon$, then
\begin{align*}
\kappa(s-t)\leq\int_\varepsilon^v\frac{f(u)}{u^2}du<\infty,
\end{align*}
which is impossible for all sufficiently large $\kappa$. Hence
\begin{align*}
v_s^\kappa\longrightarrow0,\qquad s\in(t,T]. \labeleq{variance_perfect_learning}
\end{align*}

In \eqref{b_capacity_representation}, the integrand is bounded by
$v/\sigma^2$ and converges to zero for almost every $s$ by
\eqref{variance_perfect_learning}. Dominated convergence therefore yields
\begin{align*}
b^{AI}(t,v;\kappa)\longrightarrow
b^{AI,\infty}(t,v)
=\frac{1}{2}\ln\lb1+\frac{(T-t)v}{\sigma^2}\rb. \labeleq{b_perfect_limit}
\end{align*}
The limit agrees with \eqref{b_ai_infty}.
Substitution into \eqref{value_function} proves
\eqref{value_function_ai_infty_explicit}.

For completeness, the standard one-state CARA problem with known drift has the trading strategy and value stated in \eqref{full_information_strategy} and
\eqref{value_function_full_info}:
\begin{align*}
\pi_s^{\ast,FI}
&=e^{-r(T-s)}\frac{\mu-r}{\gamma\sigma^2},
\qquad s\in[t,T],\\
V^{FI}(t,x;\mu)
&=-\frac{1}{\gamma}\exp\lb-\gamma e^{r(T-t)}x
-\frac{T-t}{2\sigma^2}(\mu-r)^2\rb.
\end{align*}
If $\mu\sim\N(m,v)$, Gaussian integration gives
\begin{align*}
\E_{m,v}\lsb
e^{-\frac{T-t}{2\sigma^2}(\mu-r)^2}\rsb
&=\frac{
\exp\lb-\frac{(T-t)(m-r)^2}
{2\lb\sigma^2+(T-t)v\rb}\rb}
{\sqrt{1+\frac{(T-t)v}{\sigma^2}}}\\
&=\exp\lb-a(t,v)(m-r)^2-b^{AI,\infty}(t,v)\rb.
\end{align*}
The last equality follows from \eqref{a_explicit} and
\eqref{b_perfect_limit}. Multiplying by the wealth-dependent factor in
$V^{FI}$ gives
\begin{align*}
\E_{m,v}\lsb V^{FI}(t,x;\mu)\rsb
&=-\frac{1}{\gamma}\exp\lb-\gamma e^{r(T-t)}x
-a(t,v)(m-r)^2-b^{AI,\infty}(t,v)\rb\\
&=V^{AI,\infty}(t,x,m,v).
\end{align*}

Finally, let
\begin{align*}
q_s^\kappa:=\frac{\sigma^2}{\sigma^2+(T-s)v_s^\kappa}.
\end{align*}
By \eqref{posterior_moments}, $m_s^{AI}=\E[\mu\mid\mathcal{F}_s^{AI}]$ and
$v_s^\kappa=\Var(\mu\mid\mathcal{F}_s^{AI})$ is deterministic. The tower
property gives
\begin{align*}
\E^{AI}_{t,x,m,v}\lsb\lb m_s^{AI}-\mu\rb^2\rsb
&=\E^{AI}_{t,x,m,v}\lsb\Var\lb\mu\mid\mathcal{F}_s^{AI}\rb\rsb
=v_s^\kappa. \labeleq{posterior_mse_bayesian}
\end{align*}
Thus, by \eqref{variance_perfect_learning}--\eqref{posterior_mse_bayesian}, for
every $s>t$, $m_s^{AI}\to\mu$ in $L^2$ and
$q_s^\kappa\to1$. Moreover,
\begin{align*}
{}&\E^{AI}_{t,x,m,v}\lsb
\lb q_s^\kappa(m_s^{AI}-r)-(\mu-r)\rb^2\rsb\\
&\leq2v_s^\kappa
+2\lb1-q_s^\kappa\rb^2\lsb(m-r)^2+v\rsb. \labeleq{strategy_error_bound}
\end{align*}
The right-hand side of \eqref{strategy_error_bound} converges to zero for every
$s>t$ and is bounded by
an integrable constant independent of $s$ and $\kappa$. Dominated convergence
therefore also gives convergence in $L^2(\PP\otimes ds)$:
\begin{align*}
\pi_s^{\ast,AI}
\longrightarrow e^{-r(T-s)}\frac{\mu-r}{\gamma\sigma^2}.
\end{align*}
This proves \eqref{strategy_ai_infty}. The value assigned at time $s=t$ is immaterial.
\end{proof}

\subsection{Proof of Theorem \ref{theorem_42}:}\label{proof_theorem_42}

\begin{proof}
Under $\widetilde{\PP}^{\mu}$ defined by \eqref{tilted_measure},
Girsanov's theorem gives
\begin{align*}
dm_s^{AI}={}&-\left[\frac{G_\kappa(v_s^{AI})}{v_s^{AI}}\left(m_s^{AI}-\mu\right)+\frac{v_s^{AI}q_s^{AI}}{\sigma^2}\left(m_s^{AI}-r\right)\right]ds\\
&+\frac{v_s^{AI}}{\sigma}d\widetilde W_s^1+v_s^{AI}\sqrt{\frac{\kappa}{f(v_s^{AI})}}dW_s^2. \labeleq{tilted_filter_dynamics}
\end{align*}
Using \eqref{q_ai} and \eqref{filter_ai_compact}, direct differentiation gives
\begin{align*}
\frac{d}{ds}q_s^{AI}=q_s^{AI}v_s^{AI}\left[\left(1-q_s^{AI}\right)\left(\frac{1}{\sigma^2}+\frac{\kappa}{f(v_s^{AI})}\right)+\frac{q_s^{AI}}{\sigma^2}\right].
\end{align*}
Hence, using the definition of $\Delta_s^{AI}$ in \eqref{delta_ai},
\begin{align*}
d\Delta_s^{AI}=-\lambda_s^\kappa\Delta_s^{AI}ds+\sqrt{q_s^{AI}v_s^{AI}\lambda_s^\kappa} d\widetilde W_s, \qquad \Delta_t^{AI}=\Delta^{NI}, \labeleq{tilted_gap_dynamics}
\end{align*}
for a $\widetilde{\PP}^{\mu}$-Brownian motion $\widetilde W$. Set
\begin{align*}
A(s;\kappa):=a(s,0)-B(s;\kappa).
\end{align*}
Since $d/ds a(s,0)=-1/(2\sigma^2)$, the ODEs in the theorem are
equivalent to
\begin{align*}
\frac{d}{ds} A(s;\kappa)&=2\lambda_s^\kappa A(s;\kappa)-2q_s^{AI}v_s^{AI}\lambda_s^\kappa A(s;\kappa)^2-\frac{1}{2\sigma^2}, \qquad A(T;\kappa)=0,\\
\frac{d}{ds} C(s;\kappa)&=-q_s^{AI}v_s^{AI}\lambda_s^\kappa A(s;\kappa), \qquad C(T;\kappa)=0. \labeleq{AC_system}
\end{align*}
Moreover,
\begin{align*}
q_s^{AI}v_s^{AI}a(s,0)=\frac{q_s^{AI}v_s^{AI}(T-s)}{2\sigma^2}=\frac{1-q_s^{AI}}{2}<\frac{1}{2}.
\end{align*}
Backward scalar comparison in the first equation of \eqref{AC_system}, using $0$ and $a(s,0)$ as comparison functions, gives
\begin{align*}
0<A(s;\kappa)<a(s,0), \qquad s\in[t,T).
\end{align*}
Consequently,
\begin{align*}
0<B(s;\kappa)<a(s,0), \qquad C(s;\kappa)=\int_s^Tq_u^{AI}v_u^{AI}\lambda_u^\kappa A(u;\kappa)du>0. \labeleq{BC_positive}
\end{align*}
Using the preceding expression for $d q_s^{AI}/ds$ and $A(s;\kappa)<a(s,0)$,
\begin{align*}
q_s^{AI}v_s^{AI}\lambda_s^\kappa A(s;\kappa)<\frac{1-q_s^{AI}}{2}\lambda_s^\kappa<\frac{1}{2}\frac{d}{ds}q_s^{AI}.
\end{align*}
Therefore,
\begin{align*}
0<C(t;\kappa)<\frac{1}{2}\int_t^T\frac{d}{ds}q_s^{AI}ds=\frac{1-q_t^{NI}}{2}<\frac{1}{2}.
\end{align*}
Now define
\begin{align*}
F(s,d)=\exp\left(A(s;\kappa)d^2+C(s;\kappa)\right).
\end{align*}
By \eqref{AC_system}, $F$ solves
\begin{align*}
F_s-\lambda_s^\kappa dF_d+\frac{1}{2}q_s^{AI}v_s^{AI}\lambda_s^\kappa F_{dd}+\frac{d^2}{2\sigma^2}F=0, \qquad F(T,d)=1.
\end{align*}
Applying the Feynman--Kac formula to
\eqref{tilted_gap_dynamics} yields
\begin{align*}
\widetilde{\E}_{t,x,m,v}^{\mu}\left[\exp\left(\frac{1}{2\sigma^2}\int_t^T\left(\Delta_s^{AI}\right)^2ds\right)\right]=\exp\left(\left(a(t,0)-B(t;\kappa)\right)\left(\Delta^{NI}\right)^2+C(t;\kappa)\right). \labeleq{gap_moment_decomposition}
\end{align*}
The required exponential integrability follows from the argument in the
proof of Lemma \ref{lemma_a3_aux}. Combining
\eqref{gap_moment_decomposition} with \eqref{j_ai_presentation} and
\begin{align*}
J^{NI}(t,x,m,v;\mu)
=V^{FI}(t,x;\mu)\exp\lb a(t,0)\lb\Delta^{NI}\rb^2\rb,
\end{align*}
and applying inverse utility, gives
\begin{align*}
P^{AI}_{SC}(\mu,\kappa)=\frac{e^{-r(T-t)}}{\gamma}\left[B(t;\kappa)\left(\Delta^{NI}\right)^2-C(t;\kappa)\right].
\end{align*}
Together with \eqref{BC_positive}, this proves the result.
\end{proof}

\subsection{Proof of Proposition \ref{propo_43}:}\label{proof_propo_43}

\begin{proof}
By \eqref{prior_belief} and \eqref{trading_strategy_gaps},
\begin{align*}
\Delta^{NI}=q_t^{NI}(m-r)-(\mu-r)\sim\N\lb-\lb1-q_t^{NI}\rb(m-r),v\rb.
\end{align*}
Hence, with $Z_t$ as in the proposition,
\begin{align*}
\lb\Delta^{NI}\rb^2\overset{d}{=}Z_t^2.
\end{align*}
Substitution into \eqref{correction_fluctuation_decomposition} yields
\begin{align*}
P_{SC}^{AI}(\mu,\kappa)\overset{d}{=}\frac{e^{-r(T-t)}}{\gamma}\left[B(t;\kappa)Z_t^2-C(t;\kappa)\right],
\end{align*}
as was to be shown.
\end{proof}

\subsection{Proof of Proposition \ref{propo_46}:} \label{proof_propo_46}

\begin{proof}
For each $\tau>0$, take terminal date $t+\tau$ and define
\begin{align*}
K_\kappa(\tau)
:=2\sigma^2\ln\widetilde{\E}^{\mu}_{t,x,m,v}
\lsb\exp\lb\frac{1}{2\sigma^2}\int_t^{t+\tau}
\lb\Delta_s^{AI}\rb^2ds\rb\rsb. \labeleq{short_horizon_K}
\end{align*}
Here the measure and processes on the right correspond to $\kappa$. By
\eqref{short_horizon_K}, \eqref{scenario_conditional_certainty_equivalents}, and
\eqref{j_ai_presentation},
\begin{align*}
CE^{AI}_{SC}(\mu;\tau)
=x+\frac{e^{-r\tau}}{2\gamma\sigma^2}
\lsb\tau(\mu-r)^2-K_\kappa(\tau)\rsb. \labeleq{short_horizon_CE}
\end{align*}
Since $\kappa=0$ gives the price observer, \eqref{short_horizon_CE} yields
\begin{align*}
P^{AI}_{SC}(\mu,\kappa;\tau)-P^{PO}_{SC}(\mu;\tau)
=-\frac{e^{-r\tau}}{2\gamma\sigma^2}
\lb K_\kappa(\tau)-K_0(\tau)\rb. \labeleq{ce_short_horizon}
\end{align*}
After setting $s=t+\tau y$, the variance ODE in \eqref{filter_ai_compact} and
the affine Gaussian dynamics \eqref{tilted_filter_dynamics} imply, uniformly
for $y\in[0,1]$,
\begin{align*}
\frac{q_{t+\tau y}^{AI}-1}{\tau}
&\longrightarrow-\frac{(1-y)v}{\sigma^2},\\
\frac{\widetilde{\E}^{\mu}_{t,x,m,v}
\lsb m_{t+\tau y}^{AI}-m\rsb}{\tau}
&\longrightarrow-y\lsb\frac{G_\kappa(v)}{v}(m-\mu)
+\frac{v}{\sigma^2}(m-r)\rsb,\\
\frac{\widetilde{\E}^{\mu}_{t,x,m,v}
\lsb\lb m_{t+\tau y}^{AI}-m\rb^2\rsb}{\tau}
&\longrightarrow yG_\kappa(v). \labeleq{short_horizon_filter_limits}
\end{align*}
Expanding the square in \eqref{delta_ai} and applying dominated
convergence to \eqref{short_horizon_filter_limits} give
\begin{align*}
&\lim_{\tau\downarrow0}\frac{1}{\tau^2}\Bigg\{
\widetilde{\E}^{\mu}_{t,x,m,v}
\lsb\int_t^{t+\tau}\lb\Delta_s^{AI}\rb^2ds\rsb
-\tau(m-\mu)^2\Bigg\}\\
&=-\frac{2v}{\sigma^2}(m-\mu)(m-r)
-\frac{G_\kappa(v)}{v}(m-\mu)^2+\frac{G_\kappa(v)}{2}. \labeleq{integrated_gap_limit}
\end{align*}
Gaussian fourth-moment bounds and local exponential integrability also give
\begin{align*}
\lim_{\tau\downarrow0}\frac{1}{\tau^2}
\Var_{\widetilde{\PP}^{\mu}}
\lb\int_t^{t+\tau}\lb\Delta_s^{AI}\rb^2ds\rb&=0,\\
\lim_{\tau\downarrow0}\frac{1}{\tau^2}\Bigg\{
K_\kappa(\tau)-\widetilde{\E}^{\mu}_{t,x,m,v}
\lsb\int_t^{t+\tau}\lb\Delta_s^{AI}\rb^2ds\rsb\Bigg\}&=0. \labeleq{short_horizon_cumulant_control}
\end{align*}
Subtracting \eqref{integrated_gap_limit} at $\kappa=0$ and using
\eqref{variance_decay_rate} and \eqref{short_horizon_cumulant_control} yield
\begin{align*}
\lim_{\tau\downarrow0}\frac{K_\kappa(\tau)-K_0(\tau)}{\tau^2}
=\frac{\kappa v}{2f(v)}\lb v-2(m-\mu)^2\rb. \labeleq{short_horizon_K_difference}
\end{align*}
Together, \eqref{ce_short_horizon} and \eqref{short_horizon_K_difference} give
\begin{align*}
\lim_{\tau\downarrow0}\frac{P^{AI}_{SC}(\mu,\kappa;\tau)
-P^{PO}_{SC}(\mu;\tau)}{\tau^2}
=-\frac{\kappa v}{4\gamma\sigma^2f(v)}
\lb v-2(m-\mu)^2\rb,
\end{align*}
which proves \eqref{short_horizon_ai}.

For the price observer, set
$q:=\frac{\sigma^2}{\sigma^2+\tau v}$ and write $C(q):=C(t;0)$. Then \eqref{correction_fluctuation_decomposition} and \eqref{b_c_po} give
\begin{align*}
P^{PO}_{SC}(\mu;\tau)
=\frac{e^{-r\tau}}{\gamma}
\lsb\frac{(1-q)a(t,0)}{2-q}\lb\Delta^{NI}\rb^2-C(q)\rsb. \labeleq{price_observer_short_horizon_exact}
\end{align*}
The required elementary limits are
\begin{align*}
\lim_{\tau\downarrow0}\frac{1-q}{\tau}
&=\frac{v}{\sigma^2},&
\lim_{\tau\downarrow0}\Delta^{NI}&=m-\mu,&
\lim_{q\uparrow1}\frac{C(q)}{(1-q)^2}&=\frac{1}{4}. \labeleq{price_observer_short_horizon_limits}
\end{align*}
Combining \eqref{price_observer_short_horizon_exact}--\eqref{price_observer_short_horizon_limits}
with $a(t,0)=\frac{\tau}{2\sigma^2}$ gives
\begin{align*}
\lim_{\tau\downarrow0}\frac{P^{PO}_{SC}(\mu;\tau)}{\tau^2}
=-\frac{v}{4\gamma\sigma^4}\lb v-2(m-\mu)^2\rb,
\end{align*}
which proves \eqref{short_horizon_po}. If
$(m-\mu)^2<\frac{v}{2}$, both limits are strictly negative, so the asserted
ordering holds for all sufficiently small $\tau>0$.
\end{proof}

\subsection{{Proof of Proposition \ref{propo_high_capacity_rates}:}}\label{proof_propo_high_capacity_rates}

\begin{proof}
For positive quantities depending on $\kappa$, write $x_\kappa\asymp y_\kappa$ when $x_\kappa/y_\kappa$ is bounded above and bounded away from zero for all sufficiently large $\kappa$. Comparisons involving $u$ are uniform over the stated range of $u$. All comparison constants may depend on the fixed state, benchmark, and model parameters but not on $\kappa$ or $\mu$.

Set $\tau:=T-t$, $u:=s-t\in[0,\tau]$, and suppress fixed initial arguments. The Shannon and Tsallis specifications can be written jointly as
\begin{align*}
f^\rho(y)=c_\rho y^{1-\beta_\rho},\qquad
\beta_\rho:=\frac{\rho-1}{2},\qquad
c_\rho:=\frac{1}{2}\rho^{-1/2}(2\pi)^{(1-\rho)/2},
\end{align*}
where $\rho=1$ gives the Shannon specification. Write $v_\kappa(u):=v_{t+u}^{AI}$ and $q_\kappa(u):=q_{t+u}^{AI}$. The variance equation becomes
\begin{align*}
\frac{d}{du}v_\kappa(u)
=-\frac{v_\kappa(u)^2}{\sigma^2}
-\frac{\kappa}{c_\rho}v_\kappa(u)^{1+\beta_\rho},
\qquad v_\kappa(0)=v.
\labeleq{power_variance_ode}
\end{align*}
If $\beta_\rho=0$, scalar comparison gives
\begin{align*}
v_\kappa(u)\asymp\exp\lb-\frac{\kappa u}{c_\rho}\rb,
\qquad u\in[0,\tau].
\end{align*}
If $\beta_\rho>0$, differentiating $v_\kappa^{-\beta_\rho}$ gives
\begin{align*}
\frac{d}{du}v_\kappa(u)^{-\beta_\rho}
=\frac{\beta_\rho\kappa}{c_\rho}
+\frac{\beta_\rho}{\sigma^2}v_\kappa(u)^{1-\beta_\rho}.
\end{align*}
The second term is bounded when $\beta_\rho\leq1$ and sublinear in $v_\kappa^{-\beta_\rho}$ when $\beta_\rho>1$. Scalar comparison on $[0,\tau]$ therefore yields
\begin{align*}
v_\kappa(u)\asymp(1+\kappa u)^{-\frac{1}{\beta_\rho}},
\qquad u\in[0,\tau].
\labeleq{power_variance_comparison}
\end{align*}
All comparison constants may depend on $(\tau,v,\sigma,\rho)$ but not on $\kappa$ or $u$. Hence, for every $p>0$,
\begin{align*}
\int_0^\tau v_\kappa(u)^pdu
\asymp
\begin{cases}
\kappa^{-1},&\beta_\rho=0\text{ or }0<\beta_\rho<p,\\
\frac{\ln(1+\kappa)}{\kappa},&\beta_\rho=p,\\
\kappa^{-\frac{p}{\beta_\rho}},&\beta_\rho>p.
\end{cases}
\labeleq{variance_integral_rates}
\end{align*}
As in the proof of Theorem \ref{theorem_42}, write $A(s;\kappa):=a(s,0)-B(s;\kappa)$, and set
\begin{align*}
w_\kappa(u):=q_\kappa(u)v_\kappa(u),\qquad
\lambda_\kappa(u):=\lambda_{t+u}^\kappa.
\end{align*}
Since $q_t^{NI}\leq q_\kappa(u)\leq1$, $w_\kappa(u)\asymp v_\kappa(u)$ uniformly. The first equation in \eqref{AC_system} and variation of constants give
\begin{align*}
A(t;\kappa)=\frac{1}{2\sigma^2}\int_0^\tau
\exp\lb-2\int_0^u\lambda_\kappa(y)\lsb1-w_\kappa(y)A(t+y;\kappa)\rsb dy\rb du.
\labeleq{A_variation_constants}
\end{align*}
By \eqref{correction_fluctuation_decomposition} and \eqref{p_sc_limit},
\begin{align*}
P_{SC}^{AI,\infty}(\mu)-P_{SC}^{AI}(\mu,\kappa)
=
\frac{e^{-r(T-t)}}{\gamma}
\lsb
A(t;\kappa)\lb\Delta^{NI}\rb^2+C(t;\kappa)
\rsb.
\end{align*}
By Lemma \ref{lemma_a1}, $C(t;\kappa)\to0$ as $\kappa\to\infty$. Moreover, \eqref{BC_positive} implies
\begin{align*}
0\leq\int_0^u w_\kappa(y)\lambda_\kappa(y)A(t+y;\kappa)dy\leq C(t;\kappa)\to0,
\qquad u\in[0,\tau].
\end{align*}
Hence, the nonlinear contribution to the exponent in \eqref{A_variation_constants} converges uniformly to zero. Direct differentiation of
\begin{align*}
\frac{1}{w_\kappa(u)}
=
\frac{1}{v_\kappa(u)}
+
\frac{\tau-u}{\sigma^2}
\end{align*}
and \eqref{power_variance_ode} gives
\begin{align*}
\frac{d}{du}w_\kappa(u)
&=-\frac{\kappa w_\kappa(u)^2}{f^\rho(v_\kappa(u))},\\
\lambda_\kappa(u)
&=\frac{w_\kappa(u)}{\sigma^2}
-\frac{d}{du}\ln w_\kappa(u).
\end{align*}
Consequently,
\begin{align*}
\exp\lb-2\int_0^u\lambda_\kappa(y)dy\rb
=
\lb\frac{w_\kappa(u)}{w_\kappa(0)}\rb^2
\exp\lb-\frac{2}{\sigma^2}\int_0^u w_\kappa(y)dy\rb.
\end{align*}
The last exponential is bounded above and bounded away from zero uniformly. Since $w_\kappa(0)=q_t^{NI}v$ is independent of $\kappa$, it follows from \eqref{A_variation_constants} that
\begin{align*}
A(t;\kappa)\asymp\int_0^\tau v_\kappa(u)^2du.
\labeleq{A_variance_integral}
\end{align*}

The identity established in the proof of Lemma \ref{lemma_a1} gives
\begin{align*}
\frac{1}{2\sigma^2}\int_0^\tau w_\kappa(u)du
=
C(t;\kappa)
-\frac{1}{2}\ln\lb1-2w_\kappa(0)A(t;\kappa)\rb.
\labeleq{C_variance_identity}
\end{align*}
The left-hand side has the order of $\int_0^\tau v_\kappa(u)du$, while the logarithmic term has the order of $A(t;\kappa)$. If $\beta_\rho\geq1$, \eqref{variance_integral_rates} and \eqref{A_variance_integral} show that $A(t;\kappa)$ has strictly smaller order than $\int_0^\tau v_\kappa(u)du$. Therefore,
\begin{align*}
C(t;\kappa)\asymp\int_0^\tau v_\kappa(u)du.
\end{align*}

If $0\leq\beta_\rho<1$, restrict the integral representation in \eqref{BC_positive} to
\begin{align*}
u\in\lsb\frac{c_1}{\kappa},\frac{c_2}{\kappa}\rsb
\end{align*}
for fixed $0<c_1<c_2$. On this interval, scalar comparison gives $w_\kappa(u)$ bounded away from zero, $\lambda_\kappa(u)\asymp\kappa$, and $A(t+u;\kappa)\asymp\kappa^{-1}$. This gives the matching lower bound, while \eqref{C_variance_identity} gives the upper bound. Hence
\begin{align*}
C(t;\kappa)\asymp\int_0^\tau v_\kappa(u)du
\labeleq{C_variance_integral}
\end{align*}
for every $\beta_\rho\geq0$.

Applying \eqref{variance_integral_rates} with $p=2$ and $p=1$, and substituting $\beta_\rho=(\rho-1)/2$, gives
\begin{align*}
A(t;\kappa)&\asymp
\begin{cases}
\kappa^{-1},&1\leq\rho<5,\\
\frac{\ln(1+\kappa)}{\kappa},&\rho=5,\\
\kappa^{-4/(\rho-1)},&\rho>5,
\end{cases}\\
C(t;\kappa)&\asymp r_\rho(\kappa).
\labeleq{coefficient_high_capacity_rates}
\end{align*}
In particular, $A(t;\kappa)$ is bounded above by a positive multiple of $r_\rho(\kappa)$ for all sufficiently large $\kappa$.

Finally, as $\mu$ ranges over $\mathbb{R}$, $z:=\lb\Delta^{NI}\rb^2$ ranges over $[0,\infty)$. By \eqref{correction_fluctuation_decomposition} and \eqref{p_sc_limit},
\begin{align*}
\sup_{\mu\in\mathbb{R}}
\frac{
P_{SC}^{AI,\infty}(\mu)-P_{SC}^{AI}(\mu,\kappa)
}{
P_{SC}^{AI,\infty}(\mu)-P_{SC}^{AI}(\mu,\bar\kappa)
}
&=
\sup_{z\geq0}
\frac{A(t;\kappa)z+C(t;\kappa)}
{A(t;\bar\kappa)z+C(t;\bar\kappa)}
=
\max\left\{
\frac{A(t;\kappa)}{A(t;\bar\kappa)},
\frac{C(t;\kappa)}{C(t;\bar\kappa)}
\right\}.
\end{align*}
Because $t<T$ and $\bar\kappa$ is finite, Theorem \ref{theorem_42} gives
\begin{align*}
A(t;\bar\kappa)>0,\qquad C(t;\bar\kappa)>0.
\end{align*}
Equation \eqref{coefficient_high_capacity_rates} implies that $C(t;\kappa)$ is bounded above and below by positive multiples of $r_\rho(\kappa)$, while $A(t;\kappa)$ is bounded above by a positive multiple of $r_\rho(\kappa)$. Dividing by the fixed positive benchmark coefficients and taking the maximum yields constants $0<K_1\leq K_2<\infty$ and $\kappa_0>\bar\kappa$ for which \eqref{voi_high_capacity_rates} holds.

Since $r_\rho(\kappa)\to0$, the upper bound in \eqref{voi_high_capacity_rates} is smaller than one for all sufficiently large $\kappa$. Hence,
\begin{align*}
P_{SC}^{AI,\infty}(\mu)-P_{SC}^{AI}(\mu,\kappa)
<
P_{SC}^{AI,\infty}(\mu)-P_{SC}^{AI}(\mu,\bar\kappa)
\end{align*}
uniformly in $\mu$, which proves \eqref{eventual_capacity_dominance}.
\end{proof}

\end{document}